\documentclass[letterpaper,11pt]{article}

\usepackage{fullpage}

\usepackage{amsmath,amssymb,amscd,amsthm,times,enumerate}
\usepackage{color}
\usepackage[usenames,dvipsnames]{xcolor}
\usepackage[english]{babel}
\usepackage[T1]{fontenc}
\usepackage[latin1]{inputenc}
\usepackage[obeyspaces]{url} \urlstyle{sf} %
\usepackage[babel=true]{csquotes}
\usepackage[font={footnotesize,sf}]{caption}
\usepackage{tikz-cd}
\usetikzlibrary{decorations.pathmorphing} 
\usepackage[linktoc=page,colorlinks=true,urlcolor=Mulberry,citecolor=RoyalBlue,linkcolor=Maroon]{hyperref}
\usepackage{soulutf8,wasysym}

\newcommand{\Remark}[1]     {{\small {\sf $\clubsuit \clubsuit
      \clubsuit$ #1 $\clubsuit \clubsuit \clubsuit$}}}

\newcommand{\denselist}{\itemsep 0pt\parsep=1pt\partopsep 0pt}

\newcommand{\R}{\mathbb{R}}
\newcommand{\N}{\mathbb{N}}

\newcommand{\Sspace}{\mathbb{S}}
\newcommand{\Tspace}{\mathbb{T}}

\newcommand{\Bset}{\mathcal{B}}
\newcommand{\Cset}{\mathcal{C}}
\newcommand{\Dset}{\mathcal{D}}

\newcommand{\RCech}[2]{\operatorname{Cech}_{#1}(#2)}
\newcommand{\Cech}[1]{\operatorname{Cech}(#1)}
\newcommand{\Rips}[1]{\operatorname{Rips}(#1)}

\newcommand{\card}{\operatorname{card}}

\newcommand{\Ignore}[1]{ }

\newcommand{\Nerve}[1]{\operatorname{Nrv} #1}

\newcommand{\Star}[2]{\operatorname{St}_{#2}(#1)}

\newcommand{\Offset}[2]{#1^{\oplus #2}}
\newcommand{\MA}[1]{\mathcal{M}_{#1}}

\newcommand{\Split}{\operatorname{Split}}

\newcommand{\Reach}[1]{\operatorname{Reach} \left(#1\right)}
\newcommand{\Radius}[1]{\operatorname{Radius} \left(#1\right)}

\newcommand{\Ball}[2]{\operatorname{B} (#1, #2)}
\newcommand{\OpenBall}[2]{\operatorname{B}^\circ(#1, #2)}

\newcommand{\CapBalls}[2]{\mathcal{B}(#1,#2)}

\newcommand {\scalprod}[2] {{\langle #1 , #2 \rangle}}

\newcommand{\Vertexset}[1]    {\operatorname{Vert}(#1)}

\newcommand{\Hull}[2]{\operatorname{Hull}_{#2}(#1)}
\newcommand{\Conv}[1]{\operatorname{Conv}(#1)}
\newcommand{\Clenchers}[2]{\operatorname{Clenchers}_{#2}(#1)}

\newcommand{\capH}{$\bigcap_{p \in \sigma_0} H(p)$}
\newcommand{\Hfig}{$\Hull{\{x_1,x_2\}}{\alpha'}$}
\newcommand{\nfig}{$\frac{1}{n}$}
\newcommand{\Convfig}{$A \cap \left[\Conv{C_v}\right]^{\oplus \eta_0\rho}$}



\newtheorem{theorem}{Theorem}
\newtheorem{lemma}{Lemma}

\newtheorem{definition}{Definition}


\newcommand{\bb}[2]
           {\mbox{
               \begin{minipage}[l]{20mm}
                 \scriptsize #1\\#2
               \end{minipage}
           }}

\pagestyle{plain}
\pagenumbering{arabic}
\setcounter{tocdepth}{1}

\title{Geometry driven collapses for converting a \v
  Cech complex into a triangulation of a nicely triangulable shape\thanks{
Research partially supported by the French \enquote{Agence nationale pour la
Recherche} under grant ANR-13-BS01-0008 TopData.}}

\author{
Dominique Attali\footnote{Gipsa-lab -- CNRS UMR 5216, Grenoble,
  France.
\texttt{Dominique.Attali@gipsa-lab.grenoble-inp.fr}}
\and
André Lieutier\footnote{Dassault systèmes, Aix-en-Provence, France.
\texttt{andre.lieutier@3ds.com}}
}

\begin{document} 
\maketitle

\begin{abstract}
Given a set of points that sample a shape, the Rips complex of the
points is often used to provide an approximation of the shape
easily-computed. It has been proved that the Rips complex captures the
homotopy type of the shape assuming the vertices of the complex meet
some mild sampling conditions. Unfortunately, the Rips complex is
generally high-dimensional. To remedy this problem, it is tempting to
simplify it through a sequence of collapses. Ideally, we would like to
end up with a triangulation of the shape. Experiments suggest that, as
we simplify the complex by iteratively collapsing faces, it should
indeed be possible to avoid entering a dead end such as the famous
Bing's house with two rooms. This paper provides a theoretical
justification for this empirical observation.

We demonstrate that the Rips complex of a point-cloud (for a
well-chosen scale parameter) can always be turned into a simplicial
complex homeomorphic to the shape by a sequence of collapses, assuming
the shape is nicely triangulable and well-sampled (two concepts we
will explain in the paper). To establish our result, we rely on a
recent work which gives conditions under which the Rips complex can be
converted into a \v Cech complex by a sequence of collapses. We
proceed in two phases. Starting from the \v Cech complex, we first
produce a sequence of collapses that arrives to the \v Cech complex,
restricted by the shape. We then apply a sequence of collapses that
transforms the result into the nerve of some covering of the
shape. Along the way, we establish results which are of
  independent interest. First, we show that the reach of a shape can
  not decrease when intersected with a (possibly infinite) collection
  of balls, assuming the balls are small enough. Under the same
  hypotheses, we show that the restriction of a shape with respect to
  an intersection of balls is either empty or contractible. We also
  provide conditions under which the nerve of a family of compact sets
  undergoes collapses as the compact sets evolve over time. We believe
  conditions are general enough to be useful in other contexts as
  well.
\end{abstract}


\clearpage
\section{Introduction}


This paper studies the problem of converting a \v Cech complex whose
vertices sample a shape into a triangulation of that shape using
collapses. Even if the present paper focuses exclusively on the \v
Cech Complex, it has also implications on the simplification of Rips
complexes by sequences of collapses, due to a recent result
in~\cite{Rips-cgta-AttaliLieutierSalinas}.

Imagine we are given a set of points that sample a shape and we want
to build an approximation of the shape from the sample points. An
often used approach consists in outputting the Vietoris-Rips complex
of the points (see for instance
\cite{carlsson2006algebraic,silva07:_cover,tahbaz2010distributed}). Formally,
the {\em Vietoris-Rips complex} of a set of points $P$ at scale
$\alpha$ is the abstract simplicial complex whose simplices are
subsets of points in $P$ with diameter at most $2\alpha$. For brevity,
we shall refer to it as the Rips complex. The Rips complex is an
example of a flag complex --- the maximal simplicial complex with a
given 1-skeleton.  As such, it enjoys the property to be completely
determined by its 1-skeleton which therefore offers a compact form of
storage easy to compute. Moreover, the Rips complex is able to
  reproduce the homotopy type of the shape in certain situations
  \cite{hausmann1995vietoris,latschev2001vietoris,chambers2010vietoris,socg10-convex}.
  Precisely, Hausmann proved in \cite{hausmann1995vietoris} that if
  the shape $A$ is a compact Riemannian manifold, then the Rips
  complex with vertex set $A$ is homotopy equivalent to $A$ when the
  scale used to build the Rips complex is small
  enough. In~\cite{latschev2001vietoris}, Latschev extended this
  result to Rips complexes with vertex set a metric space (possibly
  finite) whose Gromov-Hausdorff distance to the shape is
  small. In~\cite{Rips-cgta-AttaliLieutierSalinas}, a variant has been
  established in a different framework: shapes are assumed to be
  subsets of $\R^d$ with a positive $\mu$-reach and Rips complexes are
  built on finite samples of the shapes using the Euclidean
  distance. The latter result makes the Rips complex an appealing
  object for reconstructing shapes living in high dimensional spaces,
  as for instance in machine learning.

Unfortunately, the dimension of the Rips complex can be very large,
compare to the dimension of the underlying shape it is suppose to
approximate. This suggests a two-phase algorithm for shape
reconstruction. The first phase builds the Rips complex of the data
points, thus producing an object with the right homotopy type. The
second phase simplifies the Rips complex through a sequence of {\em
  collapses}.  Ideally, after simplifying the Rips complex by
repeatedly applying collapses, we would like to end up with a
simplicial complex homeomorphic to the underlying shape.

Yet it is not at all obvious that the Rips complex whose vertices
sample a shape contains a subcomplex homeomorphic to that shape. Even
if such a subcomplex exists, is there a sequence of collapses that
leads to it? Certainly if we want to say anything at all, the geometry
of the complex will have to play a key role. As evidence for this,
consider a simplicial complex whose vertex set is a noisy point-cloud
that samples a 0-dimensional manifold and suppose the complex is
composed of a union of Bing's houses with two rooms, one for each
connected component in the manifold. Each Bing's house is a
2-dimensional simplicial complex which is contractible but not
collapsible. Thus, the complex carries the homotopy type of the
0-dimensional manifold but is not collapsible. Fortunately, it seems
that such bad things do not happen in practice, when we start with the
Rips complex of a set of points that samples \enquote{sufficiently
  well} a \enquote{nice enough} space in $\R^d$. The primary aim of
the present work is to understand why. For this, we will focus on the
{\em \v Cech complex}, a closely related construction. Formally, the
\v Cech complex of a point set $P$ at scale $\alpha$ consists of all
simplices spanned by points in $P$ that fit in a ball of radius
$\alpha$. In \cite{Rips-cgta-AttaliLieutierSalinas}, it was proved
that the Rips complex can be reduced to the \v Cech complex by a
sequence of collapses, assuming some sampling conditions are
met. This result shows that it suffices to study \v Cech
  complexes.

In this paper, we give some mild conditions under which there is a
sequence of collapses that converts the \v Cech complex (and therefore
also the Rips complex) into a simplicial complex homeomorphic to the
shape. Our result assumes the shape to be a subset of the
$d$-dimensional Euclidean space with the property to be {\em nicely
  triangulable}, a concept we will explain later in the paper.

Perhaps unfortunately, our proof that a sequence of collapses exists
is not very constructive: it starts by sweeping space with offsets of
the shape --- which are unknown --- and builds a sequence of complexes
which have no reason to remain close to flag complexes and therefore
cannot benefit from the data structure developed in
\cite{ijcga11-blockers}. Nonetheless, even if results presented here
do not give yet any practical algorithm, we believe that they provide
a better understanding as to why the \v Cech complex (and therefore
the Rips complex) can be simplified by collapses and how this ability
is connected to the underlying metric structure of the space. In the
same spirit, we should mention \cite{adiprasito2011metric}, in which
the authors prove that every complex that is CAT(0) with a metric for
which all vertex stars are convex, is collapsible.

We now list the principal results of the paper, materialized as brown
arrows in Figure \ref{fig:diagram-results}. We also mention some
auxiliary results, which are interesting in their own rights. In
Section~\ref{section:restricted-reach}, we study how the reach of a
shape is modified when intersected with a (possibly infinite)
collection of balls and establish the contractibility of the
intersection, assuming the balls are small enough. In Section
\ref{section:restricted-cech}, we introduce the \v {\em Cech complex
  restricted by the shape $A$} and deduce conditions under which it is
homotopy equivalent to $A$ (Theorem
\ref{theorem:RestrictedCechCaptureHomotopyType}). In Section
\ref{section:restricting-the-cech}, we provide general conditions
under which the nerve of a family of compact sets undergoes collapses
as the compact sets evolve over time. Applying this technical result
to our context, we obtain conditions under which there is a sequence
of collapses that goes from the \v Cech complex to the restricted \v
Cech complex (Theorem
\ref{theorem:CechCollapsesOnRestrictedCech}). Combined with Theorem
\ref{theorem:RestrictedCechCaptureHomotopyType}, this gives an
alternative proof to a result
\cite{niyogi08:_findin_homol_of_subman_with} recalled in Section
\ref{section:background} (Lemma~\ref{lemma:CechReconstruit}).  In
Section \ref{section:collapsing-restricted-cech}, we define {\em
  $\alpha$-robust coverings} and give conditions under which the
restricted \v Cech complex can be transformed into the nerve of an
$\alpha$-robust covering (Theorem
\ref{theorem:collapsing-towards-nice-triangulations}). In Section
\ref{section:nice-triangulations}, we define and study {\em nicely
  triangulable spaces}. Such spaces enjoy the property of having
triangulations that can be expressed as the nerve of $\alpha$-robust
coverings for a large range of $\alpha$. Finally, we provide examples
of such spaces. Our list includes affine subspaces of the
  $d$-dimensional Euclidean space. It also contains the 2-sphere, the
  flat torus and all surfaces $C^{1,1}$ diffeomorphic to these
  two. Although our list is quite short, it is conceivable that many
  more spaces could be added. Actually, we conjecture that all compact
  smooth manifolds embedded in $\R^d$ are nicely triangulable and
  leave open this conjecture for future research. Section
\ref{section:conclusion} concludes the paper.

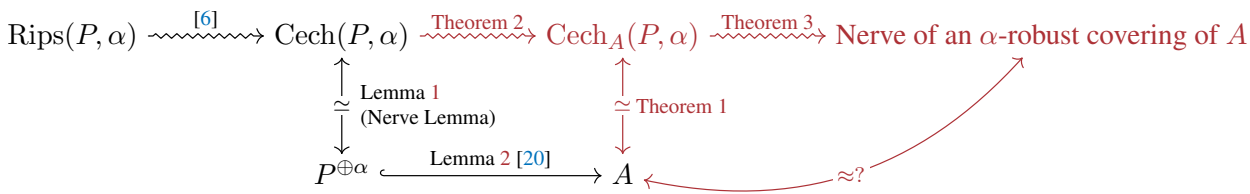
\begin{figure}[htb]
  \begin{center}
    \begin{tikzcd}[column sep = large, row sep = large]
      \Rips{P,\alpha} \arrow[rightsquigarrow]{r}{\text{\cite{Rips-cgta-AttaliLieutierSalinas}}} & \Cech{P,\alpha}
      \arrow[rightsquigarrow, color = Maroon]{r}{\text{Theorem \ref{theorem:CechCollapsesOnRestrictedCech}}}
      \arrow[leftrightarrow]{d}[description]{\simeq}{\; \bb{Lemma~\ref{lemma:nerve}}{(Nerve Lemma)}}
      & {\color{Maroon} \RCech{A}{P,\alpha}} \arrow[rightsquigarrow, color = Maroon]{r}{\text{Theorem \ref{theorem:collapsing-towards-nice-triangulations}}}
      \arrow[leftrightarrow, color = Maroon]{d}[description]{\simeq}{\;\text{Theorem
          \ref{theorem:RestrictedCechCaptureHomotopyType}}}
      & \text{\color{Maroon} Nerve of an $\alpha$-robust covering of $A$} \\
       & \Offset{P}{\alpha}
      \arrow[hookrightarrow]{r}{\text{Lemma
          \ref{lemma:CechReconstruit} \cite{niyogi08:_findin_homol_of_subman_with}}} &
      A \arrow[bend right,leftrightarrow, color = Maroon]{ur}[description]{\approx?} &
    \end{tikzcd}    
  \end{center}
  \caption{Logical structure of our results. Brown arrows represent new
    results. The arrow $\hookrightarrow$ stands for
    \enquote{deformation retracts to}. The arrow $\rightsquigarrow$
    stands for \enquote{can be transformed by a sequence of collapses
      into}. The symbol \enquote{$\simeq$} means \enquote{homotopy
      equivalent to} and \enquote{$\approx$} means
    \enquote{homeomorphic to}. \label{fig:diagram-results}}
\end{figure}


\section{Background}
\label{section:background}


First let us explain some of our terms and introduce the necessary
background. $\R^d$ denotes the $d$-dimensional Euclidean
space. $\|x-y\|$ is the Euclidean distance between two points $x$ and
$y$ of $\R^d$. The closed ball with center $x$ and radius $r$ is
denoted by $\Ball{x}{r}$ and its interior by $\OpenBall{x}{r}$. Given
a subset $X \subset \R^d$, the $\alpha$-offset of $X$ is
$\Offset{X}{\alpha} = \bigcup_{x \in X} B(x,\alpha)$. The Hausdorff
distance $d_H(X,Y)$ between the two compact sets $X$ and $Y$ of $\R^d$
is the smallest real number $\varepsilon \geq 0$ such that $X \subset
\Offset{Y}{\varepsilon}$ and $Y \subset \Offset{X}{\varepsilon}$. We
write $d(x,Y) = \inf_{y \in Y} \|x - y\|$ for the distance between
point $x \in \R^d$ and the set $Y \subset \R^d$ and $d(X,Y) = \inf_{x
  \in X} \inf_{y \in Y} \|x-y\|$ for the distance between the two sets
$X \subset \R^d$ and $Y\subset \R^d$.

A convenient way to build a simplicial complex is to consider the
nerve of a collection of sets. Specifically, let $P$ be a set of
indices. Later on, elements of $P$ will be points in $\R^d$. Let
$\Cset = \{ C_p \mid p\in P\}$ be a family of sets indexed by $p \in
P$. The nerve of the family is the abstract simplicial complex that
consists of all non-empty finite subcollections whose sets have
a non-empty common intersection. Formally, writing $\card
  \sigma$ for the number of elements in $\sigma$, we have
  $\Nerve{\Cset} = \{ \sigma \subset P \mid 0 < \card \sigma < +\infty
  \text{ and } \bigcap_{p \in \sigma} C_p \neq \emptyset\}$.  In this
paper, we shall consider nerves of coverings of a shape $A$.  We
recall that a {\em covering} of $A$ is a collection $\Cset = \{ C_p
\mid p \in P\}$ of subsets of $A$ so that $A = \bigcup_{p \in P}
C_p$. It is a {\em closed} ({\em resp. compact}) covering if each set
in $\Cset$ is closed ({\em resp. } compact). It is a {\em finite}
covering if the set of indices $P$ is finite. The Nerve Lemma gives a
condition under which the nerve of a covering of a shape shares the
topology of the shape. It has several versions
\cite{bjorner1996topological} and we shall use the following form:

\begin{lemma}[Nerve Lemma]
\label{lemma:nerve}
  Consider a compact set $A \subset \R^d$. Let $\Cset = \{C_p \mid p
  \in P\}$ be a finite closed covering of $A$. If for every $\emptyset
  \neq \sigma \subset P$, the intersection $\bigcap_{z \in \sigma}
  C_z$ is either empty or contractible, then the underlying space of
  $\Nerve{\Cset}$ is homotopy equivalent to $A$.
\end{lemma}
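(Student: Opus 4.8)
The plan is to argue by induction on the number $n=\card(P)$ of sets in the cover, gluing the shape and the nerve together simultaneously through a Mayer--Vietoris type decomposition; the finiteness hypothesis is precisely what makes such an induction available. For the base case $n=1$ the covering condition forces $C_p=A$, and taking $\sigma=\{p\}$ the hypothesis says $A$ is contractible, while $\Nerve{\Cset}$ is a single vertex. Both sides are contractible, hence trivially homotopy equivalent.

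For the inductive step I would single out one index $q\in P$ and set $B=C_q$ and $A'=\bigcup_{p\neq q}C_p$, so that $A=A'\cup B$. The restricted families $\{C_p\}_{p\neq q}$ covering $A'$ and $\{C_p\cap C_q\}_{p\neq q}$ covering $A'\cap B$ both inherit the hypothesis, since $\bigcap_{p\in\sigma}(C_p\cap C_q)=\bigcap_{p\in\sigma\cup\{q\}}C_p$ is contractible or empty. By the induction hypothesis $A'\simeq|\Nerve{\{C_p\}_{p\neq q}}|$ and $A'\cap B\simeq|\Nerve{\{C_p\cap C_q\}_{p\neq q}}|$, and moreover $B=C_q$ is contractible. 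On the combinatorial side, writing $N=\Nerve{\Cset}$, the full subcomplex $N'$ on $P\setminus\{q\}$ is exactly $\Nerve{\{C_p\}_{p\neq q}}$, the closed star $\CStar{q}{N}$ is a cone with apex $q$ (hence contractible), and $\Link{q}{N}$ is precisely $\Nerve{\{C_p\cap C_q\}_{p\neq q}}$. Thus $N=N'\cup\CStar{q}{N}$ with $N'\cap\CStar{q}{N}=\Link{q}{N}$, matching the decomposition of $A$ with $A'\cap B$ sitting in the overlap.

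The two decompositions then fit into homotopy pushout squares, and I would arrange the three corner equivalences $A'\simeq|N'|$, $A'\cap B\simeq|\Link{q}{N}|$, and $B\simeq|\CStar{q}{N}|\simeq\mathrm{pt}$ to be compatible with the gluing inclusions. The gluing lemma for homotopy equivalences (one of the versions in \cite{bjorner1996topological}) then yields $A=A'\cup_{A'\cap B}B\simeq|N'|\cup_{|\Link{q}{N}|}|\CStar{q}{N}|=|N|$, closing the induction.

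The delicate point -- and the step I expect to be the real obstacle -- is justifying that $A=A'\cup B$ is a genuine homotopy pushout along $A'\cap B$: for closed covers this is not automatic, and the Nerve Lemma can fail when the inclusions $A'\cap B\hookrightarrow A'$ and $A'\cap B\hookrightarrow B$ are not cofibrations. The clean way around this is to thicken each closed $C_p$ to an open neighborhood deformation retracting onto it, compatibly on all intersections, thereby reducing to the classical open good-cover version of the lemma; the required regularity holds for the closed sets arising in our setting. As an alternative to the whole induction, one may instead build the explicit blow-up
\[
\widetilde{A} \;=\; \bigcup_{\sigma \in N}\Bigl(\,\bigcap_{p\in\sigma}C_p\Bigr)\times|\sigma| \ \subseteq\ A \times |N|,
\]
and check that both projections are homotopy equivalences: the projection to $A$ via a partition of unity (its fibers being contractible simplices) and the projection to $|N|$ using that the fibers $\bigcap_{p\in\sigma}C_p$ are contractible by hypothesis.
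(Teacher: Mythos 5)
The first thing to note is that the paper does not prove this lemma at all: it is imported as background, with a pointer to \cite{bjorner1996topological} (``It has several versions and we shall use the following form''), so the comparison is between your argument and the standard proofs that citation gestures at. Your combinatorial bookkeeping is the right one for an inductive proof: splitting off $B = C_q$, identifying the full subcomplex $N'$ on $P\setminus\{q\}$, the cone $\CStar{q}{N}$ and $\Link{q}{N}$ with the three relevant nerves is exactly how such inductions are organized. But the gluing step has a naturality gap beyond the cofibration issue you flag: the induction hypothesis hands you \emph{some} homotopy equivalences $A' \simeq |N'|$ and $A' \cap B \simeq |\Link{q}{N}|$, with no compatibility whatsoever with the inclusions $A' \cap B \hookrightarrow A'$ and $|\Link{q}{N}| \hookrightarrow |N'|$, whereas the gluing lemma requires a commuting map of triads. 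Careful treatments repair this by proving a stronger natural statement by induction, or by abandoning the induction and building one natural comparison map once and for all (your blow-up $\widetilde{A}$ is precisely that); as sketched, the induction does not close even if every square were a homotopy pushout.

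The deeper problem is the step you identify as delicate and then defer: for the lemma \emph{as printed} --- an arbitrary finite closed cover of an arbitrary compact $A \subset \R^d$ --- the statement is false, so no amount of care can complete either of your two routes without strengthening the hypotheses. Concretely, take two Hawaiian earrings in the plane meeting only at their wild point $o$, and let $C_1, C_2 \subset \R^3$ be their geometric cones with apices $(0,0,1)$ and $(0,0,-1)$. Then $A = C_1 \cup C_2$ is compact, each $C_i$ is compact and contractible, $C_1 \cap C_2 = \{o\}$ is contractible, and the nerve is a single edge; yet this is Griffiths' twin cone, whose fundamental group is nontrivial (indeed uncountable), so $A \not\simeq |\Nerve{\Cset}|$. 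Hence your sentence ``the required regularity holds for the closed sets arising in our setting'' is not a step in a proof of the stated lemma but a tacit extra hypothesis. The statement becomes true --- and both the thickening argument and the blow-up argument go through --- when the pieces and all their nonempty intersections are compact ANRs, which is what the paper implicitly relies on: in every application the sets are of the form $A \cap \Bset(\sigma,\alpha)$ with $\alpha < \Reach{A}$, these have reach at least $\Reach{A} > 0$ by Lemma \ref{lemma:InterWithBallsCollectionKeepReach}, and compact sets of positive reach are ENRs via Federer's metric projection \cite{federer-59}, so the closed nerve theorem for ANR covers applies. The same caveat hits your blow-up fallback: contractible point-inverses do not imply homotopy equivalence for maps of general compacta (one needs cell-like map theory between ANRs, or a partition of unity subordinate to the cover, which closed covers do not admit). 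So the architecture is standard and salvageable, but only after adding an ANR or cofibration hypothesis to the statement; as written, the reduction to the open good-cover case is a genuine gap.
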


Hereafter, we shall omit the phrase \enquote{the underlying space of}
and write $X \simeq Y$ to say that $X$ is homotopy equivalent to
$Y$. Given a finite set of points $P \in \R^d$ and a real number
$\alpha \geq 0$, the \v Cech complex of $P$ at scale $\alpha$ can be
defined as $\Cech{P,\alpha} = \Nerve{ \{ B(p,\alpha) \mid p \in P\}
}$. With this definition and the Nerve Lemma, it is clear that
$\Cech{P,\alpha} \simeq \Offset{P}{\alpha}$; see the black vertical
arrow in Figure \ref{fig:diagram-results}. Several recent results have
expressed conditions under which $\Offset{P}{\alpha}$ recovers the
homotopy type of the shape $A$
\cite{niyogi08:_findin_homol_of_subman_with,chazal08:_smoot_manif_recon_from_noisy,chazal2009sampling,Rips-cgta-AttaliLieutierSalinas}. Intuitively,
the data points $P$ must sample the shape $A$ sufficiently densely and
accurately. One of the simplest ways to measure the quality of the
sample is to use the reach of the shape. Given a compact subset $A$ of
$\R^d$, recall that the {\em medial axis} $\MA{A}$ of $A$ is the set
of points in $\R^d$ which have at least two closest points in $A$. The
{\em reach} of $A$ is the infimum of distances between points in $A$
and points in $\MA{A}$, $\Reach{A} = \inf_{a \in A, m \in \MA{A}}
\|a-m\|$. It is well-known that a compact subset $C \subset \R^d$ has
  infinite reach if and only if $C$ is convex.  We have
(horizontal black arrow in Figure \ref{fig:diagram-results}):


\begin{lemma}[\cite{niyogi08:_findin_homol_of_subman_with}]
\label{lemma:CechReconstruit}
Let $A$ and $P$ be two compact subsets of $\R^d$. Suppose there
exists a real number $\varepsilon$ such that
$d_H(A,P) \leq \varepsilon < (3 - \sqrt{8}) \Reach{A}$. 
Then, $\Offset{P}{\alpha}$ deformation retracts to
$A$ for $\alpha = (2+\sqrt{2})\varepsilon$.
\end{lemma}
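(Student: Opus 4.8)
The plan is to realize the deformation retraction explicitly through the nearest-point projection $\pi_A$ onto $A$, exactly as in the tubular-neighborhood picture for sets of positive reach. Write $R = \Reach{A}$ and $\alpha = (2+\sqrt2)\varepsilon$. First I would record the two inclusions that frame the whole argument. On one hand, every point of $A$ lies within $\varepsilon \le \alpha$ of $P$ because $d_H(A,P) \le \varepsilon$, so $A \subseteq \Offset{P}{\alpha}$; this guarantees that $A$ is a subspace of the object we want to retract. On the other hand, any $x \in \Offset{P}{\alpha}$ satisfies $d(x,A) \le d(x,P) + d_H(A,P) \le \alpha + \varepsilon = (3+\sqrt2)\varepsilon$, and the hypothesis $\varepsilon < (3-\sqrt8)\,R = (3 - 2\sqrt2)\,R$ forces $(3+\sqrt2)\varepsilon < (5 - 3\sqrt2)\,R < R$. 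Hence $\Offset{P}{\alpha} \subseteq \{ x : d(x,A) < R \}$, the open tube on which $\pi_A$ is single-valued.

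Next, by Federer's theory of sets of positive reach, $\pi_A$ is well defined and continuous on $\{ x : d(x,A) < R \}$, hence on $\Offset{P}{\alpha}$. I would then define $H \colon \Offset{P}{\alpha} \times [0,1] \to \R^d$ by $H(x,t) = (1-t)\,x + t\,\pi_A(x)$ and claim it is a strong deformation retraction onto $A$. Three of the required properties are immediate: $H$ is continuous since $\pi_A$ is; $H(x,0) = x$ and $H(x,1) = \pi_A(x) \in A$; and $H(a,t) = a$ for $a \in A$, since $\pi_A$ restricts to the identity on $A$. I would also invoke the standard fact that for $a = \pi_A(x)$ every point $y$ of the segment $[x,a]$ satisfies $\pi_A(y) = a$ and therefore $d(y,A) = \|y - a\|$; this follows from uniqueness of the projection on the tube. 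Everything then reduces to the one nontrivial point: that the segment $[x,\pi_A(x)]$ stays inside $\Offset{P}{\alpha}$, i.e. $H(x,t) \in \Offset{P}{\alpha}$ for all $t$.

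This segment-containment is where the reach and the precise constants must be used, and I expect it to be the main obstacle. The naive estimate $d(y,P) \le d(y,A) + \varepsilon = \|y - a\| + \varepsilon$ controls the half of $[x,a]$ near $a$ but is too weak near $x$, where $d(x,A)$ may be as large as $(3+\sqrt2)\varepsilon$. To close the gap I would cover the segment by the two balls $\Ball{p}{\alpha}$ and $\Ball{q}{\alpha}$, where $p \in P$ is a sample nearest to $x$ and $q \in P$ a sample nearest to $a = \pi_A(x)$, and show that no point of $[x,a]$ escapes both. The key geometric input is the empty-ball characterization of reach: with $u = (x-a)/\|x-a\|$ the outward normal direction at $a$, the open ball $\OpenBall{a + R u}{R}$ meets $A$ only at $a$, so that the projection $\pi_A(p)$, and hence $p$ itself up to the error $\varepsilon$, cannot lie too far behind $a$ along $u$. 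Feeding this bound into the expansion of $\|y - p\|^2$ along the segment produces an interval of parameters near $x$ on which $\|y - p\| \le \alpha$, while the triangle-inequality estimate yields an interval near $a$ on which $\|y - q\| \le \alpha$; the constants $3 - \sqrt8$ and $2 + \sqrt2$ are exactly what is needed to make these two intervals overlap, leaving no uncovered point. Once segment-containment is established, $H$ is a strong deformation retraction of $\Offset{P}{\alpha}$ onto $A$, which is the assertion of the lemma.
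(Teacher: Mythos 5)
Your proposal takes a genuinely different route from the paper, for the simple reason that the paper never proves this lemma: it imports it from \cite{niyogi08:_findin_homol_of_subman_with}, and the \enquote{alternative proof} the paper supplies later (Theorem \ref{theorem:CechCollapsesOnRestrictedCech} combined with Theorem \ref{theorem:RestrictedCechCaptureHomotopyType} and Lemma \ref{lemma:nerve}) recovers only the weaker conclusion $\Offset{P}{\alpha} \simeq A$ up to homotopy equivalence, not the deformation retraction. What you sketch is essentially a reconstruction of the Niyogi--Smale--Weinberger argument itself, and it buys two things the paper's collapse-based route does not: the actual retraction, and --- since you rely only on Federer's theory --- validity for arbitrary compact sets of positive reach, which is exactly the generality in which the paper states the lemma, whereas NSW formulate their result for submanifolds. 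The one step you assert without computing, the overlap of the two covering intervals, does close, and tightly. Writing $D = d(x,A)$, $u = (x-a)/D$ and $p = a + \beta u + w$ with $w \perp u$, the point of $[x,a]$ at distance $s$ from $a$ lies in $\Ball{p}{\alpha}$ whenever $(s-\beta)^2 \leq (D-\beta)^2$, i.e. $\beta \leq (D+s)/2$; taking $s = \alpha - \varepsilon$, where $\Ball{q}{\alpha}$ stops covering, the dangerous case $\beta > (D+\alpha-\varepsilon)/2$ is excluded because the constraints $\|x-p\| \leq \alpha$ and $\|p - (a+Ru)\| \geq R - \varepsilon$ (from $\OpenBall{a+Ru}{R} \cap A = \emptyset$ --- note the open tangent ball is \emph{disjoint} from $A$, not \enquote{meets $A$ only at $a$}, a harmless slip since $a$ lies on its boundary) force $\beta \leq \bigl(R^2 - D^2 - (R-\varepsilon)^2 + \alpha^2\bigr)/\bigl(2(R-D)\bigr)$, and this threshold is at most $(D+\alpha-\varepsilon)/2$ precisely when $R(D+\alpha-3\varepsilon) \geq (\alpha-\varepsilon)(\alpha+\varepsilon+D)$. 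With $\alpha = (2+\sqrt{2})\varepsilon$ this inequality holds for every admissible $D \in (\alpha-\varepsilon, \alpha+\varepsilon]$ whenever $\varepsilon < (3-\sqrt{8})R$, with equality exactly in the limiting configuration $D = \alpha - \varepsilon$, $R = (3+\sqrt{8})\varepsilon$ --- so your remark that the constants are \enquote{exactly what is needed} is literally true, and your plan goes through as stated.
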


Combining Lemma~\ref{lemma:nerve} and
Lemma~\ref{lemma:CechReconstruit}, we thus get conditions under which
$\Cech{P,\alpha} \simeq A$.  The next two sections will provide an
alternative proof of this result along the way. We recall a
  result which will be useful when establishing some of the
  intermediate geometric lemmas. For a point $x \in A \setminus
  \MA{A}$, write $\pi_A(x)$ for the unique point in $A$ closest to
  $x$. We have:

\begin{lemma}[{\cite[p.~305]{delfour2011shapes}}]
\label{lemma:ProjectionWithinReachDistance}
Let $A$ be a compact subset of $\R^d$ and $c$ a point such that $0<
d(c,A) < \Reach{A}$.  Let $\Delta_c$ be the half-line with end point
$\pi_A(c)$ and containing $c$. For every point $x \in \Delta_c$, if
$d(x, \pi_A(c) ) < \Reach{A}$, then $\pi_A(x) = \pi_A(c)$.
\end{lemma}

Before starting the paper, we recall that a {\em collapse} of an
abstract simplicial complex $K$ is the removal of a simplex
$\sigma_{\min} \in K$ together with all its cofaces assuming
$\sigma_{\min}$ is non-maximal and its set of cofaces contains a
unique maximal element $\sigma_{\max} \in K$. A collapse produces a
simplicial complex to which $K$ deformation retracts and thus is a
simplification operation that preserves the homotopy type
\cite{cohen1973course}.


\section{Reach of spaces restricted by small balls}
\label{section:restricted-reach}


In this section, we consider a subset $A \subset \R^d$ such that
$\Reach{A}>0$ and prove that as we intersect $A$ with balls of radius
$\alpha < \Reach{A}$ the reach of the intersection can only get
bigger; see Figure \ref{fig:reach}. More precisely, let $A$ be a
compact subset of $\R^d$ and $\sigma \subset \R^d$. Write
$\CapBalls{\sigma}{\alpha} = \bigcap_{z \in \sigma} B(z,\alpha)$ for
the common intersection of balls with radius $\alpha$ centered at
$\sigma$ and assume that $A \cap \CapBalls{\sigma}{\alpha} \neq
\emptyset$. In this section, we establish that $\Reach{A} \leq
\Reach{A \cap \CapBalls{\sigma}{\alpha}}$ in the following
  situations: first when $\sigma$ is reduced to a single point $z$
(Lemma~\ref{lemma:InterWithBallKeepReach}), then when $\sigma$ is
finite (Lemma~\ref{lemma:InterWithFiniteBallsKeepReach}) and finally
when $\sigma$ is a compact subset of $\R^d$
(Lemma~\ref{lemma:InterWithBallsCollectionKeepReach}). Although the
first generalization ($\sigma$ finite) is all we need for establishing
Theorem \ref{theorem:RestrictedCechCaptureHomotopyType} in Section
\ref{section:restricted-cech}, the second generalization ($\sigma$
compact) will turn out to be useful later on in the paper. Let
  us start with a preliminary geometric lemma:

\begin{figure}[h]
\def\svgwidth{0.9\linewidth}
  \centering\small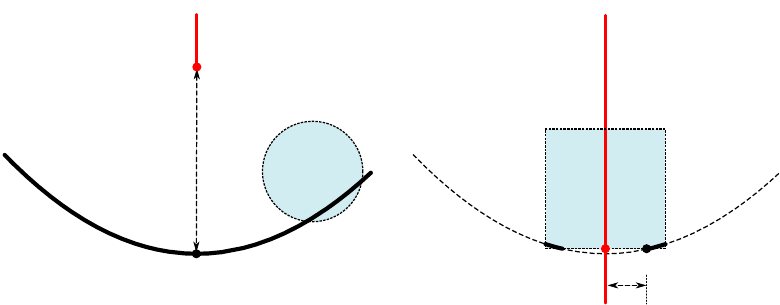
  \caption{Left: Medial axis $\MA{A}$ of a shape $A$. If we intersect
    $A$ with a ball $B$ whose radius is smaller than the reach of $A$,
    the reach of the intersection $A \cap B$ can only get
    bigger. Right. This property does not hold if we replace the ball
    $B$ by another set, even with infinite reach such as the solid
    cube $C$. \label{fig:reach}}
\end{figure}

\begin{lemma}\label{lemma:lemma0}
Let $X \subset \R^d$ be a non-empty compact set and  $B(c,\rho)$ 
its smallest enclosing ball. For all points $z \in \R^d$ and all real
numbers $r \geq \rho$, the following implications hold
\begin{enumerate}[{\em (i)}]
\item $X \subset \Ball{z}{r} \implies \Ball{c}{r - \sqrt{r^2 - \rho^2}} \subset \Ball{z}{r}$;
\item $\Ball{c}{r - \sqrt{r^2 - \rho^2}} \subset \OpenBall{z}{r} \implies
  X \cap \OpenBall{z}{r} \neq \emptyset$.
\end{enumerate}
\end{lemma}
\begin{proof}

To establish (i), assume for a contradiction that $\Ball{z}{r}$
does not contain $\Ball{c}{r - \sqrt{r^2 - \rho^2}}$ or
equivalently that $\|c-z\| > \sqrt{r^2 - \rho^2}$.  This implies
that the smallest ball enclosing $\Ball{c}{\rho} \cap
\Ball{z}{r}$ has radius less than $\rho$. Since this intersection
contains $X$, this would contradict the definition of $\rho$ as the
radius of the smallest ball enclosing $X$.

It is not hard to check (by contradiction) that the center of the
smallest ball enclosing $X$ lies on the convex hull of $X$. It follows
that for any half-space $H$ whose boundary passes through $c$, the
intersection $X \cap H \cap B(c,\rho)$ is non-empty. To establish
(ii), we may assume that $c \neq z$ for otherwise the result is
clear. Let $H$ be the half-space containing $z$ whose boundary passes
through $c$ and is orthogonal to the segment $cz$; see Figure
\ref{fig:proof-technical-1}. If $\Ball{c}{r - \sqrt{r^2 -
    \rho^2}} \subset \OpenBall{z}{r}$, then $H \cap B(c,\rho)
\subset \OpenBall{z}{r}$. Since $X \cap H \cap B(c,\rho) \neq
\emptyset$, it follows that $X \cap \OpenBall{z}{r} \neq
\emptyset$.
\end{proof}

\begin{figure}[htb]
  \def\svgwidth{0.4\linewidth}
  \centering\small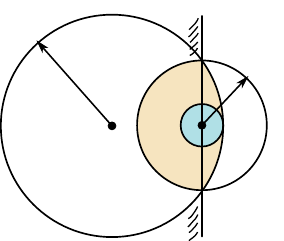
  \caption{Notation for the proof of Lemma~\ref{lemma:lemma0} 
    when $\|c-z\| = \sqrt{r^2 - \rho^2}$.
    \label{fig:proof-technical-1}}
\end{figure}

\begin{lemma}\label{lemma:InterWithBallKeepReach}
Let $A \subset \R^d$ be a compact set and $B(z,\alpha)$ a closed ball
with center $z$ and radius $\alpha$.  If $0 \leq \alpha < \Reach{A}$
and $A \cap B(z,\alpha) \neq \emptyset$ then $ \Reach{A} \leq \Reach{A
  \cap B(z,\alpha)}$.
\end{lemma}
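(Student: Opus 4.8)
The plan is to prove the inequality by controlling the nearest-point projection onto $C := A \cap \Ball{z}{\alpha}$ directly. Write $R = \Reach{A}$. It suffices to show that every point $x$ with $d(x,C) < R$ has a \emph{unique} nearest point in $C$: this means no point within distance $R$ of $C$ lies on $\MA{C}$, which is exactly $R \le \Reach{C}$. So fix $x$ with $\rho := d(x,C) < R$. Since $C \subseteq A$ we have $d(x,A) \le \rho < R$, hence $a := \pi_A(x)$ is the unique nearest point of $x$ in $A$. I would then split on whether $a$ lies in the ball. If $a \in \Ball{z}{\alpha}$, then $a \in C$, so $d(x,C) = d(x,A)$ and any nearest point of $x$ in $C$ is also nearest in $A$; uniqueness in $A$ forces it to equal $a$, and we are done. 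The only case requiring work is $a \notin \Ball{z}{\alpha}$, i.e. $\|a-z\| > \alpha$.

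In that case I first claim every nearest point $c$ of $x$ in $C$ lies on the bounding sphere $\|c-z\| = \alpha$. Indeed, if $\|c-z\| < \alpha$ then a relative neighbourhood of $c$ in $A$ stays inside the ball, hence inside $C$, so $c$ is a local minimiser of $\|\cdot - x\|$ over $A$; the standard variational argument puts $x - c$ in the normal cone $\operatorname{Nor}(A,c)$, and since $\|x-c\| = \rho < R$ the reach property gives $c = \pi_A(x) = a$, contradicting $a \notin \Ball{z}{\alpha}$. With $c$ on the sphere, first-order optimality for $\min_{y \in A \cap \Ball{z}{\alpha}} \|y - x\|$ yields a decomposition $x - c = n + \lambda(c-z)$ with $n \in \operatorname{Nor}(A,c)$ and $\lambda \ge 0$, where $c-z$ is the outward ball normal at $c$; moreover $\lambda > 0$, since $\lambda = 0$ would again force $c = a$. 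The delicate point — and the main obstacle of the proof — is the constraint qualification that lets one split the normal cone of the intersection as $\operatorname{Nor}(A,c) + \operatorname{cone}(c-z)$. This can only fail when $(z-c)/\alpha \in \operatorname{Nor}(A,c)$, but then $\|c-z\| = \alpha < R$ would make $c = \pi_A(z)$ and, because the ball of radius $\alpha$ then sits inside the empty rolling ball of radius $R$ internally tangent to $A$ at $c$, it would force $A \cap \Ball{z}{\alpha} = \{c\}$ — a case where uniqueness is trivial. So the decomposition is available whenever $C$ has more than one point.

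It then remains to rule out two distinct nearest points $c_1 \ne c_2$, both on the sphere with $\|x-c_1\| = \|x-c_2\| = \rho$. Expanding $\|x-c_j\|^2 = \|x-c_i\|^2$ gives $\langle x-c_i,\, c_j-c_i\rangle = \tfrac12\|c_i-c_j\|^2$, while $\|c_i-z\| = \|c_j-z\| = \alpha$ gives $\langle (c_i-z)/\alpha,\, c_j-c_i\rangle = -\|c_i-c_j\|^2/(2\alpha)$. Substituting $x - c_i = \|n_i\|\,\hat n_i + \lambda_i\alpha\,(c_i-z)/\alpha$ into the first identity (the case $n_i = 0$ is immediate, since it makes the left side negative while the identity says it is positive) and bounding $\langle \hat n_i, c_j-c_i\rangle \le \|c_i-c_j\|^2/(2R)$ by the Federer characterisation of $\Reach{A} = R$, valid for any unit normal, I obtain $\tfrac12\|c_i-c_j\|^2 \le \tfrac12\|c_i-c_j\|^2\big(\|n_i\|/R - \lambda_i\big)$, hence $\|n_i\| \ge R(1+\lambda_i)$. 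Finally the reverse triangle inequality on $\rho = \|n_i + \lambda_i(c_i-z)\|$ gives $\rho \ge \|n_i\| - \lambda_i\alpha \ge R + \lambda_i(R-\alpha)$. Since $\lambda_i > 0$ and $\alpha < R$, this yields $\rho > R$, contradicting $\rho < R$. Note that the hypothesis $\alpha < R$ is used here precisely to guarantee $R - \alpha > 0$, which is where roundness and the scale of the ball enter; replacing the ball by the cube of Figure~\ref{fig:reach} destroys exactly this step.
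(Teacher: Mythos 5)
Your proposal is correct, but it proves the lemma by a genuinely different route than the paper. The paper's proof is a contradiction argument at the level of medial-axis points: it picks $y \in \MA{A \cap \Ball{z}{\alpha}}$ with $d(y, A\cap\Ball{z}{\alpha}) < \Reach{A}$, intersects again with $\Ball{y}{\alpha_y}$, takes the smallest enclosing ball (center $c$, radius $\rho$) of the resulting set, and uses only the existence and uniqueness of projections below the reach together with an elementary spherical-cap estimate (Lemma \ref{lemma:lemma0}) to show that the ball $\Ball{x}{r}$ around $x = \pi_A(c) + \tfrac{r}{d(c,A)}(c-\pi_A(c))$ must contain points of $A$ in its interior while $d(x,A)=r$. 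You instead work with Federer's normal-cone calculus: you show every $x$ with $d(x,C) < R$ has a unique nearest point in $C = A \cap \Ball{z}{\alpha}$, using the rolling-ball property (Federer 4.8(12)) to handle interior nearest points and the case $\lambda=0$, the quadratic inequality $\langle \hat n, b - c\rangle \le \|b-c\|^2/(2R)$ (Federer 4.8(7)) for the main estimate, and the decomposition $x - c = n + \lambda(c-z)$ of a proximal normal of the intersection. Your computations all check: the two polarization identities, the bound $\|n_i\| \ge R(1+\lambda_i)$, and $\rho \ge \|n_i\| - \lambda_i\alpha \ge R + \lambda_i(R-\alpha) > R$, contradicting $\rho < R$; and your disposal of the degenerate case is right — if $(z-c)/\alpha \in \operatorname{Nor}(A,c)$ then $\Ball{z}{\alpha}$ sits inside the empty open ball of radius $R$ internally tangent at $c$, forcing $C = \{c\}$. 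The one point you should shore up is the normal-cone splitting itself: the intersection rule $\operatorname{Nor}(A\cap \Ball{z}{\alpha},c) \subset \operatorname{Nor}(A,c) + \operatorname{cone}(c-z)$ under the transversality qualification is a nontrivial theorem of variational analysis (the limiting-normal-cone intersection rule, e.g.\ Rockafellar--Wets, combined with the fact that for sets of positive reach the proximal, limiting and Federer normal cones coincide); you flag it as the delicate point but assert rather than cite or prove it, so as written it is the only step resting on machinery external to the paper. As for what each approach buys: the paper's argument is self-contained and purely metric, needing nothing beyond Federer's projection map and a spherical-cap computation; yours is shorter once the calculus is granted, exposes the first-order structure of nearest points in the intersection, and makes transparent exactly where the hypotheses enter — the roundness of the ball gives the single-ray normal cone $\operatorname{cone}(c-z)$, and $\alpha < \Reach{A}$ gives the positive coefficient $R-\alpha$ — which neatly explains why the cube $C$ of Figure \ref{fig:reach} defeats the lemma.
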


\begin{figure}[htb]
  \def\svgwidth{1\linewidth}
  \centering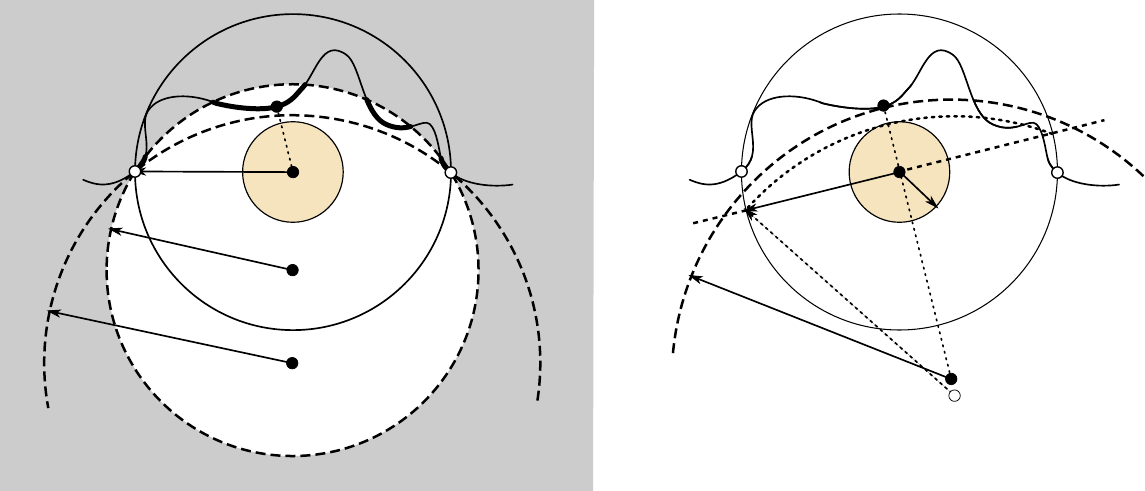
  \caption{Notation for the proof of
    Lemma~\ref{lemma:InterWithBallKeepReach}. The quantity $r -
    \sqrt{r^2 - \rho^2}$ represents the height of a spherical cap
    whose base has radius $\rho$ and which lies on a sphere with
    radius $r$.
    \label{fig:proof-reach-increasing}
  }
\end{figure}

\begin{proof}
See Figure \ref{fig:proof-reach-increasing} on the left. Assume, by
contradiction, that $\Reach{A \cap B(z,\alpha)} < \Reach{A}$ and
consider a point $z'$ in the medial axis of $A \cap B(z,\alpha)$ such
that 
$$
d(z',A \cap B(z,\alpha)) = \alpha' < \Reach{A}.
$$ Introduce $A'' = A \cap B(z,\alpha) \cap B(z',\alpha')$ and denote
by $c$ and $\rho$ the center and the radius of the smallest ball
enclosing $A''$. Because $A''$ is contained in both $B(z,\alpha)$ and
$B(z',\alpha')$, the radius of the smallest ball enclosing $A''$
satisfies $\rho \leq \min\{\alpha,\alpha'\} < \Reach{A}$.  Because
$A'' \subset A$, we get $d(c,A) \leq d(c,A'') \leq \rho < \Reach{A}$
and therefore $c$ has a unique closest point $\pi_A(c)$ in $A$.
Take $r$ to be any real number such that
$\max\{\alpha,\alpha'\} < r < \Reach{A}$. We claim that $r -
\sqrt{r^2 - \rho^2} < d(c,A)$; see Figure
\ref{fig:proof-reach-increasing} for a geometric interpretation of the
quantity $r - \sqrt{r^2 - \rho^2}$. 
Indeed, for every $(z_0,r_0) \in \{(z,\alpha),(z',\alpha')\}$,
since $A'' \subset B(z_0,r_0)$ and $r_0 \geq \rho$,
Lemma~\ref{lemma:lemma0} (i) implies that the following inclusion
holds:
$$
\Ball{c}{r_0 - \sqrt{r_0^2 - \rho^2}} ~~\subset~~ \Ball{z_0}{r_0}.
$$
Since the map $r \mapsto r - \sqrt{r^2 - \rho^2}$ is
strictly decreasing in $[\rho, +\infty)$ and $\rho \leq r_0 < r$, we get that
$$
\Ball{c}{r - \sqrt{r^2 - \rho^2}} ~~\subset~~
\OpenBall{z}{\alpha} \cap \OpenBall{z'}{\alpha'}.
$$ By construction, $\OpenBall{z'}{\alpha'}$ contains no points of $A
\cap \OpenBall{z}{\alpha}$. It follows that $\OpenBall{z}{\alpha} \cap
\OpenBall{z'}{\alpha'}$ contains no points of $A$, and neither does
$\Ball{c}{r - \sqrt{r^2 - \rho^2}}$.  Thus, $r - \sqrt{r^2 - \rho^2} <
d(c,A) = \|c - \pi_A(c)\|$ as claimed.  Let us consider the point $x =
\pi_A(c) + \frac{r}{d(c,A)} (c- \pi_A(c))$; see Figure
\ref{fig:proof-reach-increasing} on the right.  By construction
$\|x-\pi_A(c)\| = r < \Reach{A}$ and therefore $x$ has a unique
closest point $\pi_A(x)$ in $A$ which, by
Lemma~\ref{lemma:ProjectionWithinReachDistance}, satisfies $\pi_A(x) =
\pi_A(c)$. Since $\Ball{c}{d(c,A)}\subset\Ball{x}{d(x,A)}$, we deduce
that
$$
\Ball{c}{r- \sqrt{r^2 -\rho^2}} ~~\subset~~ \OpenBall{x}{r}.
$$ Applying Lemma~\ref{lemma:lemma0} (ii) we get that
$A'' \cap \OpenBall{x}{r} \neq \emptyset$ and therefore $\Ball{x}{r}$
contains points of $A$ in its interior. But this contradicts $d(x,A) =
r$.
\end{proof}

\begin{lemma}\label{lemma:InterWithFiniteBallsKeepReach}
  Consider a compact set $A \subset \R^d$ and a finite set $\sigma
  \subset \R^d$.  If $0 \leq \alpha < \Reach{A}$ and $A \cap
  \Bset(\sigma,\alpha) \neq \emptyset$ then $\Reach{A} \leq \Reach{A
    \cap \Bset(\sigma,\alpha)}$.
\end{lemma}

\begin{proof}
  By induction over the size of $\sigma$. 
\end{proof}

The following lemma is a milestone for the proof of Lemma~\ref{lemma:InterWithBallsCollectionKeepReach}.

\begin{lemma}\label{lemma:ReachIsPreservedByCountableIntersection}
Let $(A_n)_{n \in \N}$ be a sequence of non-empty compact
subsets of $\R^d$ decreasing with respect to the inclusion order. If there exists
a real number $r$ such that $0 \leq r \leq \Reach{A_n}$ for all $n \in \N$,
then $r \leq \Reach { \bigcap_{n\in \N} A_n }$.
\end{lemma}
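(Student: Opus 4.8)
The plan is to set $A=\bigcap_{n\in\N}A_n$ and to exploit the nearest-point characterization of reach that is equivalent to the definition used here: for a compact set $S$ one has $\Reach{S}\ge r$ if and only if every point $x$ with $d(x,S)<r$ has a \emph{unique} nearest point in $S$ (equivalently, no such $x$ lies in $\MA{S}$). Since the $A_n$ are non-empty, compact and nested, $A$ is a non-empty compact set by the finite intersection property, so it suffices to fix $x$ with $d(x,A)=d<r$ and prove that its nearest point in $A$ is unique; the case $d=0$ is trivial (then $x\in A$), so assume $d>0$. First I would control the projections onto the $A_n$. Because $A\subseteq A_n$ we have $t_n:=d(x,A_n)\le d<r\le\Reach{A_n}$, so $x$ has a unique nearest point $p_n:=\pi_{A_n}(x)$ with $\|x-p_n\|=t_n$, and the sequence $(t_n)$ is non-decreasing (the $A_n$ shrink). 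A compactness argument then shows $t_n\to d$: the $p_n$ lie in the compact set $A_1$, a subsequence converges to some $p$, and since each $A_m$ is closed and contains $p_n$ for $n\ge m$, the limit $p$ lies in every $A_m$, hence in $A$; thus $\lim t_n=\|x-p\|\ge d$, which forces $\lim t_n=d$.

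The heart of the argument, and the step I expect to be the main obstacle, is \emph{uniqueness}, which I would establish by quantifying how close $p_n$ must be to an arbitrary nearest point of $x$ in $A$. Let $q\in A$ be any nearest point, so $q\in A_n$ and $\|x-q\|=d$. I invoke the classical property of a set of reach at least $r$: at $p_n=\pi_{A_n}(x)$ the open ball $\OpenBall{c}{r}$ centered at $c=p_n+r\frac{x-p_n}{t_n}$ contains no point of $A_n$ (for the indices with $t_n=0$ one has $p_n=x$ and the estimate below is trivial). Expanding $\|q-c\|\ge r$ yields $\langle q-p_n,\,x-p_n\rangle\le\frac{t_n}{2r}\|q-p_n\|^2$, and substituting this into
$$ \|x-q\|^2=\|x-p_n\|^2-2\langle x-p_n,\,q-p_n\rangle+\|q-p_n\|^2 $$
together with $\|x-q\|=d$ and $\|x-p_n\|=t_n$ gives, after rearranging,
$$ \Big(1-\tfrac{t_n}{r}\Big)\|q-p_n\|^2\le d^2-t_n^2, \qquad\text{hence}\qquad \|q-p_n\|^2\le\frac{r\,(d-t_n)(d+t_n)}{r-t_n}. $$
Since $t_n\to d<r$, the right-hand side tends to $0$, so $p_n\to q$.

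As the sequence $(p_n)$ has at most one limit, the nearest point $q$ is necessarily unique; therefore $x$ has a unique nearest point in $A$. Since $x$ was an arbitrary point with $d(x,A)<r$, this proves $\Reach{A}\ge r$, as desired. I would stress that the decreasing hypothesis is used in an essential way twice --- to guarantee $A\subseteq A_n$ (so that the projections $p_n$ are defined and $t_n\le d$) and to push the limit point $p$ back into $A$ --- and that this monotonicity is exactly what rescues the statement, since reach is not in general lower semicontinuous under Hausdorff convergence of compact sets.
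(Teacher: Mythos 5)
Your proof is correct, and while it shares the paper's key geometric ingredient, its execution is genuinely different. Both arguments pivot on the same classical fact: since $r \leq \Reach{A_n}$, the open ball of radius $r$ centered at $c = p_n + \frac{r}{t_n}(x - p_n)$ and tangent to $A_n$ at the projection $p_n$ is disjoint from $A_n$ (the paper invokes exactly this for $A_i$ when it places the empty ball $\OpenBall{z}{r}$ at $a_i$). But the surrounding machinery diverges. The paper first proves $d_H(A_n,A) \to 0$, then for $z'$ at distance $r' < r$ from $A$ selects a single well-approximating $A_i$, and bounds the lens $\Ball{z'}{r'} \setminus \OpenBall{z}{r}$ --- which contains all of $A \cap \Ball{z'}{r'}$ --- inside a ball of radius $r\sin\theta$, with $\varepsilon$ tuned via an explicit trigonometric inequality to force $r\sin\theta < \delta$. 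You instead fix $x$, track the whole sequence of projections $p_n = \pi_{A_n}(x)$, replace the Hausdorff-convergence lemma by the more elementary local fact $t_n = d(x,A_n) \uparrow d$ (your subsequence argument for this is sound: any cluster point $p$ of $(p_n)$ lies in every closed $A_m$, hence in $A$), and convert the empty tangent ball into Federer's quadratic inequality $\scalprod{q-p_n}{x-p_n} \leq \frac{t_n}{2r}\|q-p_n\|^2$ for $q \in A_n$ --- your algebra here is exactly the expansion of $\|q-c\|^2 \geq r^2$, and the degenerate indices with $t_n = 0$ are correctly harmless. The closed-form bound $\|q-p_n\|^2 \leq r(d-t_n)(d+t_n)/(r-t_n)$ then forces $p_n \to q$ for \emph{every} nearest point $q$, so uniqueness of limits finishes the job; and the reduction of the lemma to the unique-nearest-point property is legitimate under the paper's medial-axis definition of reach (the paper uses the same reduction implicitly when it concludes from $A \cap \Ball{z'}{r'}$ being a singleton). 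What your route buys is the elimination of both the global Hausdorff-convergence step and the spherical-cap trigonometry in favor of a single quantitative estimate; what the paper's buys is a self-contained geometric argument that never needs Federer's inequality as a named tool, establishing the smallness of the whole nearest-point set by direct ball containments.
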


\begin{proof}
Letting $A=\bigcap_{n\in \N} A_n$, we first show that the Hausdorff
distance $d_H(A_n,A)$ tends to $0$ as $n \to +\infty$. For
$\varepsilon > 0$, introduce the set $L_\varepsilon = \{ x \in \R^d
\mid d(x, A) \geq \varepsilon \}$ and notice that $\bigcap_{n \in \N}
\left( L_\varepsilon \cap A_n \right) = L_\varepsilon \cap A =
\emptyset$. Since the sequence of compact sets $(L_\varepsilon \cap
A_n)_{n \in \N}$ is decreasing, the only possibility is that
$L_\varepsilon \cap A_i = \emptyset$ for some $i \in
\N$. Equivalently, $d_H(A_i, A)< \varepsilon$ which proves the
convergence of $A_n$ to $A$ under Hausdorff distance.

Let $r'$ be a positive real number in the open interval $(0,r)$ and
let $z'$ be a point whose distance to $A$ is $r'$. Let us prove that
$z'$ has a unique closest point in $A$.  For $n$ large enough,
$d(z',A_n) < r$ and $z'$ has a unique closest point $a_n$ in
$A_n$. All points $a_n$ are contained in the closed ball $B(z',r)$ and
therefore, we can extract from $(a_n)_{n \in \N}$ a subsequence
$(a_{n_i})_{i \in \N}$ that converges to a point $a$. Since
$d_H(A_{n_i},A)$ tends to 0 as $i \to +\infty$, we deduce that the
point $a$ must belong to $A$. Let us define the point $z_{n_i}$ by
$$
z_{n_i} = a_{n_i} + \frac{r}{\|z'-a_{n_i}\|} (z'-a_{n_i}).
$$ The sequence $(z_{n_i})$ converges to the point $z = a +
\frac{r}{\|z'-a\|}(z'-a)$ and we get $d(z,A) \leq \|z-a\| = r$.  By
Lemma \ref{lemma:ProjectionWithinReachDistance}, the point $z_{n_i}$
shares with $z'$ the same closest point in $A_{n_i}$, namely $a_{n_i}$
and by construction $\|z_{n_i} - a_{n_i}\| = r$. Thus, $\OpenBall{z_{n_i}}{r}
\cap A = \emptyset$ and by passing to the limit, we get that
$\OpenBall{z}{r} \cap A = \emptyset$, or equivalently that $d(z,A)
\geq r$. Thus, $d(z,A) = r = \|z-a\|$ and since $z'$ lies on the open
line segment $za$, it has a unique closest point in $A$, namely the
point $a$. Since this is true for all $r' \in (0,r)$, we deduce that
$\Reach{A} \geq r$.
\end{proof}

\begin{lemma}\label{lemma:InterWithBallsCollectionKeepReach}
Let $A$ and $\sigma \neq \emptyset$ be two compact sets of $\R^d$.  If $0 \leq \alpha
< \Reach{A}$ and $A \cap \Bset(\sigma,\alpha) \neq \emptyset$ then
$\Reach{A} \leq \Reach{A \cap \Bset(\sigma,\alpha)}$.
\end{lemma}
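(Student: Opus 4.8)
The plan is to reduce the compact case to the countable setting already treated in the milestone Lemma \ref{lemma:ReachIsPreservedByCountableIntersection}, using Corollary \ref{lemma:InterWithFiniteBallsKeepReach} as the base case, just as that corollary reduced the finite case to the single-ball Lemma \ref{lemma:InterWithBallKeepReach}. Since $\sigma$ is compact, it is separable, so it admits a countable dense subset $D = \{z_1, z_2, \ldots\}$. My first step is to show that the balls centred at $D$ already cut out $\Bset(\sigma,\alpha)$: if a point $x$ satisfies $\|x - z_k\| \le \alpha$ for every $z_k \in D$, then by continuity of $z \mapsto \|x-z\|$ and density of $D$ in $\sigma$ one gets $\|x-z\| \le \alpha$ for all $z \in \sigma$. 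Hence $\Bset(D,\alpha) = \Bset(\sigma,\alpha)$ and therefore $A \cap \Bset(D,\alpha) = A \cap \Bset(\sigma,\alpha)$.

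Next I would exhaust $D$ by finite sets. Setting $\sigma_n = \{z_1, \ldots, z_n\}$ and $A_n = A \cap \Bset(\sigma_n,\alpha)$, each $A_n$ is compact (a compact set intersected with finitely many closed balls), the sequence $(A_n)_{n \in \N}$ is decreasing for inclusion, and $\bigcap_{n \in \N} A_n = A \cap \Bset(D,\alpha) = A \cap \Bset(\sigma,\alpha)$. Since $A \cap \Bset(\sigma,\alpha) \subset A_n$ and the left-hand side is non-empty by hypothesis, every $A_n$ is non-empty. The hypotheses of Corollary \ref{lemma:InterWithFiniteBallsKeepReach} --- namely $0 \le \alpha < \Reach{A}$ and $A \cap \Bset(\sigma_n,\alpha) \neq \emptyset$ --- are thus met for each $n$, and the corollary gives $\Reach{A} \le \Reach{A_n}$.

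It then remains to pass to the limit. Let $r$ be any non-negative real number with $r \le \Reach{A}$ (any non-negative real, should $\Reach{A}$ be infinite). The previous step gives $r \le \Reach{A} \le \Reach{A_n}$ for every $n$, so Lemma \ref{lemma:ReachIsPreservedByCountableIntersection} applied to the decreasing sequence $(A_n)_{n \in \N}$ yields $r \le \Reach{\bigcap_{n \in \N} A_n} = \Reach{A \cap \Bset(\sigma,\alpha)}$. As this holds for every such $r$, taking the supremum over $r$ gives $\Reach{A} \le \Reach{A \cap \Bset(\sigma,\alpha)}$, which also covers the degenerate case $\Reach{A} = +\infty$.

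I expect the only genuinely delicate point to be the identity $\Bset(D,\alpha) = \Bset(\sigma,\alpha)$, which is precisely where compactness of $\sigma$ enters (through separability and the continuity of the distance function); once the statement has been recast as a decreasing countable intersection, the milestone lemma and the finite corollary carry out the rest, so no new geometric estimate is required.
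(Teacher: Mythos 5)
Your proposal is correct and follows essentially the same route as the paper's own proof: take a countable dense subset of the compact set $\sigma$, apply Corollary \ref{lemma:InterWithFiniteBallsKeepReach} to the finite truncations, pass to the limit with Lemma \ref{lemma:ReachIsPreservedByCountableIntersection}, and identify $A \cap \Bset(D,\alpha)$ with $A \cap \Bset(\sigma,\alpha)$ via density and continuity of the distance (the paper argues this last identity contrapositively, you argue it directly --- same content). If anything, your final step is slightly more careful than the paper's, which invokes the countable-intersection lemma only with $r=\alpha$, whereas taking $r$ up to $\Reach{A}$ as you do is what the stated inequality $\Reach{A} \leq \Reach{A \cap \Bset(\sigma,\alpha)}$ actually requires.
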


\begin{proof}
Consider a sequence $\{z_i\}_{i\in \N}$ which is dense in $\sigma$.
Lemma~\ref{lemma:InterWithFiniteBallsKeepReach} implies that for
all $n \geq 0$, we have $\Reach{A \cap \bigcap_{i=1}^n
  \Ball{z_i}{\alpha}} \geq \alpha$. Applying Lemma~\ref{lemma:ReachIsPreservedByCountableIntersection} we get that
$\Reach{ A \cap \bigcap_{i \in \N} \Ball{z_i}{\alpha} } \geq \alpha$.
We claim that $A \cap \bigcap_{i \in \N} \Ball{z_i}{\alpha} = A \cap
\bigcap_{z\in \sigma}\Ball{z}{\alpha}$. One direction is trivial. If
$a \in A \cap \bigcap_{z\in \sigma}\Ball{z}{\alpha}$, then $a \in A
\cap \bigcap_{i \in \N} \Ball{z_i}{\alpha}$. For the other direction,
take $a \in A \cap \bigcap_{i\in \N} \Ball{z_i}{\alpha}$ and let us
prove that $\forall z \in \sigma, \|z-a\| \leq \alpha$. Assume, by
contradiction, that for some $z \in \sigma$, one has $\|z-a\| >
\alpha$. Then, there is $i\in \N$ such that $\|z - z_i\| < \|z-a\| -
\alpha$, yielding $\|z_i - a \| \geq \|z-a\| - \|z-z_i\| > \alpha$,
which is impossible. We have just shown that $a \in A \cap
\bigcap_{z\in \sigma}\Ball{z}{\alpha}$.
\end{proof}


\section{The restricted \v Cech complex}
\label{section:restricted-cech}


Given a subset $A \subset \R^d$, a finite point set $P$ and a real
number $\alpha \geq 0$, let us define the \v Cech complex of $P$ at
scale $\alpha$, $\RCech{A}{P,\alpha}$, restricted by $A$ as the set of
simplices spanned by points in $P$ that fit in a ball of radius
$\alpha$ whose center belongs to $A$. Equivalently,
$\RCech{A}{P,\alpha} = \Nerve{ \{ A \cap \Ball{p}{\alpha} \mid p\in P
  \}}$. In this section, we give conditions under which $A$ and
$\RCech{A}{P,\alpha}$ are homotopy equivalent
(Theorem~\ref{theorem:RestrictedCechCaptureHomotopyType}).  Recall
that $\CapBalls{\sigma}{\alpha} = \bigcap_{z \in \sigma} B(z,\alpha)$
is the common intersection of balls with radius $\alpha$ centered at
$\sigma$. In the proof, we will argue that $A \cap
  \CapBalls{\sigma}{\alpha}$ is either empty or contractible, whenever
  $0 \leq \alpha < \Reach{A}$. This argument is encapsulated in Lemma
  \ref{lemma:topological} and follows from Lemma \ref{lemma:ReachLargerThanRadiusThenContractible}. Let $\Radius{X}$ designate
the radius of the smallest ball enclosing the compact set $X$.

\begin{lemma}\label{lemma:ReachLargerThanRadiusThenContractible}
If $X \subset \R^d$ is a non-empty compact set with
$\Radius{X}<\Reach{X}$, then $X$ is contractible.
\end{lemma}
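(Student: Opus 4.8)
The plan is to contract $X$ onto the nearest point in $X$ of the center of its smallest enclosing ball, by pushing each point of $X$ along the straight segment to that center while reprojecting onto $X$ at every instant. The hypothesis $\Radius{X}<\Reach{X}$ is exactly what keeps these segments inside the region where the nearest-point projection is available.

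First I would set $\rho = \Radius{X}$ and let $c = \Center{X}$ be the center of the smallest ball enclosing $X$, so that $X \subset \Ball{c}{\rho}$ and hence $\|x-c\| \leq \rho$ for every $x \in X$. I would then recall the classical property of sets of positive reach: since $\rho < \Reach{X}$, the nearest-point projection $\pi_X$ onto $X$ is well-defined and continuous on the open tubular neighborhood $U = \{\, y \in \R^d \mid d(y,X) < \Reach{X} \,\}$. This is the only external fact the argument needs.

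Next I would define the candidate contraction $H \colon X \times [0,1] \to X$ by $H(x,t) = \pi_X\bigl((1-t)\,x + t\,c\bigr)$. The essential point to verify is that the straight segment from $x \in X$ to $c$ never leaves $U$, so that $\pi_X$ may legitimately be applied along it. For any $t \in [0,1]$ the point $(1-t)x + tc$ lies at distance $t\|x-c\|$ from $x \in X$, whence
$$
d\bigl((1-t)x + tc,\, X\bigr) \;\leq\; t\,\|x-c\| \;\leq\; \rho \;<\; \Reach{X}.
$$
This places the whole segment inside $U$; I expect this distance estimate, together with the invocation of the continuity of $\pi_X$, to be the only genuinely delicate ingredient of the proof.

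Finally I would read off the homotopy properties. Continuity of $H$ follows from the continuity of $(x,t) \mapsto (1-t)x + tc$ and of $\pi_X$ on $U$. At $t=0$ we have $H(x,0) = \pi_X(x) = x$ since $x \in X$, and at $t=1$ we have $H(x,1) = \pi_X(c)$, a single point independent of $x$. Thus $H$ is a deformation of the identity of $X$ to a constant map through maps into $X$, which shows that $X$ is contractible. Everything beyond the distance estimate and the standard continuity of $\pi_X$ is routine.
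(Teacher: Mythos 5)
Your proposal is correct and follows essentially the same route as the paper: both contract $X$ via $H(x,t)=\pi_X\bigl((1-t)x+tc\bigr)$ with $c$ the center of the smallest enclosing ball, and both justify the construction by the distance estimate $d\bigl((1-t)x+tc,\,X\bigr)\leq \|x-c\|\leq \Radius{X}<\Reach{X}$. The only cosmetic difference is that the paper invokes Federer's quantitative bound (the projection $\pi_X$ is $\frac{\Reach{X}}{\Reach{X}-r}$-Lipschitz at distance less than $r<\Reach{X}$ from $X$) to conclude $H$ is Lipschitz, whereas you use the same classical fact in its weaker form of mere continuity, which suffices for contractibility.
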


\begin{figure}[htb]
  \def\svgwidth{0.35\linewidth}
  \centering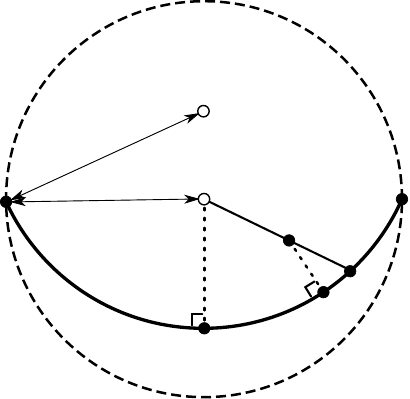
  \caption{Notation for the proof of Lemma~\ref{lemma:ReachLargerThanRadiusThenContractible}.
    \label{fig:proof-reach-larger-than-radius-then-contractible}
  }
\end{figure}


\begin{proof}
We recall that for every point $m$ such that $d(m,X) < \Reach{X}$
there exists a unique point of $X$ closest to $m$, which we denote by
$\pi_X(m)$. Furthermore, we know from \cite[page 435]{federer-59} that
for $0 < r < \Reach{X}$ the projection map $\pi_X$ onto $X$ is
Lipschitz for points at distance less than $r$ from $X$. Denote by $c$
the center of the smallest ball enclosing $X$; see Figure
\ref{fig:proof-reach-larger-than-radius-then-contractible}.  If
$x\in X$ and $t \in [0,1]$, one has
$$ d( (1-t) x +t c, X) ~~\leq~~ \| (1-t) x + t c - x\| ~~\leq~~ \|c-x\| ~~\leq~~
\Radius{X} ~~<~~ \Reach{X}.$$
Thus, the map $H : [0,1] \times X
\rightarrow X$ defined by $H(t,x) = \pi_X((1-t) x +t c)$ is Lipschitz
and defines a deformation retraction of $X$ onto $\{\pi_X(c) \}$.
\end{proof}

We deduce immediately the following lemma. Besides being useful for
proving Theorem~\ref{theorem:RestrictedCechCaptureHomotopyType}, it
will turn out to be a key tool in Section
\ref{section:collapsing-restricted-cech}.

\begin{lemma}
  \label{lemma:topological}
  Let $A$ be a compact set of $\R^d$ and $\alpha$ a real number such
  $0 \leq \alpha < \Reach{A}$. For all non-empty compact subsets $\sigma
  \subset \R^d$, the intersection $A \cap \Bset(\sigma,\alpha)$ is
  either empty or contractible.
\end{lemma}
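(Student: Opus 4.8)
The plan is to recognize this lemma as an immediate consequence of the contractibility criterion established in Lemma \ref{lemma:ReachLargerThanRadiusThenContractible} together with the reach-preservation result of Lemma \ref{lemma:InterWithBallsCollectionKeepReach}. Set $X = A \cap \Bset(\sigma,\alpha)$. If $X = \emptyset$ there is nothing to prove, so I would assume $X \neq \emptyset$. Note first that $X$ is compact, being the intersection of the compact set $A$ with the closed set $\Bset(\sigma,\alpha)$. To invoke Lemma \ref{lemma:ReachLargerThanRadiusThenContractible} and conclude that $X$ is contractible, it then suffices to verify the single inequality $\Radius{X} < \Reach{X}$.

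I would obtain this inequality by sandwiching both quantities around $\alpha$. For the radius: since $\sigma \neq \emptyset$, fix any $z_0 \in \sigma$; then $X \subset \Bset(\sigma,\alpha) \subset \Ball{z_0}{\alpha}$, so the smallest enclosing ball of $X$ has radius at most $\alpha$, i.e. $\Radius{X} \leq \alpha$. For the reach: the hypotheses $0 \leq \alpha < \Reach{A}$, with $\sigma$ compact and non-empty and $A \cap \Bset(\sigma,\alpha) \neq \emptyset$, are exactly those of Lemma \ref{lemma:InterWithBallsCollectionKeepReach}, which yields $\Reach{A} \leq \Reach{X}$.

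Stringing these together gives $\Radius{X} \leq \alpha < \Reach{A} \leq \Reach{X}$, where the strict middle inequality is the one supplied by hypothesis. Hence $\Radius{X} < \Reach{X}$, and Lemma \ref{lemma:ReachLargerThanRadiusThenContractible} applies to show that $X$ is contractible, which completes the proof.

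Since every hard step has already been discharged in Section \ref{section:restricted-reach}, I expect essentially no obstacle here: the only points to watch are that the reach bound genuinely requires the non-emptiness of $A \cap \Bset(\sigma,\alpha)$ (the case we have isolated) and that $\sigma$ be non-empty (needed both to bound the radius via a single ball and as a hypothesis of Lemma \ref{lemma:InterWithBallsCollectionKeepReach}). The boundary case $\alpha = 0$ causes no trouble, as then $X$ reduces to at most a single point and $\Radius{X} = 0$ lies strictly below the reach.
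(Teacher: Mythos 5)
Your proof is correct and is essentially identical to the paper's: both establish the chain $\Radius{A \cap \Bset(\sigma,\alpha)} \leq \alpha < \Reach{A} \leq \Reach{A \cap \Bset(\sigma,\alpha)}$ via Lemma \ref{lemma:InterWithBallsCollectionKeepReach} and then conclude contractibility from Lemma \ref{lemma:ReachLargerThanRadiusThenContractible}. The extra details you supply (compactness of the intersection, the single-ball bound on the radius, the $\alpha = 0$ case) are fine but already implicit in the paper's two-line argument.
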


\begin{proof}
Suppose $A \cap \CapBalls{\sigma}{\alpha} \neq \emptyset$. By
Lemma~\ref{lemma:InterWithBallsCollectionKeepReach},
$$\Radius{A \cap
  \CapBalls{\sigma}{\alpha}} ~~\leq~~ \alpha ~~<~~ \Reach{A} ~~\leq~~ \Reach{A \cap
  \CapBalls{\sigma}{\alpha}}.$$
By Lemma~\ref{lemma:ReachLargerThanRadiusThenContractible}, $A \cap
\CapBalls{\sigma}{\alpha}$ is contractible.
\end{proof}

This lemma can be seen as a variant of Lemma~7 in
\cite{amenta1999surface}, Proposition 12 in
\cite{boissonnat2001natural} and the local reach lemma in \cite{spm07}
which all say that if $A$ is a $k$-manifold that intersects a ball $B$
with radius $\alpha <\Reach{A}$, then $A \cap B$ is a topological
$k$-ball. 

\begin{theorem}\label{theorem:RestrictedCechCaptureHomotopyType}
Let $A \subset \R^d$ be a compact set, $P \subset \R^d$ a finite point
set and $\alpha$ a real number such that $0 \leq \alpha < \Reach{A}$
and $A \subset \Offset{P}{\alpha}$. Then, $\RCech{A}{P,\alpha}$ and
$A$ have the same homotopy type.
\end{theorem}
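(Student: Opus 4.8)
The plan is to apply the Nerve Lemma (Lemma \ref{lemma:nerve}) to the covering of $A$ by the sets $C_p = A \cap \Ball{p}{\alpha}$, $p \in P$, since by definition $\RCech{A}{P,\alpha} = \Nerve{\{C_p \mid p \in P\}}$. First I would dispatch the hypotheses of the Nerve Lemma that do not involve intersections. The family $\{C_p \mid p \in P\}$ is finite because $P$ is finite, and each $C_p$ is closed (indeed compact) as the intersection of the compact set $A$ with a closed ball. It is a covering of $A$ because the assumption $A \subset \Offset{P}{\alpha} = \bigcup_{p \in P} \Ball{p}{\alpha}$ gives $A = \bigcup_{p \in P}(A \cap \Ball{p}{\alpha})$.

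It then remains to show that for every nonempty $\sigma \subset P$ the common intersection $\bigcap_{p \in \sigma} C_p$ is either empty or contractible. The key observation is that $\bigcap_{p \in \sigma} C_p = A \cap \bigcap_{p \in \sigma} \Ball{p}{\alpha} = A \cap \CapBalls{\sigma}{\alpha}$, which is exactly the kind of set studied in Section \ref{section:restricted-reach}. Assume this set, which I will call $A'$, is nonempty. Since $\sigma$ is finite and $0 \leq \alpha < \Reach{A}$, Corollary \ref{lemma:InterWithFiniteBallsKeepReach} applies and yields $\Reach{A'} \geq \Reach{A} > \alpha$. Thus $A'$ is a nonempty compact set whose reach strictly exceeds $\alpha$, and which is contained in the ball $\Ball{z}{\alpha}$ for any chosen $z \in \sigma$.

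The heart of the argument is to deduce that such an $A'$ is contractible, and this is where the metric hypothesis is converted into a topological conclusion. I would invoke the fundamental property of sets of positive reach: the nearest-point projection $\pi_{A'}$ is well-defined and continuous on the offset $\Offset{A'}{r}$ for every $r < \Reach{A'}$. Fixing $z \in \sigma$, define $H : A' \times [0,1] \to A'$ by $H(a,t) = \pi_{A'}\bigl((1-t)a + tz\bigr)$. Since $a \in A' \subset \Ball{z}{\alpha}$, the point $(1-t)a + tz$ lies at distance at most $t\|a - z\| \leq \alpha < \Reach{A'}$ from $A'$, so $\pi_{A'}$ is defined along the whole segment and $H$ is continuous. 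As $H(\cdot,0)$ is the identity on $A'$ and $H(\cdot,1)$ is the constant map $a \mapsto \pi_{A'}(z)$, the map $H$ contracts $A'$ to a point. (Using instead the center $\Center{A'}$ of the smallest enclosing ball, whose radius $\Radius{A'} \leq \alpha$, would work identically.)

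Feeding this into the Nerve Lemma gives $\RCech{A}{P,\alpha} = \Nerve{\{C_p \mid p \in P\}} \simeq A$, as wanted. The main obstacle is the contractibility step: the entire argument hinges on the reach bound from Corollary \ref{lemma:InterWithFiniteBallsKeepReach}, which is what guarantees that the projection used to build the contraction is well-defined. Without the fact that the reach can only increase under intersection with balls of radius $\alpha < \Reach{A}$, the local intersections $A \cap \CapBalls{\sigma}{\alpha}$ could be topologically complicated and the hypotheses of the Nerve Lemma would fail — so the work done in Section \ref{section:restricted-reach} is precisely what makes this theorem go through.
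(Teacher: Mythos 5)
Your proof is correct and follows essentially the same route as the paper's: apply the Nerve Lemma to the covering $\{A \cap \Ball{p}{\alpha} \mid p \in P\}$, use the reach-monotonicity results of Section \ref{section:restricted-reach} to get $\Reach{A \cap \CapBalls{\sigma}{\alpha}} \geq \Reach{A} > \alpha$, and contract via the nearest-point projection along segments toward a point within the reach (the paper's Lemma \ref{lemma:ReachLargerThanRadiusThenContractible} contracts toward the smallest enclosing ball's center, you toward a ball center $z \in \sigma$ --- an immaterial variation you yourself note). Your use of the finite-intersection Corollary \ref{lemma:InterWithFiniteBallsKeepReach} instead of the compact-set Lemma \ref{lemma:InterWithBallsCollectionKeepReach} is a harmless shortcut here, since $\sigma \subset P$ is indeed finite; the paper only needs the compact version later, in Section \ref{section:collapsing-restricted-cech}.
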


\begin{proof}
  Since $A \subset \Offset{P}{\alpha}$, clearly $A = \bigcup_{p \in P}
  (A \cap B(p,\alpha))$. By Lemma~\ref{lemma:topological}, for
  all $\emptyset \neq \sigma \subset P$, the intersection $\bigcap_{z
    \in \sigma}(A \cap B(z,\alpha))$ is either empty or
  contractible. We conclude by applying the Nerve Lemma to the
  collection $\{A \cap B(p,\alpha) \mid p \in P\}$.
\end{proof}


\section{Restricting the \v Cech complex by collapses}
\label{section:restricting-the-cech}


In this section, we state our second theorem (horizontal brown
  arrow in Figure \ref{fig:diagram-results}). The theorem describes
  conditions under which there exists a sequence of collapses that
  transforms $\Cech{P,\alpha}$ into its restricted version
  $\RCech{A}{P,\alpha}$. It can be seen as a combinatorial version of
  Lemma~\ref{lemma:CechReconstruit} which says that, under the same
  hypotheses, there is a deformation retraction of
  $\Offset{P}{\alpha}$ onto $A$. Instrumental in proving the theorem,
  we need several facts about the distance between a collection of
  balls and a shape $A$. These facts are formalized in Lemma
\ref{lemma:UnicityOfClosesPointInBallsIntersection}. As before, we let
$\CapBalls{\sigma}{\alpha}$ denote the common intersection of balls
with radius $\alpha$ centered at $\sigma$ and by convention, we set
$d(A,\emptyset) = +\infty$. Hence, when we write that
$d(A,\CapBalls{\sigma}{\alpha}) = t$ for some $t \in \R$, this implies
implicitly that $\CapBalls{\sigma}{\alpha} \neq \emptyset$.

\begin{lemma}\label{lemma:UnicityOfClosesPointInBallsIntersection}
Let $A \subset \R^d$ be a compact set, $\sigma \subset \R^d$ a finite
set and $\alpha \geq 0$ such that $d(A,\CapBalls{\sigma}{\alpha}) = t$ for some $t \in \R$. If $0 < t < \Reach{A} -
\alpha$, we have the following properties (see Figure \ref{fig:proof-restricting-the-cech}, left):
\begin{itemize}\denselist
\item There exists a unique point $x \in \CapBalls{\sigma}{\alpha}$ whose distance to $A$ is $t$ ;
\item The set $\sigma_0 = \{ p \in \sigma \mid x \in \partial B(p,\alpha) \}$ is
  non-empty ;
\item $d(A,\CapBalls{\sigma_0}{\alpha}) = t$.
\end{itemize}
\end{lemma}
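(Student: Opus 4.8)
The plan is to locate the required point $x$ as the point of the convex set $C:=\CapBalls{\sigma}{\alpha}$ nearest to $A$, and to pin down its position with a single auxiliary ball built from the reach of $A$. Write $R=\Reach{A}$. Since $d(A,C)=t$ is finite, $C$ is a nonempty compact convex set, so $d(\cdot,A)$ attains its minimum $t$ on $C$ at some point $x$; as $t<R$, the nearest point $a=\pi_A(x)$ is unique and $u:=(x-a)/t$ is a unit vector. I would first observe that $x=\pi_C(a)$, the point of $C$ nearest to $a$: indeed $d(a,C)\le\|a-x\|=t$, while $\|a-y\|\ge d(y,A)\ge t$ for every $y\in C$, so the minimum $d(a,C)=t$ is realized at $x$. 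The optimality condition for projection onto the convex set $C$ then reads $\langle u,y-x\rangle\ge 0$ for all $y\in C$, i.e. $-u$ lies in the normal cone $N_C(x)$.

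From this I read off the second bullet immediately: if $\sigma_0=\emptyset$ then $x$ is interior to every ball, hence $x\in\operatorname{int}C$ and $N_C(x)=\{0\}$, contradicting $\|u\|=1$; so $\sigma_0\neq\emptyset$. In the full-dimensional case the normal cone of the ball intersection is generated by the active outward normals, giving $u=\sum_{p\in\sigma_0}\lambda_p\,(p-x)/\alpha$ with $\lambda_p\ge 0$; since each $(p-x)/\alpha$ and $u$ are unit vectors, the triangle inequality forces $\Lambda:=\sum_p\lambda_p\ge 1$.

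The heart of the argument is a single ball. Set $c^\ast:=x+(R-t)\,u=a+Ru$. Because $u$ is a normal at $a$ and $\|c^\ast-a\|=R$, the reach of $A$ gives $d(c^\ast,A)=R$, so $\OpenBall{c^\ast}{R}\cap A=\emptyset$ and therefore $d(y,A)\ge R-\|y-c^\ast\|$ whenever $\|y-c^\ast\|\le R$. I would then prove the containment $C_0:=\CapBalls{\sigma_0}{\alpha}\subseteq\Ball{c^\ast}{R-t}$, with strict interiority off $x$. For $y\in C_0$ and $p\in\sigma_0$, expanding $\|y-p\|^2\le\alpha^2=\|x-p\|^2$ yields $\langle y-x,(p-x)/\alpha\rangle\ge\|y-x\|^2/(2\alpha)$; combining with the representation of $u$ and $\Lambda\ge 1$ gives $\langle y-x,u\rangle\ge\|y-x\|^2/(2\alpha)$, whence
$$\|y-c^\ast\|^2=\|y-x\|^2-2(R-t)\langle y-x,u\rangle+(R-t)^2\le (R-t)^2-\Big(\tfrac{R-t}{\alpha}-1\Big)\|y-x\|^2.$$
Here the hypothesis $t<R-\alpha$ enters decisively: $\alpha<R-t$ makes the coefficient of $\|y-x\|^2$ negative, so $\|y-c^\ast\|\le R-t$, with equality only when $y=x$.

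Combining the two inequalities settles all three bullets at once: every $y\in C_0$ satisfies $d(y,A)\ge R-\|y-c^\ast\|\ge t$, with $d(y,A)>t$ unless $y=x$. Since $C\subseteq C_0$, this shows both that $x$ is the unique point of $C$ at distance $t$ from $A$ (first bullet) and that $\inf_{C_0}d(\cdot,A)=t$, i.e. $d(A,\CapBalls{\sigma_0}{\alpha})=t$ (third bullet, the reverse inequality $\le t$ being immediate from $x\in C\subseteq C_0$). The main obstacle I anticipate is the normal-cone representation in degenerate, lower-dimensional configurations of $C$ (e.g. internally tangent balls with empty interior), where the constraint qualification can fail; there the conclusions still hold, but typically for trivial reasons, so I would dispatch those cases separately, or by a perturbation/continuity argument, rather than through the representation above. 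A secondary point needing care is the justification that $\pi_A(c^\ast)=a$ and $d(c^\ast,A)=R$, which rests on the standard fact that the nearest-point map stays constant along a normal ray up to distance $R$.
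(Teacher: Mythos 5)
Your proof is correct, and although it follows the same skeleton as the paper's --- realize $x$ as the minimizer of $d(\cdot,A)$ on the convex set $\CapBalls{\sigma}{\alpha}$, extract a first-order optimality condition at $x$ in terms of the active set $\sigma_0$, and trap $\CapBalls{\sigma_0}{\alpha}$ inside a single auxiliary ball built from the reach --- the execution differs in two genuine ways. The paper's auxiliary ball is $\Ball{z}{\alpha}$ with $z = x + \alpha\frac{x-a}{\|x-a\|}$, tangent at $x$ to the offset $\Offset{A}{t}$, and it invokes $\Reach{\Offset{A}{t}} = \Reach{A} - t > \alpha$; yours is the reach ball $\Ball{c^\ast}{R}$ tangent to $A$ at $a$, with the containment $\CapBalls{\sigma_0}{\alpha} \subset \Ball{c^\ast}{R-t}$. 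Second, where the paper asserts the cone inclusion $\bigcap_{p\in\sigma_0} H_x(p) \subset H_x(z)$ with \enquote{it is easy to check} and then converts it into a ball containment by substituting $\delta = \|u\|^2/2$, you make the hidden multiplier structure explicit: $u = \sum_p \lambda_p (p-x)/\alpha$ with $\Lambda = \sum_p \lambda_p \geq 1$ by the triangle inequality. This is precisely what makes the paper's quantified implication ($\min_p \scalprod{p-x}{u} \geq \delta \implies \scalprod{z-x}{u} \geq \delta$) true, so your argument can be read as an unpacking of the paper's compressed step. What your route buys is the quantitative estimate $\|y-c^\ast\|^2 \leq (R-t)^2 - \bigl(\tfrac{R-t}{\alpha}-1\bigr)\|y-x\|^2$, which makes the role of the hypothesis $t < \Reach{A} - \alpha$ completely transparent and delivers uniqueness, strictness off $x$, and the third bullet in one stroke, whereas the paper splits these between the tangency argument for $z$ and the containment argument.

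Your one deferred point --- the normal-cone representation failing without a constraint qualification --- resolves even more cleanly than you anticipate, and it is worth noting that the paper's \enquote{easy to check} inclusion carries exactly the same implicit caveat (it can fail verbatim when $\CapBalls{\sigma}{\alpha}$ is a singleton). Because all balls have equal radius $\alpha$, the parallelogram law gives, for the midpoint $m$ of any two distinct points $y_1, y_2 \in \CapBalls{\sigma}{\alpha}$, the bound $\|m-p\|^2 \leq \alpha^2 - \tfrac{1}{4}\|y_1-y_2\|^2 < \alpha^2$ for every $p \in \sigma$; hence either a Slater point exists (so the intersection is full-dimensional and the representation is valid) or $\CapBalls{\sigma}{\alpha} = \{x\}$. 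In the singleton case the first two bullets are immediate, and for the third: since the inactive constraints are strict near $x$, one has $\CapBalls{\sigma_0}{\alpha} \cap \OpenBall{x}{\epsilon} \subset \CapBalls{\sigma}{\alpha} = \{x\}$ for $\epsilon$ small, and convexity of $\CapBalls{\sigma_0}{\alpha}$ then forces $\CapBalls{\sigma_0}{\alpha} = \{x\}$, whence $d(A,\CapBalls{\sigma_0}{\alpha}) = t$. So the degenerate case is dispatched directly, with no perturbation or continuity argument needed.
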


\begin{figure}[htb]
  \def\svgwidth{0.9\linewidth}
  \centering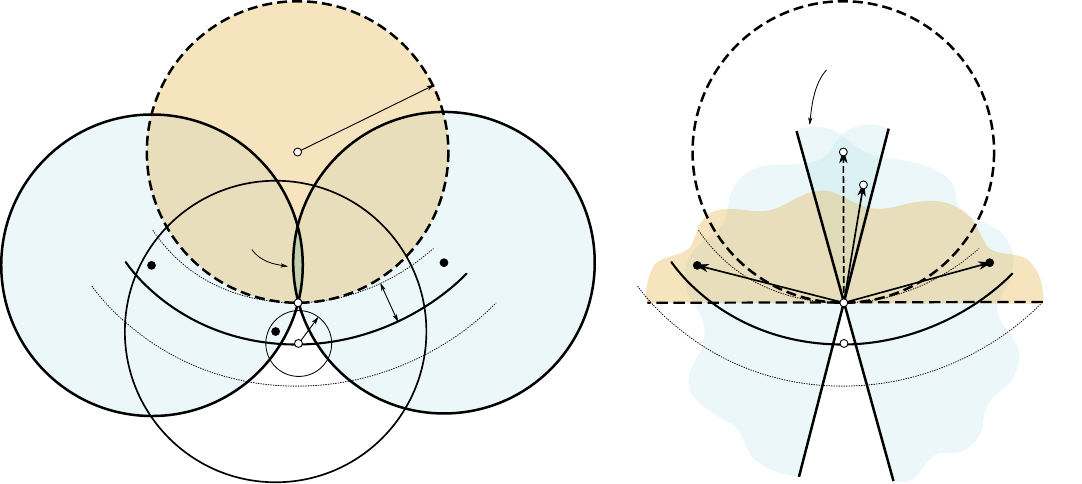
  \caption{ Notation for the proofs of
    Lemma~\ref{lemma:UnicityOfClosesPointInBallsIntersection} and
    Theorem \ref{theorem:CechCollapsesOnRestrictedCech}. Black dots
    belong to $\sigma$ and the ball $B(z,\alpha)$ is instrumental in
    proving
    Lemma~\ref{lemma:UnicityOfClosesPointInBallsIntersection}. We show
    that $\CapBalls{\sigma_0}{\alpha} \subset B(z,\alpha)$ (on the
    left) using the fact that $\bigcap_{p \in \sigma_0} H(p) \subset
    H(z)$ (on the right), where $H(m)$ designates the half-space
    which contains $B(m,\|x-m\|)$ and whose boundary passes
      through $x$.
    \label{fig:proof-restricting-the-cech}
  }
\end{figure}

\begin{proof}
Since $A$ and $\CapBalls{\sigma}{\alpha}$ are both compact sets, there
exists at least one pair of points $(a,x) \in A \times
\CapBalls{\sigma}{\alpha}$ such that $\|a-x\| = t$; see Figure
\ref{fig:proof-restricting-the-cech}. By definition, $x$ belongs to
$B(p,\alpha)$ for all $p \in \sigma$. Since $x$ lies on the boundary
of $\CapBalls{\sigma}{\alpha}$, it lies on the boundary of
$\Ball{p}{\alpha}$ for at least one $p \in \sigma$, showing that
$\sigma_0 \neq \emptyset$.  By construction, $a$ is the point of
  $A$ closest to $x$ and, by Lemma
  \ref{lemma:ProjectionWithinReachDistance}, it is also the
  point of $A$ closest to point $z = x + \alpha \frac{x-a}{\|x-a\|}$. It
  follows that $d(A,\{z\}) = \|a-z\| = \|a-x\| + \|x-z\| = t+\alpha$ is realized by the pair of
  points $(a,z)$ and the distance $d(A,B(z,\alpha)) = t$ is realized
by the pair of points $(a,x)$. To prove that $x$ is the unique point
of $\CapBalls{\sigma}{\alpha}$ whose distance to $A$ is $t$, it
suffices to show that $\CapBalls{\sigma}{\alpha} \subset
\Ball{z}{\alpha}$. Actually, we will prove a stronger result, namely
that $\CapBalls{\sigma_0}{\alpha} \subset \Ball{z}{\alpha}$ which will
also imply the third item of the lemma, that is,
$d(A,\CapBalls{\sigma_0}{\alpha}) = t$.

Let us associate to every point $m \in \R^d$ the closed half-space
$H(m)$ whose boundary passes through $x$ and which contains the ball
$B(m,\|m-x\|)$:
$$
H(m) = \{ y \in \R^d , \, \scalprod{m-x}{y-x} \;\geq \; 0 \}.
$$
We establish the following four statements:
\begin{enumerate}[{(A)}]
\item $\bigcap_{p \in \sigma} B(p,\alpha) \subset H(z)$;
\item $\bigcap_{p \in \sigma} H(p) \subset H(z)$;
\item $\bigcap_{p \in \sigma_0} H(p) \subset H(z)$;
\item $\bigcap_{p \in \sigma_0} B(p,\alpha) \subset B(z,\alpha)$;
\end{enumerate}

To establish (A), we note that, by construction, $x$ is the point of
$\CapBalls{\sigma}{\alpha}$ whose distance to $a$ is smallest and
therefore $\CapBalls{\sigma}{\alpha} \cap B(a,t) = \{x\}$ from which
statement (A) follows by convexity of
$\CapBalls{\sigma}{\alpha}$. Indeed, if there were $y \in
\CapBalls{\sigma}{\alpha} \setminus H(z)$, then $y \neq x$ and the
segment $xy$ would intersect the interior of $B(a,t)$. But, this is
impossible since $xy$ is contained in $\CapBalls{\sigma}{\alpha}$ and
$\CapBalls{\sigma}{\alpha}$ does not intersect the interior of
$B(a,t)$.  To prove (A) $\Rightarrow$ (B), we apply to the two sets on
both sides of (A) an homothety with center $x$ and ratio $s$. Consider
the half-line with origin at $x$ and passing through $p$ and let $p_s$
be the point on this half-line whose distance to $x$ is $s \geq
0$. Clearly, the image of the left side is $\bigcap_{p \in \sigma}
B(p_s,s\alpha)$ and the image of the right side is $H(z)$.  We thus
get that $\bigcap_{p \in \sigma} B(p_s,s\alpha) \subset H(z)$ for all
$s \geq 0$. Taking the limit as $s$ tends to infinity (or
equivalently, taking the union of left sides for all values of $s$),
we get (B).  Statement (B) means that for all $y \in \R^d$, the
following implication holds:
$$
\min_{p \in \sigma} \scalprod{p-x}{y-x} \geq 0 \implies \scalprod{z-x}{y-x} \geq 0.
$$ Noting that if the above implication holds for all $y$ in a small
neighborhood of $x$, then it holds for all $y \in \R^d$, we deduce that (B)
$\Rightarrow$ (C). To prove (C) $\Rightarrow$ (D), we observe that
(C) implies that for all $y \in \bigcap_{p\in \sigma_0} H(p)$, the
distance between $y$ and the boundary of $H(z)$ is always larger than
or equal to the distance between $y$ and the boundary of $H(p)$ for
some $p \in \sigma_0$. Formally, this means that for all $u \in \R^d$
and all $\delta \geq 0$, the following implication holds:
$$ 
\min_{p \in \sigma_0} \scalprod{p-x}{u} \, \geq \, \delta \; \implies \;
\scalprod{z-x}{u} \, \geq \, \delta
$$ Plugging $\delta = \frac{\|u\|^2}{2}$ in the above implication and
noting that for all $m \in \sigma_0 \cup \{z\}$, the following inequality
$2\scalprod{m-x}{u} \geq \|u\|^2$ can be rewritten as $\|(m-x)-u\|^2
\leq \alpha^2$, we get that for all $u \in \R^d$, the following implication
holds:
$$
\max_{p \in \sigma_0} \|p - u - x\|^2 \, \leq \, \alpha^2  \; \implies
\; \| z - u - x\|^2 \, \leq \, \alpha^2.
$$ Equivalently, (D) holds and $\CapBalls{\sigma_0}{\alpha} \subset
\Ball{z}{\alpha}$, as required.
\end{proof}

\begin{theorem}\label{theorem:CechCollapsesOnRestrictedCech}
Let $\varepsilon \geq 0$, $\alpha \geq 0$, and $r \geq 0$. Consider a
compact set $A\subset \R^d$ with $\Reach{A} \geq r$. Let $P \subset
\R^d$ be a finite set such that $d_H(A,P) < \varepsilon$. There
exists a sequence of collapses from $\Cech{P,\alpha}$ to
$\RCech{A}{P,\alpha}$ whenever $\varepsilon$, $\alpha$ and $r$ satisfy
the following two conditions:
\begin{enumerate}[{\em (i)}]
\item $\sqrt{2}\alpha < r - \varepsilon$;
\item $r - \sqrt{(r-\varepsilon)^2-\alpha^2} < \alpha- \varepsilon$.
\end{enumerate}
In particular, for $\varepsilon < (3 - \sqrt{8}) r$ and $\alpha = (2 +
\sqrt{2})\varepsilon$, conditions {\em(i)} and {\em (ii)} are fulfilled.
\end{theorem}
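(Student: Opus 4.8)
The plan is to exhibit a discrete Morse matching (an acyclic pairing of simplices) on the set of simplices of $\Cech{P,\alpha}$ that do \emph{not} belong to $\RCech{A}{P,\alpha}$, and then to invoke the standard fact that a complex carrying an acyclic matching whose unmatched simplices form a subcomplex collapses onto that subcomplex. Note first that $\RCech{A}{P,\alpha}$ is indeed a subcomplex of $\Cech{P,\alpha}$, since $\sigma$ lies in the former exactly when $A\cap\CapBalls{\sigma}{\alpha}\neq\emptyset$, a condition inherited by faces. The natural bookkeeping quantity is the \emph{height} $t(\sigma)=d(A,\CapBalls{\sigma}{\alpha})$: the simplices of $\RCech{A}{P,\alpha}$ are those with $t(\sigma)=0$, and the simplices to be paired are exactly those with $t(\sigma)>0$.

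For such a simplex I first want to attach a canonical \emph{pivot} vertex. Condition~(i) is engineered so that Lemma~\ref{lemma:UnicityOfClosesPointInBallsIntersection} applies: I will show $0<t(\sigma)<\Reach{A}-\alpha$, yielding a \emph{unique} point $x=x(\sigma)\in\CapBalls{\sigma}{\alpha}$ realizing $t(\sigma)$, together with a non-empty active set $\sigma_0$. The key geometric estimate is the a~priori bound $t(\sigma)\le r-\sqrt{(r-\varepsilon)^2-\alpha^2}$. To obtain it, let $a\in A$ be the point closest to $x$ and consider the ball $\OpenBall{m}{r}$ with $m=a+r\frac{x-a}{t(\sigma)}$, tangent to $A$ at $a$ on the side of $x$; since $\Reach{A}\ge r$ its interior misses $A$, so $d(m,A)=r$ and every vertex $p\in\sigma$ satisfies $\|p-m\|\ge r-\varepsilon$ (from $d(p,A)\le\varepsilon$), while $\|x-p\|\le\alpha$. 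As $x$ is the point of the convex set $\CapBalls{\sigma}{\alpha}$ closest to $a$, the normal-cone optimality condition forces at least one active vertex to lie strictly behind $x$ along the ray from $a$ through $x$, and the two sphere constraints then give $\|x-m\|\ge\sqrt{(r-\varepsilon)^2-\alpha^2}$, i.e. the claimed bound. Condition~(i) makes $\sqrt{(r-\varepsilon)^2-\alpha^2}>\alpha$, whence $t(\sigma)<r-\alpha\le\Reach{A}-\alpha$ as needed; condition~(ii) upgrades this to $t(\sigma)<\alpha-\varepsilon$. Consequently a sample within $\varepsilon$ of $a$ lies within $t(\sigma)+\varepsilon<\alpha$ of $x$, so taking $q(x)$ to be the nearest point of $P$ to $x$ (ties broken by a fixed order) gives $x\in\OpenBall{q(x)}{\alpha}$, an \emph{inactive} pivot. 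Since $x$ depends only on $\sigma$, the pivot $q(\sigma):=q(x(\sigma))$ is well defined.

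I then define the matching by toggling the pivot, $V(\sigma)=\sigma\,\triangle\,\{q(\sigma)\}$. If $q(\sigma)\notin\sigma$, then $x\in\CapBalls{\sigma}{\alpha}\cap\OpenBall{q(\sigma)}{\alpha}$ shows $\sigma\cup\{q(\sigma)\}\in\Cech{P,\alpha}$ with the same closest point $x$, hence the same height $t(\sigma)>0$. If $q(\sigma)\in\sigma$, then, $q(\sigma)$ being inactive, deleting it changes neither $\sigma_0$ nor $x$ (here the third item of Lemma~\ref{lemma:UnicityOfClosesPointInBallsIntersection}, $d(A,\CapBalls{\sigma_0}{\alpha})=t(\sigma)$, is exactly what keeps $t$ fixed), so $V(\sigma)$ again has height $t(\sigma)>0$. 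In both cases $V(\sigma)\in\Cech{P,\alpha}\setminus\RCech{A}{P,\alpha}$, the value of $x$ and therefore of the pivot is preserved, and $V(V(\sigma))=\sigma$. Thus $V$ is a fixed-point-free involution differing by one vertex, i.e. a perfect matching on $\Cech{P,\alpha}\setminus\RCech{A}{P,\alpha}$ whose unmatched simplices are precisely $\RCech{A}{P,\alpha}$.

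The hard part will be acyclicity, which I expect to be the only genuine subtlety. Along any $V$-path the height $t$ is non-increasing, since the ``down'' steps merely drop a vertex and so enlarge $\CapBalls{\cdot}{\alpha}$. I claim it is in fact strictly decreasing: if an up-step $\sigma_i\mapsto\sigma_i\cup\{q\}$ followed by a down-step to some $\sigma_{i+1}\neq\sigma_i$ preserved the height, then $x(\sigma_{i+1})=x(\sigma_i)$ by uniqueness of the closest point, so $q(\sigma_{i+1})=q$; but $q\in\sigma_{i+1}$, which forces $\sigma_{i+1}$ to be matched \emph{downward} and hence cannot occur in the interior of a $V$-path. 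So no $V$-path closes up, the matching is acyclic, and $\Cech{P,\alpha}$ collapses onto $\RCech{A}{P,\alpha}$ (the collapses being realizable in order of decreasing $t$). Finally, the ``in particular'' clause is a substitution check: with $\alpha=(2+\sqrt2)\varepsilon$, condition~(i) becomes $\varepsilon<(3-\sqrt8)\,r$, and (ii) follows by the same routine algebra.
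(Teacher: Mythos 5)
Your proof is correct, and it reorganizes the argument in a genuinely different way from the paper's. The paper proves Theorem~\ref{theorem:CechCollapsesOnRestrictedCech} by a continuous sweep: it sets $K_t=\Nerve{\{\Offset{A}{t}\cap\Ball{p}{\alpha}\mid p\in P\}}$ and decreases $t$ from $+\infty$ to $0$; above the threshold $\beta=r-\sqrt{(r-\varepsilon)^2-\alpha^2}$ nothing changes (this is where it invokes Lemma~14 of \cite{socg10-convex} to place $\Conv{\sigma}\subset\Offset{A}{t}$), and below $\beta$ each generic event deletes the whole interval $[\sigma_{\min},\sigma_{\max}]$ as one elementary collapse, with simultaneous events handled by a perturbation argument deferred to the appendix. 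You replace the sweep by a static acyclic matching (toggling the pivot $q(x(\sigma))$) and replace the citation by a self-contained tangent-ball estimate; I checked that estimate and it is sound: since $t(\sigma)>0$ there is an active $p^*$ with $\scalprod{p^*-x}{a-x}\le 0$ (otherwise moving from $x$ toward $a$ stays in the convex set $\CapBalls{\sigma}{\alpha}$, contradicting minimality of $x$), and expanding $\|p^*-m\|^2$ with $\|p^*-x\|\le\alpha$, $\|p^*-m\|\ge r-\varepsilon$ and $\scalprod{p^*-x}{x-m}\le 0$ gives $r-t=\|x-m\|\ge\sqrt{(r-\varepsilon)^2-\alpha^2}$ --- in effect a cousin of the half-space computation inside the paper's own proof of Lemma~\ref{lemma:UnicityOfClosesPointInBallsIntersection}, and it delivers exactly the bound $t(\sigma)\le\beta<\min\{r-\alpha,\alpha-\varepsilon\}$ that both proofs need. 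Your involution and acyclicity checks go through: the up-partner has the same witness $x$ by uniqueness, the down-partner keeps height $t$ precisely by the third item of Lemma~\ref{lemma:UnicityOfClosesPointInBallsIntersection}, and a height-preserving continuation of a $V$-path lands on a simplex containing its own pivot, hence matched downward, so closed paths are impossible. Your route buys two things: genericity is never needed (simplices dying at the same height are matched independently, since acyclicity only uses per-simplex uniqueness), so the paper's appendix perturbation is subsumed; and the external dependence on \cite{socg10-convex} disappears. The cost is the black-box appeal to discrete Morse theory (an acyclic matching on $K\setminus L$ with $L$ a subcomplex yields $K\searrow L$), which is standard but external, whereas the paper's sweep exhibits the elementary collapses directly. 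One degenerate case you pass over deserves a sentence: if $q(\sigma)\in\sigma$ with $\sigma$ an edge, $V(\sigma)$ would be a vertex of positive height; this never occurs because condition~(ii) forces $\varepsilon<\alpha$, so $d(p,A)\le\varepsilon<\alpha$ puts every vertex of $\Cech{P,\alpha}$ already in $\RCech{A}{P,\alpha}$, consistently with your own computation, which would otherwise yield $d(A,\Ball{p}{\alpha})>0$ and contradict $d_H(A,P)\le\varepsilon$.
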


\begin{proof}
Letting $\beta = r - \sqrt{(r-\varepsilon)^2-\alpha^2}$, we observe
that condition (i) implies $\beta < r - \alpha$ and condition (ii) is
equivalent to $\beta < \alpha - \varepsilon$. For $t \geq 0$, we
define the simplicial complex $K_t = \Nerve{ \{ \Offset{A}{t} \cap
  \Ball{p}{\alpha} \mid p\in P \}}$.  Notice that $K_0 =
\RCech{A}{P,\alpha}$ and $K_{+\infty} = \Cech{P,\alpha}$.  Using the
fact that $K_t$ can equivalently be defined as $K_t = \{ \sigma
\subset P \mid d(A,\CapBalls{\sigma}{\alpha}) \leq t\}$, we deduce
that, as $t$ continuously decreases from $+\infty$ to 0, the complex
$K_t$ can only loose simplices and the set of simplices that disappear
at time $t$ is:
$$
\Delta_t = \{ \sigma \subset P \mid d(A, \CapBalls{\sigma}{\alpha}) = t \}.
$$


\paragraph{Generic case.} We first establish the theorem under the
following generic condition:
  \begin{enumerate}
  \item[$(\star)$] For all $s \in \R^+$, the set of simplices
    $\Delta_s$ is either empty or has a unique inclusion-minimal element.
  \end{enumerate}
At the end of the proof, we will explain what to do if the above
condition is not satisfied. Assuming we are in the generic case, we
proceed in two stages:

\medskip \noindent {\bf(a)} \hspace{2mm} First, we prove that $K_t$
does not change at all as $t$ decreases continuously from $+\infty$ to
$\beta$. In other words, $K_t = \Cech{P,t}$ for all $t \geq
\beta$. Note that this is equivalent to proving that for all non-empty
subsets $\sigma \subset P$ and all $t \geq \beta$,
$$
\bigcap_{p \in \sigma} \Ball{p}{\alpha} \neq \emptyset \iff 
\Offset{A}{t} \cap \bigcap_{p \in \sigma} \Ball{p}{\alpha} \neq \emptyset.
$$ One direction is trivial: if a point belongs to the intersection on
the right, then it belongs to the intersection on the left. If the
intersection on the left is non-empty, then it contains the center $z$
of the smallest ball enclosing $\sigma$ and $\Radius{\sigma} \leq
\alpha< \sqrt{2}\alpha < r - \varepsilon$. Lemma~14 in
  \cite{socg10-convex} states that if a subset $\sigma$ satisfies the
  following two conditions: (1) $\sigma \subset
  \Offset{A}{\varepsilon}$ and (2) $\Radius{\sigma} < r -
  \varepsilon$, then $\Conv{\sigma} \subset \Offset{A}{t}$ for all $t
  \geq \beta$. Since $z$ belongs to $\Conv{\sigma}$, it follows that
  $z$ also belongs to the $t$-offset $\Offset{A}{t}$ and therefore the
  intersection on the right is non-empty.

\medskip \noindent {\bf(b)} \hspace{2mm} Second, we prove that as $t$ decreases
continuously from $\beta$ to 0, the deletion of simplices $\Delta_t$
from $K_t$ is a collapse for all $t \in (0,\beta]$. Suppose $\Delta_t
\neq \emptyset$ for some $t \in (0,\beta]$ and let $\sigma_{\min}$ be
the unique inclusion-minimal element of $\Delta_t$. Since
$\sigma_{\min}$ disappears at time $t$, so does all its cofaces and it
follows that $\Delta_t$ is the set of cofaces of $\sigma_{\min}$.
Since $0 < t \leq \beta < r - \alpha \leq \Reach{A} - \alpha$,
Lemma~\ref{lemma:UnicityOfClosesPointInBallsIntersection} implies that
there exists a unique point $x \in \CapBalls{\sigma_{\min}}{\alpha}$
whose distance to $A$ is $t$; see Figure
\ref{fig:proof-restricting-the-cech}, left.  It is easy to see that
$\Delta_t$ has a unique inclusion-maximal element $\sigma_{\max} = \{
p \in P \mid x \in \Ball{p}{\alpha}\}$.  Thus, $\Delta_t$ consists of
all cofaces of $\sigma_{\min}$ and these cofaces are all faces of
$\sigma_{\max}$. To prove that removing $\Delta_t$ from $K_t$ is a
collapse, it suffices to establish that $\sigma_{\min} \neq
\sigma_{\max}$. By
Lemma~\ref{lemma:UnicityOfClosesPointInBallsIntersection}, we know
that $\sigma_0 = \{ p \in \sigma_{\min} \mid x \in \partial
B(p,\alpha)\}$ is non-empty and belongs to $\Delta_t$. By the choice
of $\sigma_{\min}$ as the minimal element of $\Delta_t$, we have
$\sigma_{\min} \subset \sigma_0$ and therefore $x$ lies on the
boundary of $B(p,\alpha)$ for all $p \in \sigma_{\min}$. Since $d(x,A)
= t \leq \beta < \Reach{A}$, there exists a unique point $a \in A$
such that $\|a-x\| = t$. Because $d_H(A,P) < \varepsilon$, we know
that there exists a point $q \in P$ such that $\|q-a\| \leq
\varepsilon$. Since $\|q-x\| \leq \|q-a\| + \|a-x\| \leq \varepsilon +
t \leq \varepsilon + \beta < \alpha$, we get that $x$ lies in the
interior of $B(q,\alpha)$. Therefore, $q$ belongs to $\sigma_{\max}$
but not to $\sigma_{\min}$. Hence, $\sigma_{\min} \neq \sigma_{\max}$.

\paragraph{Getting rid of the genericity assumption.}
We need first some definitions and notations. Given a collection of
maps $\xi_p : \R^+ \to \R^+$, one for each $p \in P$, we define the
simplicial complex
$$ K_t^\xi = \Nerve{ \{ \Offset{A}{\xi_p(t)} \cap \Ball{p}{\alpha}
  \mid p\in P \} }.$$ If each $\xi_p$ is an increasing continuous
bijection, the simplicial complex $K_t^\xi$ can only loose simplices
as $t$ continuously decreases from $+\infty$ to $0$. Precisely, the
set of simplices that disappear at time $t$ is:
$$
\Delta_t^\xi =
\{ \sigma \subset P \mid d(A, \CapBalls{\sigma}{\alpha} ) = \min_{p \in
  \sigma} \xi_p(t) \}.
$$ 
Given $\eta > 0$, we say that the map $\alpha : \R^+ \to \R^+$ is a
{\em standard $\eta$-perturbation} of the identity map if (1)
$\alpha$ is a continuous bijection; (2) $\alpha(0) = 0; (3) \lim_{p \to
  +\infty} \alpha(t) = +\infty$; (4) $t \leq \alpha(t) \leq t + \eta$
for all $t \in \R^+$. One can easily check that the composition of two
standard $\eta$-perturbations is a standard $(2\eta)$-perturbation.
Suppose now that each $\xi_p$ is a standard $\eta$-perturbation 
and notice that $K_0^\xi = \RCech{A}{P,\alpha}$ and
$K_{+\infty}^\xi = \Cech{P,\alpha}$.  By slightly adapting the first
part of the proof above, it is not difficult to establish that for
$\eta>0$ small enough, the simplicial complex $K_t^\xi$ only undergoes
collapses as $t$ continuously decreases from $+\infty$ to 0 under the
following generic condition:
\begin{enumerate}
\item[$(\star^\xi)$] For all $s \in \R^+$, the set of simplices
  $\Delta_s^\xi$ is either empty or has a unique inclusion-minimal
  element.
\end{enumerate}
We start by setting $\xi_p$ to the identity map for all $p \in P$.
If the generic condition $(\star^\xi)$ is not satisfied, we apply a
small perturbation to the maps $\xi_p$ so that after perturbation
the generic condition $(\star^\xi)$ is satisfied and each $\xi_p$ is
a standard $\eta$-perturbation. The construction
can be made so that $\eta>0$ is as small as desired and we can apply
our previous findings. For this, we proceed as
follows.  We say that two simplices $\sigma_1$ and $\sigma_2$ are {\em
  in conjunction at time $t$} if they are both inclusion-minimal
elements of $\Delta_t^\xi$ for some $t \in \R^+$. We say that $t$ is
an {\em event time} if $\Delta_t^\xi \neq \emptyset$. Consider two
simplices that are in conjunction at time $t$, say $\sigma_1$ and
$\sigma_2$. Suppose $q \in \sigma_1$ and $q \not \in
\sigma_2$. Consider an increasing continuous bijection $\psi : [0,1]
\to [0,1]$ that differs from identity only in a small neighborhood of
$t$ that does not include any other event times. Furthermore, we choose
$\psi$ such that $\psi(t) \geq t$. Replacing $\xi_q$ by $\xi_q \circ
\psi$ and leaving unchanged $\xi_p$ for all $p \in P \setminus
\{q\}$, we change the time at which $\sigma_1$ disappears while
keeping unchanged the time at which $\sigma_2$ disappears. After this
operation, $\sigma_1$ and $\sigma_2$ are not in conjunction
anymore. Furthermore, the operation does not create any new pair of
simplices in conjunction. By repeating this operation a finite number
of times, we thus get a new collection of maps $\xi_p$ as required.
\end{proof}


\section{Collapsing the restricted \v Cech complex}
\label{section:collapsing-restricted-cech}


In this section, we find conditions under which there is a sequence of
collapses transforming the restricted \v Cech complex
$\RCech{A}{P,\alpha}$ into the nerve of an $\alpha$-robust covering of
$A$. We define $\alpha$-robust coverings and state our result in
Section \ref{section:robust-coverings}. Our proof technique consists
in introducing a family of compact sets $\mathcal{D} = \{D_{p}(t) \mid
(p,t) \in P \times [0,1]\}$ and monitoring the evolution of its nerve as the
parameter $t$ increases continuously from $0$ to $1$. In Section
\ref{section:nerve-evolution}, we give some general conditions on
$\mathcal{D}$ that guarantee that the simplicial complex $K(t) =
\Nerve{ \{ D_p(t) \mid p \in P \}}$ only undergoes collapses as $t$
increases from 0 to 1. We believe that these conditions are
sufficiently general to be applied to other situations and therefore
are interesting in their own right. Armed with this tool, we establish
our third result in Section \ref{section:robust-coverings}, that is,
we find a family of compact sets $\mathcal{D}$ which enjoys the
properties required in Section \ref{section:nerve-evolution} and such
that $K(0) = \RCech{A}{P,\alpha}$ and $K(1)$ is isomorphic to the
nerve of an $\alpha$-robust covering of $A$.

\subsection{Evolving families of compact sets}
\label{section:nerve-evolution}

In this section, we present a tool that will be useful in the next
section for establishing 
Theorem~\ref{theorem:collapsing-towards-nice-triangulations}. Consider
a covering of a topological space and suppose this covering evolves
over time. We state conditions under which the evolution of the nerve
of this covering only undergoes collapses. Conditions are formulated
in a very general setting. We do not even need to endow the
topological space with a metric structure.  We only require the
topological space to be compact and $\operatorname{T}_1$
separable. Recall that a topological space $X$ is said to be {\em
  $\operatorname{T}_1$ separable} if for every pair of distinct points
$(a,b) \in X^2$, there exist two open sets $U_a$ and $U_b$ such that
$a \in U_a \setminus U_b$ and $b \in U_b \setminus U_a$. For instance,
metric spaces are $\operatorname{T}_1$ separable.

In Lemma~\ref{lemma:abstract-tool}, we will use the notion of
connectedness as defined in general topology: a topological space
({\em resp.} subspace) is {\em connected} if it cannot be represented as
the union of two disjoint non-empty open subsets ({\em resp.}
relatively open subsets).  Observe that if $X$ is a topological
$\operatorname{T}_1$ space, then for any point $a \in X$, the subspace
$X\setminus \{a\}$ is open. It follows that if $X$ is connected and
$X\setminus \{a\}$ is non-empty, then $\{a\}$ cannot be open and
$\{a\}^\circ= \emptyset$. Indeed, if $\{a\}$ were open, then $X = (X
\setminus \{a\}) \cup \{a\}$ would be expressed as the union of two
disjoint non-empty open subsets, a contradiction.

Before stating our lemma, let us introduce one additional
piece of notation. Given a finite set $\sigma$ and a map $\phi : \sigma
\to [0,1)$, we write $\phi' \succ \phi$ to designate a map $\phi' : \sigma \to
  [0,1]$ such that $\phi'(p) > \phi(p)$ for all $p \in
  \sigma$. We will say that the map $\phi$ is constant if
    $\phi(p)=\phi(q)$ for all $(p,q) \in \sigma^2$.

\begin{lemma}\label{lemma:abstract-tool}
Let $A$ be a compact topological $\operatorname{T}_1$ space and $P$ a
finite set. Consider a family of compact subsets of $A$, $\Dset = \{
D_p(t) \mid (p,t) \in P \times [0,1]\}$ which satisfies the following
five properties:
\begin{enumerate}[{\em (a)}]

\item For all $0 \leq t < t' \leq 1$ and all $p \in P$, we have
  $D_p(t') \subset D_p(t)^\circ$;

\item $\bigcup_{p \in P} D_p(1) = A$;

\item For all $\emptyset \neq \sigma \subset P$ and all maps $\phi :
  \sigma \to [0,1]$, the intersection $\Dset(\sigma,\phi) = \bigcap_{p
    \in \sigma} D_p \circ \phi(p)$ is either empty or connected;
  
\item For all $\emptyset \neq \sigma \subset P$ and all maps $\phi :
  \sigma \to [0,1)$, the following implication holds: $\Dset(\sigma,\phi)
  \neq \emptyset$ and $\Dset(\sigma,\phi') = \emptyset$ for all $\phi' \succ
  \phi$ implies that $\Dset(\sigma,\phi)$ is reduced to a single point.
  
\item For all $0 < \tau \leq 1$ and all $p \in P$, one has $D_p(\tau) =
  \bigcap_{t \in [0,\tau)} D_p(t)$ 

\end{enumerate}
Then, as $t$ increases continuously from $0$ to $1$, the simplicial
complex $K_t = \Nerve\{ D_p(t) \mid p\in P\}$ only undergoes collapses.
\end{lemma}

\begin{proof}
To prove the lemma, we may assume that $A$ is neither disconnected nor
reduced to a single point. Indeed, if $A$ is not connected then
condition (c) implies that for each $p \in P$, the subset
$D_p(0)$ is contained entirely within one connected component
of $A$ and the connected components of $A$ can be considered
separately. If $A$ is reduced to a single point, then the result is
clear.

Assuming $A$ is neither disconnected nor reduced to a single point, we
study the changes that occur in $K_t$ as $t$ increases continuously
from $0$ to $1$. Because of condition (a), some simplices may
disappear from $K_t$ but no simplices can ever appear in $K_t$. Given
a simplex $\sigma \in K_0 \setminus K_1$, we call $\tau_\sigma = \sup
\{ t \in [0,1] \mid \sigma \in K_t\}$ the {\em death time} of $\sigma$
and claim that $\sigma \in K_{\tau_\sigma}$. Indeed, if $\tau_\sigma =
0$ then $\sigma \in K_0 = K_{\tau_\sigma}$.  Now if $0 < \tau_\sigma
\leq 1$, condition (e) gives
$$
\bigcap_{p \in \sigma} D_p(\tau_\sigma) ~~=~~
\bigcap_{p \in \sigma} \; \bigcap_{t \in [0,\tau_\sigma)} D_p(t) ~~=~~
\bigcap_{t \in [0,\tau_\sigma)} \; \bigcap_{p \in \sigma} D_p(t) ~~=~~
\bigcap_{n \in \N^*} \; \bigcap_{p \in \sigma} D_p(\tau_\sigma - \tau_\sigma/n).
$$ Since the intersection of a sequence of decreasing non-empty
compact sets is non-empty, the right-hand side above is non-empty and
so is the left-hand side. Hence $\sigma \in K_{\tau_\sigma}$ and since
$\sigma \in K_0 \setminus K_1$, one has $0 \leq \tau_\sigma <1$. In
other words, the simplex $\sigma$ belongs to the complex till its
death time and disappears from the complex right after. For $t \in
[0,1)$, let $\Delta_t$ be the set of simplices with death time $t$.

\paragraph{Generic case.}
We first establish the lemma under the following generic condition:
  \begin{enumerate}
  \item[$(\star)$] For all $s \in [0,1)$, the set of simplices
    $\Delta_s$ is either empty or has a unique inclusion-minimal element.
  \end{enumerate}
At the end of the proof, we will explain how to get rid of this
genericity assumption. Consider $t \in [0,1)$ and suppose $\Delta_t
  \neq \emptyset$. We prove that the deletion of simplices $\Delta_t$
  from $K_t$ is a collapse. Let $\sigma_{\min}$ be the unique
  inclusion-minimal element of $\Delta_t$. Assuming we are in the
  generic situation, we do not need anymore conditions (c) and (d) but
  can replace them with the weaker conditions (c') and (d') obtained
  by considering constant maps for $\phi$ and $\phi'$.  Since
  $\bigcap_{p \in \sigma_{\min}} D_p(t) \neq \emptyset$ and
  $\bigcap_{p \in \sigma_{\min}} D_p(t + \eta) = \emptyset$ for all $0
  < \eta \leq 1 - t$, condition (d') implies that $\bigcap_{p \in
    \sigma_{\min}} D_p(t) = \{a\}$ for some $a \in A$. It is easy to
  see that $\Delta_t$ has a unique inclusion-maximal element
  $\sigma_{\max} = \{ p \in P \mid a \in D_p(t)\}$. Hence, $\Delta_t$
  consists of all cofaces of $\sigma_{\min}$ and these cofaces are
  faces of $\sigma_{\max}$. To prove that removing $\Delta_t$ from
  $K_t$ is a collapse, it suffices to establish that $\sigma_{\min}
  \neq \sigma_{\max}$. We proceed in two steps:

\medskip\noindent{\underline{Step 1:}} Let us prove that $a$ lies on
the boundary of $D_p(t)$ for all $p \in \sigma_{\min}$. For this, we
start by proving that $a$ lies on the boundary of at least one $D_p(t)$
for some $p \in \sigma_{\min}$. Suppose for a contradiction that $a$
belongs to the interior of $D_p(t)$ for all $p \in
\sigma_{\min}$. This implies that, for all $p \in \sigma_{\min}$,
there exists an open neighborhood $U_p$ of $a$ such that $a \in U_p
\subset D_p(t)$ and $a \in U = \bigcap_{p\in \sigma_{\min}} U_p
\subset \bigcap_{p\in \sigma_{\min}} D_p(t) = \{a\}$.  It follows that
$U = \{a\}$ and therefore $a$ is an isolated point of $A$.  Since $A$
is assumed to be connected, it entails that $A= \{a\}$. We thus reach a
contradiction since we obtain a case we have excluded. Defining
$\sigma_0 = \{ p \in \sigma_{\min} \mid a \in \partial D_p(t)\}$, we
have just proved that $\sigma_0 \neq \emptyset$.

Let us now prove that $\sigma_0 = \sigma_{\min}$. Suppose for a
contradiction that $\sigma_0$ is a proper subset of
$\sigma_{\min}$. As before, we can define an open set $U$ such that $a
\in U \subset D_p(t)$ for all $p \in \sigma_{\min} \setminus
\sigma_0$. We have
$$
a \; \in \;  \bigcap_{p \in \sigma_0} D_p(t) \cap U \; \subset \; \bigcap_{p
  \in \sigma_{\min}} D_p(t) = \{a\}.
$$
Setting $X = \bigcap_{p \in \sigma_0} D_p(t)$, we thus have $X \cap U
= \{a\}$ which is open in the subspace topology on $X$. Since $A$ is
$T_1$ separable, the subset $X \setminus \{a\}$ is also open in the
subspace topology on $X$. It follows that $X = \{a\} \cup (X \setminus
\{a\})$ is the union of two disjoint relatively open subsets. Since
$X$ is connected by condition (c'), one of the two subsets must be
empty. The only possibility is that $X \setminus \{a\} = \emptyset$
and $X = \bigcap_{p \in \sigma_0} D_p(t) = \{a\}$.

Because of (a), for all $t<t' \leq 1$, we get that $\bigcap_{p \in
  \sigma_0} D_p(t') \subset \bigcap_{p \in \sigma_0} D_p(t)^\circ =
\{a\}^\circ= \emptyset$. It follows that the death time of $\sigma_0$
is $t$ and the minimality of $\sigma_{\min}$ implies that $\sigma_0 =
\sigma_{\min}$, yielding a contradiction. Thus, $a$ lies on the
boundary of $D_p(t)$ for all $p \in \sigma_{\min}$.

\medskip\noindent{\underline{Step 2:}} Let us prove that
$\sigma_{\max} \neq \sigma_{\min}$.  By condition (b), we have $a \in
A = \bigcup_{p \in P} D_p(1)$ and therefore $a$ belongs to $D_q(1)$
for some $q \in P$. Since $t < 1$, condition (a) implies that $a \in
D_q(1) \subset D_q(t)^\circ$ and therefore $q \in
\sigma_{\max}$. On the other hand, $a \not \in \partial D_q(t)$
and therefore $q \not \in \sigma_{\min}$. It follows that
$\sigma_{\max} \neq \sigma_{\min}$.

\paragraph{Getting rid of the genericity assumption.} If we are not in
the generic case, the idea is to apply a small perturbation to the
family $\Dset$ which will leave unchanged $K_0$ and $K_1$ and such
that after perturbation (1) $\Dset$ will still satisfy the hypotheses
of the lemma; (2) the generic condition ($\star$) will hold. We say
that two simplices $\sigma_1$ and $\sigma_2$ are {\em in conjunction
  at time $t$} if they are both inclusion-minimal elements of
$\Delta_t$ for some $t \in [a,b)$. We say that $t$ is an {\em event
    time} if $\Delta_t \neq \emptyset$.  Consider two simplices that
  are in conjunction at time $t$, say $\sigma_1$ and
  $\sigma_2$. Suppose $q \in \sigma_1$ and $q \not \in
  \sigma_2$. Consider an increasing continuous bijection $\psi
  : [0,1] \to [0,1]$ that differs from identity only in a small
  neighborhood of $t$ that does not include any other event
  times. Replacing $D_q(t)$ by $D_q \circ \psi(t)$ and leaving unchanged $D_p(t)$
  for all $p \in P \setminus \{q\}$, we change
  the time at which $\sigma_1$ disappears while keeping unchanged the
  time at which $\sigma_2$ disappears. After this operation,
  $\sigma_1$ and $\sigma_2$ are not in conjunction
  anymore. Furthermore, the operation does not create any new pair of
  simplices in conjunction. By repeating this operation a finite
  number of times, we thus get a new collection as required.
\end{proof}

\paragraph{Remark.}
Somewhat surprisingly, condition (c) of
Lemma~\ref{lemma:abstract-tool} is weaker than the condition required
by the Nerve Lemma for guaranteeing that the simplicial complex $K_t =
\Nerve{\{D_p(t) \mid p\in P \}}$ is homotopy equivalent to $A$ at some
particular value of $t \in [0,1]$.  In particular, if the Nerve Lemma
applies at time $t=0$, that is, if $\bigcap_{p \in \sigma} D_p(0)$ is
either empty or contractible for all $\emptyset \neq \sigma \subset P$
and if furthermore the five conditions of
Lemma~\ref{lemma:abstract-tool} hold, then $K_t$ will have the right
homotopy type for all $t\in [0,1]$.

\subsection{Towards the nerve of $\alpha$-robust coverings}
\label{section:robust-coverings}

To state and prove our third theorem, we need some definitions. Given
a subset $X \subset \R^d$, we call the intersection of all balls of
radius $\alpha$ containing $X$ the {\em $\alpha$-hull of $X$} and
denote it by $\Hull{X}{\alpha}$. By construction, $\Hull{X}{\alpha}$
is convex and $\Hull{X}{+\infty}$ is the convex hull of
  $X$. Setting $\Clenchers{X}{\alpha} = \{ z \in \R^d \mid X \subset
B(z,\alpha) \}$, we have
$$
\Hull{X}{\alpha} = \bigcap_{z \in \Clenchers{X}{\alpha}} B(z,\alpha).
$$ Notice that $\Clenchers{X}{\alpha}$ is also convex; see Figure
\ref{fig:cells-strictly-decreasing}, left. Indeed, if two balls
$B(z_1,\alpha)$ and $B(z_2,\alpha)$ contain $X$, then any ball
$B(\lambda_1 z_1 + \lambda_2 z_2, \alpha)$ with $\lambda_1 + \lambda_2
= 1$, $\lambda_1 \geq 0$ and $\lambda_2 \geq 0$ also contains
$X$. Furthermore, if $X$ is compact, so is $\Clenchers{X}{\alpha}$.

\begin{definition}[$\alpha$-robust coverings] \label{definition:robust-coverings}
  A covering $\Cset = \{ C_v \mid v \in V\}$ of $A$ is {\em
    $\alpha$-robust} if (1) each set in $\Cset$ can be enclosed in an
  open ball with radius $\alpha$; (2) $\Nerve{\Cset} = \Nerve{\{ A
    \cap \Hull{C_v}{\alpha} \mid v \in V\}}$.
\end{definition}

Of course, one may wonder if $\alpha$-robust coverings of a shape $A$
often arise in practice. Section \ref{section:nice-triangulations}
will address this issue. For now we focus on establishing properties
of $\alpha$-robust coverings.


\begin{lemma}
\label{lemma:NiceCoveringsRecoverHomotopyType}
  If $\Cset$ is a finite compact $\alpha$-robust covering of $A$ and
  $0 \leq \alpha < \Reach{A}$, then $\Nerve{\Cset} \simeq A$.
\end{lemma}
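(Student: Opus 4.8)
The plan is to pass from $\Cset$ to the derived family $\Cset' = \{A \cap \Hull{C_v}{\alpha} \mid v \in V\}$, apply the Nerve Lemma to $\Cset'$, and then invoke $\alpha$-robustness to transfer the conclusion back to $\Cset$. Indeed, condition (2) of Definition \ref{definition:robust-coverings} asserts exactly that $\Nerve{\Cset} = \Nerve{\Cset'}$, so it suffices to prove $\Nerve{\Cset'} \simeq A$.

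First I would check that $\Cset'$ meets the hypotheses of the Nerve Lemma (Lemma \ref{lemma:nerve}), namely that it is a finite closed covering of $A$. Finiteness is inherited from $V$. Each $\Hull{C_v}{\alpha}$ is an intersection of closed balls, hence closed, so each $A \cap \Hull{C_v}{\alpha}$ is closed. For the covering property, since $C_v \subset \Hull{C_v}{\alpha}$ (the $\alpha$-hull contains $C_v$ by construction) and $C_v \subset A$, one has $C_v \subset A \cap \Hull{C_v}{\alpha}$; taking the union over $v$ and using that $\Cset$ covers $A$ gives $A = \bigcup_{v \in V}(A \cap \Hull{C_v}{\alpha})$.

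The crux is to show that every non-empty common intersection $\bigcap_{v \in \sigma}(A \cap \Hull{C_v}{\alpha})$, for $\emptyset \neq \sigma \subset V$, is contractible. Here I would exploit the defining formula $\Hull{C_v}{\alpha} = \bigcap_{z \in \Clenchers{C_v}{\alpha}} B(z,\alpha)$. Setting $\tau = \bigcup_{v \in \sigma} \Clenchers{C_v}{\alpha}$, this rewrites the intersection as $A \cap \CapBalls{\tau}{\alpha}$, a single set of the form treated by Lemma \ref{lemma:topological}. To legitimately apply that lemma I must verify that $\tau$ is a non-empty compact set: each $\Clenchers{C_v}{\alpha}$ is non-empty because condition (1) guarantees $C_v$ fits in a closed ball of radius $\alpha$, and compact because $C_v$ is compact (as recalled just before Definition \ref{definition:robust-coverings}); a finite union of such sets is again non-empty and compact. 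Since $0 \leq \alpha < \Reach{A}$, Lemma \ref{lemma:topological} then yields that $A \cap \CapBalls{\tau}{\alpha}$ is empty or contractible, as needed. Applying the Nerve Lemma gives $\Nerve{\Cset'} \simeq A$, and the robustness identity $\Nerve{\Cset} = \Nerve{\Cset'}$ completes the argument.

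The only genuinely delicate point is the rewriting in the last paragraph: one has to recognise that an intersection of $\alpha$-hulls is still an intersection of radius-$\alpha$ balls, now centred on the union $\tau$ of the clencher sets, which is precisely what lets Lemma \ref{lemma:topological} --- established for an arbitrary compact centre set --- do all the topological work. Everything else is routine bookkeeping.
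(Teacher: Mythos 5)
Your proof is correct and follows essentially the same route as the paper: apply the Nerve Lemma to the derived family $\{A \cap \Hull{C_v}{\alpha} \mid v \in V\}$, invoke Lemma \ref{lemma:topological} to get that each non-empty intersection is contractible, and transfer back to $\Nerve{\Cset}$ via condition (2) of $\alpha$-robustness. You simply make explicit the steps the paper leaves implicit, namely the rewriting $\bigcap_{v \in \sigma} \Hull{C_v}{\alpha} = \CapBalls{\tau}{\alpha}$ with $\tau = \bigcup_{v \in \sigma} \Clenchers{C_v}{\alpha}$ compact and non-empty, which is exactly why Lemma \ref{lemma:topological} was stated for arbitrary compact centre sets.
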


\begin{proof}
  We apply the Nerve Lemma to the collection $\{ A \cap
  \Hull{C_v}{\alpha} \mid v \in V\}$. Clearly, $A = \bigcup_{v \in V}
  (A \cap \Hull{C_v}{\alpha})$. By Lemma~\ref{lemma:topological}, for
  all $\emptyset \neq \sigma \subset V$, the intersection $A \cap
  \bigcap_{v \in \sigma} \Hull{C_v}{\alpha}$ is either empty or
  contractible.
\end{proof}

Combining the above lemma and Theorem
\ref{theorem:RestrictedCechCaptureHomotopyType} we thus get that
$\Nerve{\Cset} \simeq \RCech{A}{P,\alpha}$ for all finite compact
$\alpha$-robust coverings $\Cset = \{C_v \mid v \in V\}$ of $A$ with
$0 \leq \alpha < \Reach{A}$. Next theorem strengthens this result and
states mild conditions on $P$ and $V$ under which there exists a
sequence of collapses transforming $\RCech{A}{P,\alpha}$ into a
simplicial complex isomorphic to $\Nerve{\Cset}$.

\begin{theorem}\label{theorem:collapsing-towards-nice-triangulations}
  Let $A$ be a compact set of $\R^d$ and $\alpha$ a real number such
  that $0 \leq \alpha < \Reach{A}$. Let $\Cset = \{ C_v \mid v \in
  V\}$ be a compact $\alpha$-robust covering of $A$. Let $P$ be a finite point set
  and suppose there exists an injective map $f : V \to P$ such that
  $C_v \subset \OpenBall{f(v)}{\alpha}$ for all $v \in V$. Then, there
  exists a sequence of collapses from $\RCech{A}{P,\alpha}$ to
  $f(\Nerve{\Cset}) = \{ f(\sigma) \mid \sigma \in
  \Nerve{\Cset} \}$.
\end{theorem}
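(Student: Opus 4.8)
The plan is to mimic the continuous-parameter collapsing strategy used in the proof of Theorem \ref{theorem:CechCollapsesOnRestrictedCech}, but now interpolating between the restricted \v Cech complex and the nerve of the covering $\Cset$. Following the roadmap announced at the start of Section \ref{section:collapsing-restricted-cech}, I would introduce a $2$-parameter family $\mathcal{D} = \{ D_p(t) \mid (p,t) \in P \times [0,1]\}$ of compact sets and track the nerve $K(t) = \Nerve{\{ D_p(t) \mid p \in P\}}$ as $t$ runs from $0$ to $1$. The design constraints are that $K(0) = \RCech{A}{P,\alpha}$ and that $K(1)$ be isomorphic (via $f$) to $\Nerve{\Cset}$. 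A natural choice is to set $D_{f(v)}(t)$ to deform the restricting set $A \cap \Ball{f(v)}{\alpha}$ at $t=0$ into $A \cap \Hull{C_v}{\alpha}$ at $t=1$ for vertices in the image of $f$, while the sets $D_p(t)$ for $p \notin f(V)$ are shrunk down and eventually disappear (so that only the faces $f(\sigma)$ survive at $t=1$). For the surviving coordinates one can interpolate the "clencher" constraint, e.g. let $D_{f(v)}(t)$ be the intersection of $A$ with the common intersection of balls of radius $\alpha$ that enclose $C_v$ but whose centers are gradually restricted from $\{f(v)\}$ toward $\Clenchers{C_v}{\alpha}$; the hypothesis $C_v \subset \OpenBall{f(v)}{\alpha}$ guarantees $f(v) \in \Clenchers{C_v}{\alpha}$, so the family is well-defined and nested at the endpoints.

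Next I would invoke the general machinery promised for Section \ref{section:nerve-evolution}: verify that $\mathcal{D}$ satisfies whatever monotonicity and genericity conditions are formulated there to ensure that $K(t)$ only undergoes collapses as $t$ increases. Concretely, one wants each $D_p(t)$ to vary monotonically (decreasing under inclusion as $t$ grows, say), so that simplices can only \emph{disappear}, and one wants each disappearance event to be a collapse rather than a more general change of homotopy type. The key geometric inputs I would expect to reuse are Lemma \ref{lemma:topological} (that $A \cap \Bset(\sigma,\alpha)$, and more generally $A$ intersected with an intersection of radius-$\alpha$ balls, is empty or contractible when $\alpha < \Reach{A}$) and the unique-closest-point analysis of Lemma \ref{lemma:UnicityOfClosesPointInBallsIntersection}. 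These let one locate, at each critical time $t$, a unique extremal point witnessing the collapse and identify the minimal and maximal simplices $\sigma_{\min} \subset \sigma_{\max}$ of the vanishing family $\Delta_t$, exactly as in part (b) of the proof of Theorem \ref{theorem:CechCollapsesOnRestrictedCech}.

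The endpoint identifications require a little care. At $t=0$ the family reduces to $\{A \cap \Ball{p}{\alpha} \mid p \in P\}$, whose nerve is $\RCech{A}{P,\alpha}$ by definition. At $t=1$ I would arrange that $D_p(1) = \emptyset$ for $p \notin f(V)$ and $D_{f(v)}(1) = A \cap \Hull{C_v}{\alpha}$, so that $K(1) = \Nerve{\{A \cap \Hull{C_v}{\alpha} \mid v \in V\}}$ transported along $f$. By part (2) of Definition \ref{definition:robust-coverings}, this nerve equals $\Nerve{\Cset}$, hence $K(1) = f(\Nerve{\Cset})$. Injectivity of $f$ is what makes $f(\Nerve{\Cset})$ an honest isomorphic copy rather than a quotient.

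The main obstacle, as with the earlier theorem, will be step (b): proving that every topological change of $K(t)$ is genuinely an \emph{elementary collapse} and not merely a homotopy equivalence. This demands showing that at each critical time the vanishing simplices $\Delta_t = \{\sigma \mid D_\sigma(t) \text{ first becomes empty}\}$ form a closed interval $[\sigma_{\min},\sigma_{\max}]$ in the face lattice with $\sigma_{\min} \neq \sigma_{\max}$, i.e. that the minimal and maximal vanishing simplices are distinct. I expect to establish $\sigma_{\min} \neq \sigma_{\max}$ by exhibiting a point $p \in P$ whose set $D_p(t)$ still contains the witnessing extremal point $x$ in its interior at the critical time, using the sampling/injectivity hypothesis $C_v \subset \OpenBall{f(v)}{\alpha}$ to supply the strict interior containment (the open ball giving the needed slack), entirely parallel to the role played by the point $q$ in the proof of Theorem \ref{theorem:CechCollapsesOnRestrictedCech}. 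The genericity assumption that $\Delta_t$ has a unique minimal element will again be handled by a perturbation argument deferred to the appendix.
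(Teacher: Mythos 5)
Your scaffolding matches the paper's proof almost exactly: the paper also builds a $2$-parameter family $\Dset = \{D_p(t)\}$ with $D_p(t) = A \cap \bigcap_{z \in Z(p)} \Ball{(1-t)p + tz}{\alpha}$, where $Z(p) = \Clenchers{C_v}{\alpha}$ when $f^{-1}(p) = \{v\}$ (so the centers interpolate from $p$ toward the clencher set, exactly as you propose) and, for $p \notin f(V)$, $Z(p)$ consists of two points symmetric about $p$ with disjoint $\alpha$-balls so that $D_p(1) = \emptyset$; the endpoint identifications $K(0) = \RCech{A}{P,\alpha}$ and $K(1) = f(\Nerve{\Cset})$ and the appendix perturbation for genericity are as you describe. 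The genuine gap is at the step you yourself flag as the main obstacle: you propose to certify, at each critical time, a \emph{unique} witnessing point via ``the unique-closest-point analysis of Lemma \ref{lemma:UnicityOfClosesPointInBallsIntersection}, entirely parallel'' to Theorem \ref{theorem:CechCollapsesOnRestrictedCech}. That lemma does not transfer. In Theorem \ref{theorem:CechCollapsesOnRestrictedCech} the sweep parameter is the distance $t = d(A, \CapBalls{\sigma}{\alpha})$, the ball intersection stays \emph{disjoint} from $A$ throughout phase (b), the centers are fixed, and the half-space argument hinges on $x$ realizing a positive distance to $A$ with a well-defined closest-point projection. Here each $D_p(t)$ is a subset of $A$, the radius is pinned at $\alpha$, the \emph{centers} move, and the critical event is the emptying of $A \cap \bigcap_z \Ball{(1-t)p+tz}{\alpha}$ inside $A$ --- no distance function is being swept and no projection onto $A$ is available, so a genuinely different argument is required.

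What the paper actually does is isolate abstract conditions (a)--(d) on an evolving family (Lemma \ref{lemma:abstract-tool}); the singleton property is condition (d), proved by a stability argument with no analogue in your sketch: if $\Dset(\sigma,s)$ contained two points $x_1 \neq x_2$, pick $\alpha < \alpha' < \Reach{A}$; then $A \cap \Hull{\{x_1,x_2\}}{\alpha'}$ is contractible by Lemma \ref{lemma:topological}, hence a path joins $x_1$ to $x_2$ inside it, and this path must meet the largest ball $B$ contained in $\Hull{\{x_1,x_2\}}{\alpha'}$, giving $A \cap B \neq \emptyset$; since $B$ sits at positive depth inside every defining ball, a sufficiently small advance of the centers (a map $s' \succ s$) still contains $B$, so $\Dset(\sigma,s') \neq \emptyset$, contradicting imminent emptiness. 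A second, lesser misplacement: you spend the hypothesis $C_v \subset \OpenBall{f(v)}{\alpha}$ on showing $\sigma_{\min} \neq \sigma_{\max}$, but in the paper that step follows from the covering condition (b) together with \emph{strict} nesting; the open-ball slack is actually consumed in verifying condition (a), $D_p(t') \subset D_p(t)^\circ$ for $t < t'$, via $\Ball{p}{\delta} \subset \Clenchers{C_v}{\alpha}$ and a segment-of-centers estimate --- plain monotone inclusion, which is all you ask for, would not suffice to run Step 2 of the abstract lemma. So your route is the right one architecturally, but both load-bearing verifications are missing or rest on a lemma that does not apply.
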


\begin{proof} 
We build a family of compact sets $\mathcal{D} = \{D_p(t) \mid (p,t)
\in P \times [0,1] \}$ in such a way that if we let $K(t) = \Nerve{ \{
  D_p(t) \mid p \in P\}}$, then $\RCech{A}{P,\alpha} = K(0)$ and
$f(\Nerve{\Cset}) = K(1)$. We then prove that this family meets the
hypotheses of Lemma~\ref{lemma:abstract-tool}, implying that
$\RCech{A}{P,\alpha}$ can be transformed into $f(\Nerve{\Cset})$ by a
sequence of collapses obtained by increasing continuously $t$ from 0
to 1. To define the family $\mathcal{D}$, let us first associate to
every point $p \in P$ the set
$$
\Split(p) = 
\begin{cases}
  \Clenchers{C_v}{\alpha} & \mbox{if } f^{-1}(p) = \{v\}, \\
  \{p^+,p^-\} & \mbox{if } f^{-1}(p) = \emptyset,
\end{cases}
$$ where $p^+$ and $p^-$ are two points which are symmetric with
respect to $p$ and chosen such that $B(p^+,\alpha) \cap B(p^-,\alpha)
= \emptyset$. We then set
$$
D_p(t) = A \cap \bigcap_{s \in \Split(p)} \Ball{ (1-t)p + ts }{\alpha}.
$$ 
Let us check that $\RCech{A}{P,\alpha} = K(0)$ and
$f(\Nerve{\Cset}) = K(1)$. We claim that $\Split(p) \neq \emptyset$ for all
$p \in P$. Let us consider two cases. First, if $f^{-1}(p) =
\emptyset$, then by definition $\Split(p) = \{p^+,p^-\} \neq
\emptyset$. Second, if $f^{-1}(p) = \{v\}$, then $\Split(p)$ contains at
least $p$ since $C_v \subset \OpenBall{p}{\alpha}$ by
hypothesis. Thus, $D_p(0) = A \cap \bigcap_{s \in \Split(p)}
B(p,\alpha) = A \cap B(p,\alpha)$ and $K(0) =
\RCech{A}{P,\alpha}$. On the other hand, we have $D_p(1) = A \cap\bigcap_{s
  \in \Split(p)} B(s,\alpha)$ which we can rewrite as
$$
D_p(1) =
  \begin{cases}
    A \cap\Hull{C_v}{\alpha} & \mbox{if } f^{-1}(p) = \{v\}, \\
    \emptyset & \mbox{if } f^{-1}(p) = \emptyset.
  \end{cases}
  $$ 
Thus, $K(1) = f(\Nerve{\Cset})$.  Let us make some more
remarks. Writing $Z(p,t) = \{ (1-t)p + ts \mid s \in \Split(p)\}$,
we can express $D_p(t)$ as $A \cap \CapBalls{Z(p,t)}{\alpha}$. Since
$\Split(p)$ is compact, so is $Z(p,t)$ and by
Lemma~\ref{lemma:topological}, $D_p(t)$ is either empty or
contractible. Furthermore, $C_v \subset D_p(t)$ for all $p \in f(V)$,
showing that the collection of cells $D_p(t)$ cover the
shape. Applying the Nerve Lemma, we thus get that $K(t) \simeq A$ for
all $t \in [0,1]$. We are now ready to prove a stronger result, namely
that as $t$ increases continuously from 0 to 1, the only changes that
may occur in $K(t)$ are collapses. For this, it suffices to establish
that the family $\mathcal{D}$ defined above satisfies conditions (a),
(b), (c), (d) and (e) of Lemma~\ref{lemma:abstract-tool}.

\begin{figure}[htb]
    \centering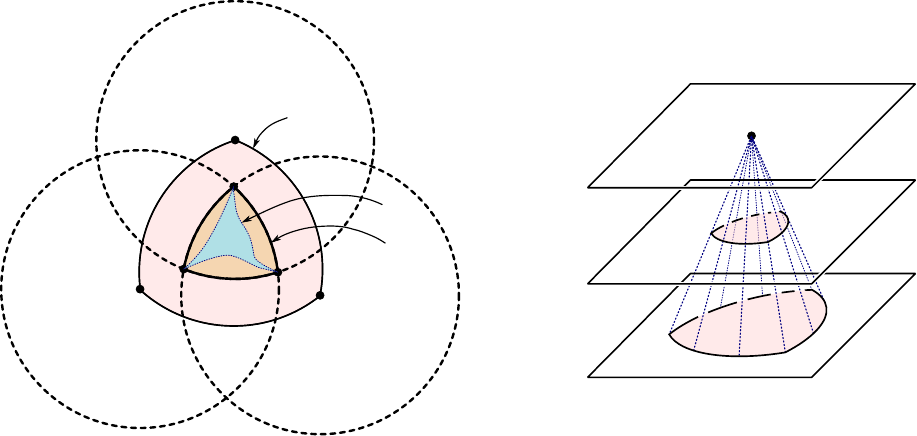
  \caption{Left: $\alpha$-hull and $\alpha$-clenchers of a planar cell
    $C_v$. Right: $Z(p,t)$ is the image of $Z(p,1)$ by an homothety
    centered at $p$ with scale factor $t$.
\label{fig:cells-strictly-decreasing}}
\end{figure}

\medskip \noindent (a) \hspace{1mm} Let us prove that for all $0 \leq
t < t' \leq 1$ and all $p \in P$, we have $D_p(t') \subset
D_p(t)^\circ$. If $f^{-1}(p) = \emptyset$, this is easy to
see. Suppose $f^{-1}(p) = \{v\}$; see Figure
\ref{fig:cells-strictly-decreasing}. We note that $Z(p,1) = \Split(p)
= \Clenchers{C_v}{\alpha}$ is convex and by construction, so are all
$Z(p,t)$ for all $t \in [0,1]$.  Since $C_v \subset
\OpenBall{p}{\alpha}$, it follows that $p$ belongs to the interior of
$Z(p,1)$ and $Z(p,t) \subset Z^\circ(p,t')$ for all $0 \leq t < t'
\leq 1$. This implies that $D_p(t') \subset D_p(t)$.  It remains to
show that no point of $D_p(t')$ is in $\partial D_p(t)$.

Suppose for a contradiction that $d \in \partial D_p(t) \cap D_p(t')$
and let $r = \max\{\|d-z\| \mid z \in Z(p, t)\}$. The real number $r$
is well-defined since $Z(p, t)$ is compact. From $d \in \partial
D_p(t)$ we can easily deduce that $r \geq \alpha$. Indeed, otherwise
some neighborhood of $d$ would belong to $D_p(t)$ which is impossible.
Let $z \in Z(p, t)$ be such that $\|z-d\| = r$.  Since $Z(p, t)
\subseteq Z^\circ(p, t')$, there is $z' \in Z(p, t')$ such that
$\|z'-d\| > r \geq \alpha$. But, this contradicts $d
\in D_p(t')$.

\medskip \noindent (b) \hspace{1mm} Clearly, $\bigcup_{p \in P} D_p(1) = A$.

\medskip \noindent (c) \hspace{1mm} Given $\sigma \subset P$ and a map
$\phi: \sigma \to [0,1]$, we introduce the set
$$
\Dset(\sigma,\phi) = \bigcap_{p \in \sigma} D_p \circ \phi(p) = A \cap\bigcap_{p \in
  \sigma} \bigcap_{z \in Z(p)} B((1-\phi(p))p+\phi(p)z,\alpha).
$$ By Lemma~\ref{lemma:topological}, the intersection
$\Dset(\sigma,\phi)$ is
either empty or connected.

\medskip \noindent (d) \hspace{1mm} Consider $\sigma \subset P$ and a
map $\phi: \sigma \to [0,1)$ such that $\Dset(\sigma,\phi) \neq
  \emptyset$. Let us prove that if $\Dset(\sigma,\phi') = \emptyset$
  for all maps $\phi': \sigma \to [0,1]$ with $\phi' \succ \phi$, then
  $\Dset(\sigma,\phi)$ is a singleton. Assume, by contradiction, that
  $\Dset(\sigma,\phi)$ contains two points $x_1$ and $x_2$ and let us
  prove that we can find a map $\phi': \sigma \to [0,1]$ such that
  $\phi' \succ \phi$ and $\Dset(\sigma,\phi') \neq \emptyset$. Take
  $\alpha'$ such that $\alpha < \alpha' < \Reach{A}$. Since $A \cap
  \Hull{\{x_1,x_2\}}{\alpha'}$ contains both $x_1$ and $x_2$, it is
  non-empty and therefore contractible by
  Lemma~\ref{lemma:topological}; see Figure \ref{fig:clenchers}.  In
  particular, there is a path connecting the points $x_1$ and $x_2$ in
  $A \cap \Hull{\{x_1,x_2\}}{\alpha'}$. This path has to intersect the
  largest ball $B$ contained in $\Hull{\{x_1,x_2\}}{\alpha'}$ and
  therefore $A \cap B \neq \emptyset$.  For $\xi >0$ sufficiently
  small we have
 $$
A \cap \Offset{B}{\xi}  ~~ \subset ~~ A \cap \Hull{\{x_1,x_2\}}{\alpha}
~~ \subset ~~ \Dset(\sigma,\phi).
$$ By moving slightly the centers of the balls defining
$\Dset(\sigma,\phi)$, that is, by replacing the map $\phi$ by
a map $\phi' \succ \phi$ such that $\phi'(p) - \phi(p)$ is small enough for all $p
\in \sigma$, we get a new set $\Dset(\sigma,\phi')$ that
still contains $B$.  Since $\emptyset \neq A \cap B$, we thus get that
$\Dset(\sigma,\phi') \neq \emptyset$, reaching a
contradiction.

\begin{figure}[htb]
    \centering\small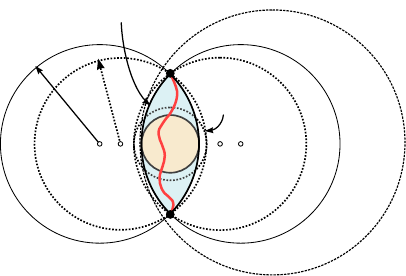
  \caption{Notation for the proof of Theorem
    \ref{theorem:collapsing-towards-nice-triangulations}.
\label{fig:clenchers}}
\end{figure}

\medskip \noindent (e) \hspace{1mm} It is not difficult to see that
$\bigcap_{t \in [0,\tau)} Z(p,t) = Z(p,\tau)$ and $D_p(\tau) =
  \bigcap_{t \in [0,\tau)} D_p(t)$ for $0 < \tau \leq 1$.
\end{proof}


\section{Nicely triangulable spaces}
\label{section:nice-triangulations}


Given a space $A$ and a finite sample $P$ of $A$, we are seeking a
sequence of collapses that transform the \v Cech complex of $P$ with
scale parameter $\alpha$ into a {\em triangulation} of $A$. We recall
that a {\em triangulation} of $A$ is a simplicial complex whose
underlying space is homeomorphic to $A$. If $A$ has a triangulation,
then $A$ is said to be {\em triangulable}. In particular, we know that
compact smooth manifolds are triangulable
\cite{whitney2005geometric}. Unfortunately, the proof involves
barycentric subdivisions whose dual meshes are not likely to have
convex cells and therefore have little chance of being $\alpha$-robust
coverings. And yet, we know that if a triangulation $T$ of a space $A$
is the nerve of an $\alpha$-robust covering of $A$, then the previous
section provides conditions under which $\RCech{A}{P,\alpha}$ can be
transformed into $T$ by a sequence of collapses.  This raises the
question of whether, given a space $A$ and a scale parameter $\alpha$,
it is possible to find a triangulation $T$ of $A$ which is the nerve
of some $\alpha$-robust covering of $A$. In this section, we focus on
the question and present examples of spaces enjoying this property.

\begin{figure}[htb]
  \def\svgwidth{0.84\linewidth}
    \centering\small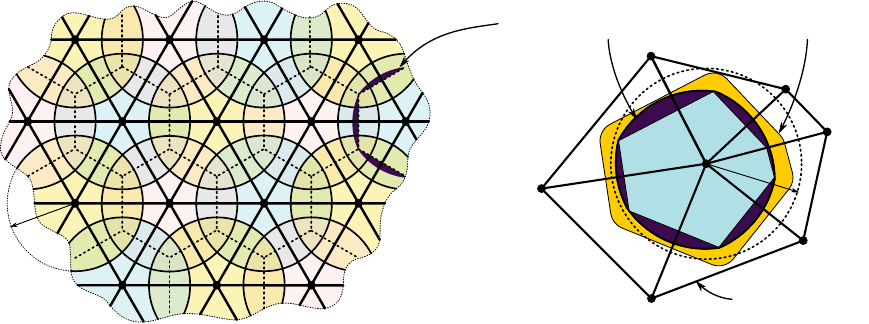
  \caption{Left: the collection of disks and the collection of Voronoi
    regions both form an $\alpha$-robust covering of the plane. Right:
    A triangulation is nice in our context when, among other things,
    it is the nerve of a collection of cells $C_v$ with size $\rho$
    such that $A \cap \left[\Conv{C_v}\right]^{\oplus \eta_0 \rho}
    \subset h(\Star{v}{T})$ for some $\eta_0>0$. This property will be
    preserved by $C^{1,1}$ diffeomorphisms for $\rho$ small
    enough. \label{fig:robust-coverings}}
\end{figure}

As a warm-up, we study the easy case $A = \R^2$; see Figure
\ref{fig:robust-coverings}. Consider a Delaunay triangulation $T$ of
$\R^2$ with vertex set $V$ and write $C_v = \{ x \in \R^2 \mid \|x-v\|
\leq \|x-u\| \mbox{ for all }u \in V \}$ for the Voronoi cell of $v
\in V$. Setting $\Cset = \{ C_v \mid v \in V\}$ for the collection of
Voronoi cells, we have that $T = \Nerve{\Cset}$. If all angles in $T$
are acute, then the Voronoi cell $C_v$ is contained in the star of $v$
and so is $\Hull{C_v}{\alpha}$ for $\alpha$ large enough. In
particular, by choosing carefully $V$ and $\alpha$, we can ensure that
$\Cset$ is an $\alpha$-robust covering of the plane.

To facilitate our discussion for more general spaces $A$, we first
introduce some more notations and definitions. Given an abstract
simplicial complex $T$, we let $g : V \to \R^{|V|-1}$ be an injective
map that sends the vertex set $V$ of $T$ to affinely independent
points of $\R^{|V|-1}$. The {\em underlying space} of $T$ is the point
set $|T| = \bigcup_{\sigma \in T} |\sigma|$, where $|\sigma|$ is the
geometric simplex obtained by taking the convex hull of
$g(\sigma)$. If $v$ is a vertex of $T$, the {\em open star} of $v$ in
$T$, denoted by $\Star{v}{T}$, is the union of the relative interiors
of $|\sigma|$ for all $\sigma$ of $T$ that contain $v$
\cite{munkres1993elements}. By definition, the set $\Star{v}{T}$ is
thus an open subset of $|T|$. For brevity, we shall write $h(v)$
instead of $h(|v|)$ and $h(\sigma)$ instead of $h(|\sigma|)$.  Writing
$\Conv{X} = \Hull{X}{+\infty}$ for the convex hull of $X \subset
\R^d$, we introduce the following definition:

\begin{definition}[nice triangulation]
  \label{definition:nice-triangulation}
  Let $\rho$ and $\delta$ be two positive real numbers.  A
  triangulation $T$ of $A \subset \R^d$ is said to be {\em
    $(\rho,\delta)$-nice} with respect to $(h,\Cset)$ if $h$ is a
  homeomorphism from $|T|$ to $A$, $\Cset = \{ C_v \mid v \in V\}$ is
  a finite compact covering of $A$ such that $\Nerve{\Cset} = T$ and
  the following conditions hold:
  \begin{enumerate}[{\em (i)}]
  \item $h(\sigma) \subset \bigcup_{v \in \sigma} C_v$ for all
    simplices $\sigma \in T$;
  \item $C_v \subset \OpenBall{h(v)}{\rho}$ for all $v \in V$;
  \item $A \cap \left[\Conv{C_v}\right]^{\oplus \delta} \subset h(\Star{v}{T})$ for all $v \in V$.
  \end{enumerate}
\end{definition}

The use of $\Conv{C_v}$ in the last item of the definition is
motivated by the following geometric lemma:

\begin{lemma}
\label{lemma:relation-hull-conv}
Let $X \subset \R^d$ be a non-empty compact set and $B(c,\rho)$ its
smallest enclosing ball. For all $\alpha$ and $\delta$ such that
$\alpha \geq \rho$ and $\alpha - \sqrt{\alpha^2 - \rho^2} \leq
\delta$, the following inclusion holds: $\Hull{X}{\alpha} \subset
\left[\Conv{X}\right]^{\oplus \delta}$.
\end{lemma}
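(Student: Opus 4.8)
The plan is to prove the equivalent statement that every point $y \in \Hull{X}{\alpha}$ lies within distance $\delta$ of $\Conv{X}$. Since the hypothesis gives $\delta \geq \alpha - \sqrt{\alpha^2-\rho^2}$, it suffices to establish the sharper bound $d(y,\Conv{X}) \leq \alpha - \sqrt{\alpha^2-\rho^2}$. The strategy is, in a sense, dual to Lemma~\ref{lemma:lemma0}: instead of constraining the centres of balls that already contain $X$, I would, for each candidate point $y$, exhibit one explicit ball $\Ball{z}{\alpha}$ that contains $X$ --- so that $z \in \Clenchers{X}{\alpha}$ and hence $y \in \Ball{z}{\alpha}$ by the very definition of the hull --- but which is positioned far enough away from $y$ that membership of $y$ forces $y$ to sit close to $\Conv{X}$.

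First I would fix $y \in \Hull{X}{\alpha}$, let $\pi$ be its nearest point on the compact convex set $\Conv{X}$, and, assuming $y \notin \Conv{X}$ (the other case being trivial), set $u = (y-\pi)/\|y-\pi\|$ and $d = \|y-\pi\| = d(y,\Conv{X})$. The projection inequality supplies the supporting half-space $\scalprod{x-\pi}{u} \leq 0$ for all $x \in \Conv{X}$, and in particular for all $x \in X$. After translating so that the centre $c$ of the smallest enclosing ball is the origin, I would record the three facts driving the proof: $X \subset \Ball{c}{\rho}$ with $\rho \le \alpha$; the centre $c$ lies in $\Conv{X}$, so that $\pi_u := \scalprod{\pi}{u} \geq 0$; and $c \in \Clenchers{X}{\alpha}$ (because $X \subset \Ball{c}{\rho} \subset \Ball{c}{\alpha}$), so that $y \in \Ball{c}{\alpha}$.

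The heart of the argument is the choice of ball together with a spherical-cap estimate. I would take $z = (\pi_u - \sqrt{\alpha^2-\rho^2})\,u$, i.e. the centre obtained by sliding along the axis through $c$ in direction $-u$ by the cap height. Using only $\|x-c\|\le\rho$ and $\scalprod{x}{u}\le\pi_u$ for $x\in X$, a short computation of $\|x-z\|^2$ gives the clean bound $\|x-z\|^2 \le \alpha^2 - \pi_u^2 \le \alpha^2$, so that $z \in \Clenchers{X}{\alpha}$ --- provided $\pi_u \le \sqrt{\alpha^2-\rho^2}$, which is exactly the condition keeping $z$ on the correct side of the hyperplane so that the monotonicity in the estimate runs the right way. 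Then $y \in \Ball{z}{\alpha}$ yields $\alpha \ge \|y-z\| \ge \scalprod{y-z}{u} = d + \sqrt{\alpha^2-\rho^2}$, hence $d \le \alpha - \sqrt{\alpha^2-\rho^2}$. In the complementary regime $\pi_u > \sqrt{\alpha^2-\rho^2}$ the constructed centre would cross to the wrong side, so there I would instead use the ball $\Ball{c}{\alpha}$ already at hand: $y \in \Ball{c}{\alpha}$ gives $\alpha \ge \scalprod{y}{u} = \pi_u + d$, and $\pi_u > \sqrt{\alpha^2-\rho^2}$ again forces $d < \alpha - \sqrt{\alpha^2-\rho^2}$. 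Either way $d \le \delta$, which is the claim.

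The step I expect to be the main obstacle is verifying that the single constructed ball genuinely contains all of $X$; this is precisely where the quantity $\alpha - \sqrt{\alpha^2-\rho^2}$ enters. The essential point is to centre the sliding ball on the axis through $c$ rather than through $\pi$, since it is $c$, and not $\pi$, whose distance to every point of $X$ is controlled by $\rho$. This choice is what forces the case distinction on the sign of $\pi_u - \sqrt{\alpha^2-\rho^2}$; once the containment is secured, reading off the distance bound from $y \in \Ball{z}{\alpha}$ by projecting onto $u$ is routine.
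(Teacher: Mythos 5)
Your proof is correct, but it follows a genuinely different route from the paper's. The paper argues \emph{directionwise}: for each unit vector $u \in \Sspace^{d-1}$ it takes the extremal clenching ball $B^u_\alpha$ centered on the half-line from $c$ in direction $u$, bounds how far $B^u_\alpha$ can protrude beyond the supporting half-space $B^u_{+\infty}$ by the height of a spherical cap of base radius $\rho$ on a sphere of radius $\alpha$ (namely $\alpha - \sqrt{\alpha^2-\rho^2}$), and then intersects over all directions, implicitly using that $\Hull{X}{\alpha} \subset \bigcap_u B^u_\alpha$ and that the $\delta$-offset of $\Conv{X}$ is the intersection of the $\delta$-offsets of its supporting half-spaces. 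You argue \emph{pointwise}: fixing $y \in \Hull{X}{\alpha}$, projecting onto $\Conv{X}$, and exhibiting one explicit witness clencher $z = (\pi_u - \sqrt{\alpha^2-\rho^2})u$ whose containment of $X$ follows from the clean computation $\|x-z\|^2 \le \rho^2 - 2s\pi_u + s^2 = \alpha^2 - \pi_u^2$ (valid exactly when $s = \pi_u - \sqrt{\alpha^2-\rho^2} \le 0$, as you note), with the fallback ball $\Ball{c}{\alpha}$ handling the complementary regime $\pi_u > \sqrt{\alpha^2-\rho^2}$ — where the conclusion is in fact immediate from $\alpha \ge \scalprod{y}{u} = \pi_u + d$. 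Both proofs pivot on the same cap-height quantity and on sliding ball centers along an axis through $c$ (whose distance to every point of $X$, unlike $\pi$'s, is controlled by $\rho$ — your closing remark matches the paper's reason for anchoring at $c$). What your version buys is self-containment: explicit algebra replaces the paper's slightly terse final step exchanging intersection over directions with the offset operation, and you never need the limiting half-space $\beta = +\infty$. What the paper's version buys is a stronger intermediate statement, the per-direction inclusion $B^u_\alpha \subset \Offset{[B^u_{+\infty}]}{\delta}$, which is a fact about balls independent of the hull. One cosmetic remark: the fact $c \in \Conv{X}$, which you invoke to get $\pi_u \ge 0$, is not actually needed — your two cases are exhaustive regardless of the sign of $\pi_u$, and neither estimate uses $\pi_u \ge 0$ — though citing it is harmless and it is the same observation the paper records in the proof of Lemma~\ref{lemma:lemma0}.
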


\begin{figure}[htb]
  \def\svgwidth{0.45\linewidth}
  \centering\small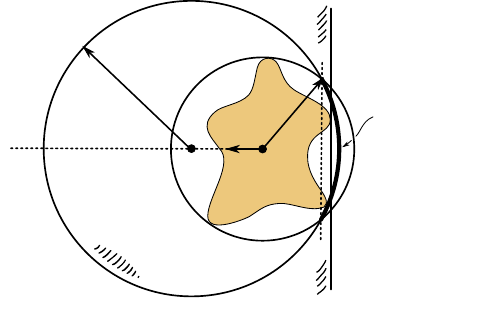
  \caption{Notation for the proof of
    Lemma~\ref{lemma:relation-hull-conv}.
    \label{fig:proof-technical-2}}
\end{figure}

\begin{proof}
  See Figure~\ref{fig:proof-technical-2}. Consider a unit
  vector $u \in \Sspace^{d-1}$ and let $L_u$ be the half-line
  emanating from $c$ in direction~$u$. Let $B^{u}_{\alpha}$ denote the
  ball with radius $\alpha$ centered on $L_u$ containing $X$ and whose
  center is furthest away from $c$. By construction, $X$ is contained
  in the intersection of the two balls $B^{u}_{\alpha}$ and
  $B(c,\rho)$. The boundary of $B^{u}_{\alpha} \cap B(c,\rho)$
  consists of two spherical caps and we let $C^{u}_{\alpha}$ be the
  one lying on the sphere bounding $B^{u}_{\alpha}$. Observe that $X$
  has a non-empty intersection with $C^{u}_{\alpha}$ and for all
  $\beta \geq \alpha$, the ball $B^{u}_{\beta}$ intersects
  $C^{u}_{\alpha}$. The largest distance between a point of
  $C^{u}_{\alpha}$ and $B^{u}_{\beta}$ is upper bounded by the height
  of $C^{u}_{\alpha}$ which is less than or equal to $\alpha -
  \sqrt{\alpha^2 - \rho^2} \leq \delta$. We thus get that
  $B^{u}_{\alpha} \subset
  \Offset{[B^{u}_{\beta}]}{\delta}$. Considering this inclusion over
  all directions $u$ for $\beta = +\infty$ yields the result.
\end{proof}

It follows that if $T$ is a $(\rho,\delta)$-nice triangulation of $A$
with respect to $(h,\Cset)$, we are able to derive conditions on
$\alpha$, $\rho$ and $\delta$ which guarantee that $\Cset$ is an
$\alpha$-robust covering of $A$.

\begin{lemma}
  \label{lemma:robust-covering}
    Let $A$ be a compact set of $\R^d$ and suppose $T$ is a
    $(\rho,\delta)$-nice triangulation of $A$ with respect to
    $(h,\Cset)$. Then $\Cset$ is an $\alpha$-robust covering of $A$
    whenever the following two conditions are fulfilled: (1) $\rho \leq
    \alpha$ and (2) $\alpha - \sqrt{\alpha^2 - \rho^2} \leq \delta$.
\end{lemma}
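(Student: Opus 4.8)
The plan is to verify the two defining properties of an $\alpha$-robust covering from Definition \ref{definition:robust-coverings} in turn. Property (1) is immediate: by niceness condition (ii) we have $C_v \subset \OpenBall{h(v)}{\rho}$, and since $\rho \leq \alpha$ this gives $C_v \subset \OpenBall{h(v)}{\alpha}$, so every cell is enclosed in an open ball of radius $\alpha$. The real content is property (2), namely the nerve equality $\Nerve{\Cset} = \Nerve{\{A \cap \Hull{C_v}{\alpha} \mid v \in V\}}$. Since $\Nerve{\Cset} = T$ holds by hypothesis, and since $C_v \subset A \cap \Hull{C_v}{\alpha}$ (because $C_v \subset A$ as $\Cset$ covers $A$, and $C_v \subset \Hull{C_v}{\alpha}$ as the hull is an intersection of balls containing $C_v$), the inclusion $\Nerve{\Cset} \subseteq \Nerve{\{A \cap \Hull{C_v}{\alpha}\}}$ is free. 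Everything therefore reduces to the reverse inclusion: if $A \cap \bigcap_{v \in \sigma} \Hull{C_v}{\alpha} \neq \emptyset$ then $\sigma \in T$.

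For this I would first prove the purely geometric containment $\Hull{C_v}{\alpha} \subseteq [\Conv{C_v}]^{\oplus \delta}$, which is exactly where the two numerical hypotheses $\rho \leq \alpha$ and $\alpha - \sqrt{\alpha^2 - \rho^2} \leq \delta$ are consumed. As $\Hull{C_v}{\alpha}$ and $\Conv{C_v}$ are both convex, it suffices to compare support functions and to show that in every direction $u \in \Sspace^{d-1}$ the support value of $\Hull{C_v}{\alpha}$ exceeds that of $\Conv{C_v}$ by at most $\alpha - \sqrt{\alpha^2 - \rho^2}$. Fixing $u$, set $m = \max_{p \in C_v} \scalprod{p}{u}$ and recall $C_v \subset B(h(v),\rho)$. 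I would then exhibit one ball of radius $\alpha$ that still contains $C_v$ but whose center is pushed a distance $g$ below $h(v)$ along $-u$: writing $\mu = m - \scalprod{h(v)}{u} \in [0,\rho]$, a direct computation on the worst-case point of $C_v$ shows the largest admissible push is $g = \sqrt{\mu^2 + \alpha^2 - \rho^2} - \mu$, for which the ball overshoots the supporting hyperplane $\{\scalprod{\cdot}{u} = m\}$ by only $\alpha - \sqrt{\mu^2 + \alpha^2 - \rho^2} \leq \alpha - \sqrt{\alpha^2 - \rho^2}$ — the height of a spherical cap of base radius $\rho$ on a sphere of radius $\alpha$, as in Figure \ref{fig:proof-reach-increasing}. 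Here condition (1) guarantees the radicands are nonnegative and $g \geq 0$. Since this ball contains $C_v$, it is one of the balls defining $\Hull{C_v}{\alpha}$, hence $\Hull{C_v}{\alpha}$ lies inside it and has support value at most $m + (\alpha - \sqrt{\alpha^2 - \rho^2}) \leq m + \delta$. Ranging over all $u$ yields the claimed offset containment.

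With this in hand, the combinatorial finish goes as follows. Take $x \in A \cap \bigcap_{v \in \sigma} \Hull{C_v}{\alpha}$; by the containment just proved together with niceness condition (iii), $x \in A \cap [\Conv{C_v}]^{\oplus \delta} \subseteq h(\Star{v}{T})$ for every $v \in \sigma$. Transporting through the homeomorphism $h$, the point $\hat{x} = h^{-1}(x)$ lies in $\Star{v}{T}$ for all $v \in \sigma$; reading off the carrier simplex $\tau$ of $\hat{x}$ (the unique simplex whose relative interior contains $\hat{x}$), membership in the star of each $v$ forces $v \in \tau$, whence $\sigma \subseteq \tau \in T$ and so $\sigma \in T$, as required. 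I expect the main obstacle to be the geometric step of the second paragraph — selecting the enclosing ball and checking simultaneously that it contains $C_v$ and has controlled bulge — together with the need to read condition (iii) via the \emph{open} star, so that $\hat{x} \in \Star{v}{T}$ genuinely places $v$ in the carrier of $\hat{x}$ rather than merely in some coface. The offset $\delta$ in (iii) is precisely what prevents $\Hull{C_v}{\alpha}$ from spilling into the stars of non-incident vertices, which is what would otherwise break the reverse inclusion.
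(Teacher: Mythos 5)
Your proof is correct and follows the same overall decomposition as the paper: everything is funneled through the containment $\Hull{C_v}{\alpha} \subset \left[\Conv{C_v}\right]^{\oplus \delta}$, which is precisely the paper's Lemma~\ref{lemma:relation-hull-conv} (proved in the appendix), combined with niceness condition (iii). The differences are local but worth noting. For the containment, the appendix argument intersects the extreme ball $B^u_\alpha$ with the smallest enclosing ball and bounds the height of the resulting spherical cap by $\alpha - \sqrt{\alpha^2 - \rho^2}$, then lets the comparison ball degenerate to a half-space; your support-function computation with the pushed center $h(v) - g u$, $g = \sqrt{\mu^2 + \alpha^2 - \rho^2} - \mu$, is the same geometry made algebraic, and it checks out --- one small gap is that you assert $\mu \in [0,\rho]$ without justification; $\mu \geq 0$ follows from $h(v) \in C_v$, which is condition (i) applied to the vertex simplex $\{v\}$ (though in fact your key estimate $\alpha - \sqrt{\mu^2 + \alpha^2 - \rho^2} \leq \alpha - \sqrt{\alpha^2 - \rho^2}$ holds for any real $\mu$, so nothing breaks). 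For the finish, the paper instead uses the sandwich $T = \Nerve{\Cset} \subset \Nerve{\{A \cap \Hull{C_v}{\alpha} \mid v \in V\}} \subset \Nerve{\{h(\Star{v}{T}) \mid v \in V\}} = T$; your carrier argument through $h^{-1}$ is exactly the unpacked version of the last equality, and your caveat about open stars is well taken: with closed stars that equality can fail (in a path with edges $uv$ and $vw$ only, the closed stars of $u$ and $w$ meet at $v$ although $\{u,w\}$ is not a simplex), so the paper's one-line ending, like your proof, implicitly reads $\Star{v}{T}$ as the open star or silently runs your carrier argument. Your explicit check of property (1) of Definition~\ref{definition:robust-coverings}, which the paper leaves tacit, is also fine. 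In short: same route, with the geometric lemma re-derived by support functions instead of cited, and a combinatorial ending that is slightly more careful than the paper's.
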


\begin{proof}
  Suppose $\Cset = \{ C_v \mid v \in V\}$ and let $v \in V$. By Lemma~\ref{lemma:relation-hull-conv}, $\Hull{C_v}{\alpha} \subset
  \left[\Conv{C_v}\right]^{\oplus \delta}$; see Figure
  \ref{fig:robust-coverings}, right. It follows that $C_v \subset A \cap
  \Hull{C_v}{\alpha} \subset h(\Star{v}{T})$ from which we deduce the
  sequence of inclusions 
  $$T = \Nerve{ \Cset } \subset \Nerve{ \{ A \cap
    \Hull{C_v}{\alpha} \mid v \in V \} } \subset \Nerve{ \{ h(
    \Star{v}{T} ) \mid v \in V \} } = T.
  $$ The nerves on the left and on
  the right are equal, showing that $\Nerve{ \Cset } = \Nerve{ \{ A
    \cap \Hull{C_v}{\alpha} \mid v \in V \} }$.
\end{proof}

Observe that if $T$ is a $(\rho,\eta_0\rho)$-nice triangulation of $A$
for some $\eta_0>0$, then conditions (1) and (2) of
Lemma~\ref{lemma:robust-covering} are satisfied for $\delta = \eta_0
\rho$ as soon as $\rho$ is small enough. Of course, the difficult
question is whether such a triangulation $T$ can always be found for
arbitrarily small $\rho$.

\begin{definition}[nicely triangulable]
  We say that $A \subset \R^d$ is nicely triangulable if we can find
  $\rho_0 > 0$ and $\eta_0>0$ such that for all $0 < \rho < \rho_0$,
  there is a $(\rho,\eta_0\rho)$-nice triangulation of $A$.
\end{definition}

\begin{theorem}
  Suppose $A \subset \R^d$ is nicely triangulable. For every $0 <
  \alpha < \Reach{A}$, there exists $\varepsilon_0>0$ such that for all
  finite point set $P \subset \R^d$ and all $0 < \varepsilon < \varepsilon_0$
  satisfying $A \subset \Offset{P}{\varepsilon}$, the complex
  $\RCech{A}{P,\alpha}$ can be transformed into a triangulation of $A$
  by a sequence of collapses.
\end{theorem}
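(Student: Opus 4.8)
The plan is to freeze, once and for all, a single $\alpha$-robust covering coming from a nice triangulation of $A$, and then exploit the density of $P$ to realize the vertices of that covering by points of $P$, so that Theorem~\ref{theorem:collapsing-towards-nice-triangulations} applies verbatim. All the quantitative content lies in selecting two scales, a cell size $\rho$ and a sampling threshold $\varepsilon_0$, both depending only on $\alpha$ and $A$ and not on $P$.

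First I would fix $\rho$. Since $A$ is nicely triangulable, there are $\rho_0>0$ and $\eta_0>0$ such that a $(\rho,\eta_0\rho)$-nice triangulation exists for every $0<\rho<\rho_0$. The elementary identity $\alpha-\sqrt{\alpha^2-\rho^2}=\rho^2/(\alpha+\sqrt{\alpha^2-\rho^2})\leq \rho^2/\alpha$ shows that condition (2) of Lemma~\ref{lemma:robust-covering} with $\delta=\eta_0\rho$ is satisfied as soon as $\rho\leq \eta_0\alpha$. Hence I pick any $\rho$ with $0<\rho<\min\{\rho_0,\alpha,\eta_0\alpha\}$: then $\rho\leq\alpha$ and $\alpha-\sqrt{\alpha^2-\rho^2}\leq\eta_0\rho$, so Lemma~\ref{lemma:robust-covering} guarantees that the covering $\Cset=\{C_v\mid v\in V\}$ of the chosen $(\rho,\eta_0\rho)$-nice triangulation $T$ of $A$ with respect to some $(h,\Cset)$ is a finite compact $\alpha$-robust covering of $A$. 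From now on $T$, $h$, $\Cset$ and the finite vertex set $V$ are fixed and depend only on $\alpha$ and $A$.

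Next I would set the sampling threshold. Because $V$ is finite and the points $h(v)\in A$ are pairwise distinct, the quantity $\mu=\tfrac12\min\{\|h(u)-h(v)\|\mid u,v\in V,\ u\neq v\}$ is strictly positive (take $\mu=+\infty$ when $|V|=1$), and I set $\varepsilon_0=\min\{\alpha-\rho,\ \mu\}>0$. Now let $P$ be finite with $A\subset\Offset{P}{\varepsilon}$ for some $0<\varepsilon<\varepsilon_0$. For each $v\in V$ the point $h(v)$ lies in $A\subset\Offset{P}{\varepsilon}$, so $P\cap\Ball{h(v)}{\varepsilon}\neq\emptyset$; choose one such point and call it $f(v)$. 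Injectivity of $f$ follows from $\varepsilon<\mu$, which forces the balls $\Ball{h(v)}{\varepsilon}$ to be pairwise disjoint. Moreover, for $x\in C_v$ condition (ii) of Definition~\ref{definition:nice-triangulation} gives $\|x-h(v)\|<\rho$, so $\|x-f(v)\|<\rho+\varepsilon<\rho+(\alpha-\rho)=\alpha$; thus $C_v\subset\OpenBall{f(v)}{\alpha}$.

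Finally, $A$ is compact, $0<\alpha<\Reach{A}$, $\Cset$ is a compact $\alpha$-robust covering, and $f:V\to P$ is injective with $C_v\subset\OpenBall{f(v)}{\alpha}$ for all $v$, so Theorem~\ref{theorem:collapsing-towards-nice-triangulations} produces a sequence of collapses from $\RCech{A}{P,\alpha}$ to $f(\Nerve{\Cset})$. Since $f$ is injective, $f(\Nerve{\Cset})$ is isomorphic to $\Nerve{\Cset}=T$, which is a triangulation of $A$ via $h$; a complex isomorphic to a triangulation of $A$ is again a triangulation of $A$, so $\RCech{A}{P,\alpha}$ collapses to a triangulation of $A$, as claimed. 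The only genuine obstacle is the bookkeeping of constants, namely checking that $\rho$ and $\varepsilon_0$ can indeed be chosen independently of $P$: the two load-bearing points are the estimate $\alpha-\sqrt{\alpha^2-\rho^2}\leq\rho^2/\alpha$, which lets condition (2) of Lemma~\ref{lemma:robust-covering} survive for small $\rho$, and the disjoint-ball argument, which simultaneously secures the injectivity of $f$ and the inclusions $C_v\subset\OpenBall{f(v)}{\alpha}$.
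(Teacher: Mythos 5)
Your proposal is correct and takes essentially the same route as the paper's proof: fix a $(\rho,\eta_0\rho)$-nice triangulation with $\rho$ small enough for Lemma~\ref{lemma:robust-covering} to make $\Cset$ a compact $\alpha$-robust covering, set $\varepsilon_0$ as the minimum of $\alpha-\rho$ and half the smallest distance between distinct points $h(v)$, construct an injective $f:V\to P$ with $C_v\subset\OpenBall{f(v)}{\alpha}$, and conclude by Theorem~\ref{theorem:collapsing-towards-nice-triangulations}. The only differences are cosmetic: you make the paper's ``$\rho$ small enough'' explicit via the estimate $\alpha-\sqrt{\alpha^2-\rho^2}\leq\rho^2/\alpha$, and you pick any point of $P$ within $\varepsilon$ of $h(v)$ where the paper picks a closest one.
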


\begin{proof}
  By definition, we can find $\rho_0 > 0$ and $\eta_0>0$ such that for
  all $0 < \rho < \rho_0$, there is a $(\rho,\eta_0\rho)$-nice
  triangulation $T$ of $A$ with respect to $(h,\Cset)$. Let us choose
  $\rho$ small enough so that $\rho < \alpha$ and $\alpha -
  \sqrt{\alpha^2 - \rho^2} \leq \eta_0 \rho$.  Lemma~\ref{lemma:robust-covering} then implies that $\Cset$ is a compact
  $\alpha$-robust covering of $A$. Set $e(T,h) = \frac{1}{2} \inf
  \|h(v_1) - h(v_2)\|$ where the infimum is over all pairs of vertices
  $v_1 \neq v_2$ of $T$ and let $\varepsilon_0$ be the minimum of
  $e(T,h)$ and $\alpha-\rho$. Consider a function $f : \Vertexset{T}
  \to P$ that maps each vertex $v$ to a point of $P$ closest to
  $h(v)$. Note that $f$ is injective, $\|h(v)-f(v)\| \leq \varepsilon$
  and $C_v \subset \OpenBall{f(v)}{\alpha}$ for all $v \in
  V$. Applying Theorem
  \ref{theorem:collapsing-towards-nice-triangulations} yields the
  existence of a sequence of collapses from $\RCech{A}{P,\alpha}$ to
  $f(T)$.
\end{proof}

The next theorem provides a few examples of nicely triangulable
manifolds.

\begin{theorem}\label{lemma:TemplateOfNicelyTriangulableManifold}
The following embedded manifolds are nicely triangulable:
\begin{enumerate}
\item The unit 2-sphere $\Sspace^2 = \{ x = (x_1,x_2,x_3) \in \R^3
  \mid \sum_{i=1}^3 x_i^2 = 1 \}$;
\item The flat torus $\Tspace^2 = \{ x=(x_1,x_2,x_3,x_4) \in \R^4 \mid x_1^2 + x_2^2
  =1 \text{ and } x_3^2+x_4^2 = 1 \}$;
\item The $m$-dimensional Euclidean space $\R^m$, embedded in $\R^d$ for some $m \leq d$.
\end{enumerate}
\end{theorem}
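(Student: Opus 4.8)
The plan is to verify, for each of the three manifolds, the defining property directly: to exhibit a uniform $\eta_0>0$ together with, for every small enough $\rho$, a $(\rho,\eta_0\rho)$-nice triangulation in the sense of Definition~\ref{definition:nice-triangulation}. I would first reduce the continuum of scales demanded by the definition of nicely triangulable to a single sequence of scales with bounded ratio. Concretely, suppose we can produce $(\rho_n,\eta_0\rho_n)$-nice triangulations $T_n$ (with respect to some $(h_n,\Cset_n)$) for a decreasing sequence $\rho_n \to 0$ satisfying $\rho_{n-1}/\rho_n \le K$ for a fixed constant $K$. Then for an arbitrary $\rho\in(0,\rho_0)$ I pick the largest $\rho_n \le \rho$ and reuse $T_n$: the homeomorphism and nerve requirements and condition (i) are unaffected, condition (ii) only improves since $C_v \subset \OpenBall{h_n(v)}{\rho_n} \subset \OpenBall{h_n(v)}{\rho}$, and condition (iii) for the smaller fattening $\Offset{\Conv{C_v}}{\eta_0'\rho}$ follows from the one for $\Offset{\Conv{C_v}}{\eta_0\rho_n}$ by monotonicity of offsets, provided $\eta_0'\rho \le \eta_0\rho_n$. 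Since $\rho_n > \rho/K$, the choice $\eta_0' = \eta_0/K$ works uniformly. Hence it suffices to build, for each manifold, such a sequence of nice triangulations with a common $\eta_0$.

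For $A=\R^m$ I would exploit scale invariance, which removes the gap issue altogether. Fix a lattice triangulation $T_1$ of $\R^m$ by congruent, well-shaped simplices, take $h$ to be the identity, and let $C_v$ be the convex Voronoi cell of each vertex $v$; then $\Nerve{\Cset}=T_1$ automatically. Conditions (i) and (ii) are straightforward for such a highly symmetric lattice, and condition (iii) reduces to the purely geometric statement that the offset $\Offset{C_v}{\delta}$ of each Voronoi cell remains inside the closed star $\Star{v}{T_1}$ for $\delta$ equal to a fixed fraction of the edge length. In $\R^2$ the equilateral triangular lattice makes this explicit: the Voronoi cell is a regular hexagon strictly contained, with positive margin, in the larger hexagonal star, so some $\eta_0>0$ works. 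As $\R^m$ is invariant under the homothety $x\mapsto\lambda x$, scaling $T_1$ by $\lambda$ produces a $(\lambda\rho_1,\eta_0\lambda\rho_1)$-nice triangulation for every $\lambda>0$, covering all scales at once. The only genuinely dimension-dependent input is the existence, for every $m$, of a lattice tiling whose Voronoi cells are \emph{well-centred with margin} in this sense, which I would isolate as a separate geometric lemma. (For the non-compact $\R^m$ the covering $\Cset$ is only locally finite, so the statement is to be read with the local-finiteness version of the nerve machinery; all conditions of Definition~\ref{definition:nice-triangulation} are local in nature.)

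For the flat torus $\Tspace^2\subset\R^4$ I would transport the $\R^2$ construction through the universal covering. The induced metric is flat, so $\Tspace^2$ is intrinsically $\R^2/2\pi\mathbb{Z}^2$, and an $n$-fold periodic refinement of the equilateral lattice triangulation descends to a triangulation of $\Tspace^2$ of intrinsic scale $\rho_n\sim 1/n$; here $\rho_{n-1}/\rho_n \to 1$, so the gap hypothesis of the first paragraph holds. Conditions (i)--(iii), however, are phrased extrinsically (Euclidean balls, convex hulls, and offsets in $\R^4$), whereas the refinement is nice intrinsically. I would bridge this by observing that the embedding is an isometry onto its image with bounded second fundamental form, equivalently positive reach, so on any patch of diameter $O(\rho)$ the chordal and geodesic metrics, and likewise the ambient convex hull $\Conv{C_v}$ and the intrinsic star, agree up to $O(\rho^3)$. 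For $\rho$ small enough the intrinsic niceness with constant $\eta_0$ then yields extrinsic niceness with constant $\eta_0/2$, and the reduction completes the case.

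For $\Sspace^2$ I would use geodesic subdivisions of the regular icosahedron: at level $n$, subdivide each face and radially project the vertices to $\Sspace^2$, obtaining triangulations $T_n$ of scale $\rho_n\sim 2^{-n}$ (so $K=2$), taking $h$ to be the radial projection and $C_v$ the spherical Voronoi cells. Two facts make this work uniformly: geodesic icosahedral subdivision keeps the triangles of uniformly bounded aspect ratio, nearly equilateral independently of $n$; and $\Sspace^2$ has reach $1$, so every patch of diameter $O(\rho)$ is uniformly $C^{1,1}$-close to a Euclidean disk. Together these reduce condition (iii) to the flat equilateral computation of the $\R^2$ case, with a single constant $\eta_0$, while (i) and (ii) are direct. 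I expect the \emph{main obstacle} to be precisely this uniform verification of (iii) on the curved sphere: one must control, independently of the subdivision level, the discrepancy between the ambient hull $\Conv{C_v}$ (which bulges off the surface) fattened by $\eta_0\rho$ and the intrinsic star $h(\Star{v}{T_n})$, showing that no distant sheet of $\Sspace^2$ re-enters the neighbourhood. The positive reach of $\Sspace^2$ is exactly what excludes this and supplies the uniform constant, so the quantitative heart of the theorem is a curvature/reach estimate converting the extrinsic condition (iii) into its intrinsic Euclidean model.
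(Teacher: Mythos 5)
Your overall strategy coincides with the paper's: for each space one exhibits an explicit family of triangulations indexed by an integer $n$ --- the icosahedron subdivided $4^n$-fold and radially projected for $\Sspace^2$, a periodic planar triangulation pushed through the locally isometric covering $H(s,t)=(\cos s,\sin s,\cos t,\sin t)$ for $\Tspace^2$, and a grid-based triangulation for $\R^m$. Your bounded-ratio reduction of the continuum of scales to a discrete sequence is a legitimate, and in fact needed, patch of a step the paper glosses over, since condition (iii) of Definition \ref{definition:nice-triangulation} does \emph{not} improve when $\rho$ increases. Two of your quantitative claims are overstated but harmless: an exactly equilateral lattice is not $2\pi$-commensurable and so does not descend to $\Tspace^2$ --- the paper instead uses isosceles triangles with $na=kb=2\pi$ and $k=\lfloor\tfrac{\sqrt{3}}{2}n\rfloor$, whose shape only tends to equilateral as $n\to\infty$ --- and the normal deviation between $\Conv{C_v}$ and the surface is governed by the second fundamental form, hence $O(\rho^2)$ rather than $O(\rho^3)$; this still suffices because the fattening $\eta_0\rho$ in condition (iii) dominates $O(\rho^2)$ for small $\rho$.

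The genuine gap is the $\R^m$ case, where you delegate everything to an unproved lemma: the existence, for every $m$, of a lattice triangulation whose vertex Voronoi cells satisfy $\left[\Conv{C_v}\right]^{\oplus \delta} \subset \Star{v}{T}$ with $\delta$ a fixed fraction of the edge length (together with condition (i), that each simplex is covered by the cells of its vertices). This is precisely the delicate self-centredness issue for Delaunay-type triangulations, and it fails for the standard high-dimensional choices: for the Kuhn--Freudenthal triangulation of the cubical lattice $\mathbb{Z}^m$, already in $\R^2$ (the square $[0,1]^2$ split along the diagonal from $(0,0)$ to $(1,1)$), the Voronoi cell of the vertex $(1,0)$ touches the boundary of its star at the point $(\tfrac12,\tfrac12)$, so the available margin is zero and no $\eta_0>0$ exists; and it is not evident which congruent-simplex lattice works in every dimension. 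The paper supplies the missing idea exactly at this point: it takes $T$ to be the barycentric subdivision of the cubical grid, defines the cells not as Voronoi cells but as unions of barycentric-subdivision stars, $C_v = \bigcup_{\sigma \in \Star{v}{K}} h(\sigma)$ with $K$ the barycentric subdivision of $T$, and proves that every vertex star of $T$ is \emph{convex} (the link of a vertex inserted at the center of a $k$-face splits into vertices lying in the supporting $k$-flat and in the orthogonal $(m-k)$-flat, all in convex position); convexity of the star, together with the fact that the barycentric cell sits strictly inside it at distance proportional to the grid size, then yields condition (iii) with a uniform $\eta_0$. Unless you prove your lattice lemma --- say by verifying self-centredness with margin for a specific family such as $A_m^*$, which is not routine --- the $\R^m$ case of your argument, and with it the reductions you make to it for $\Sspace^2$ and $\Tspace^2$, remains incomplete.
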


\begin{proof}
  For $A \in \{ \Sspace^2, \Tspace^2, \R^m\}$, we proceed as
  follows. We build a triangulation $T$ parameterized by some integer
  $n$ and consider a map $h : |T| \to A$. The integer $n$ will control
  the size of elements in $h(T)$: the larger $n$ the smaller the image
  of simplices under $h$.  We then consider the barycentric
  subdivision $K$ of $T$ and associate to each vertex $v$ of $T$ the
  cell $C_v = \bigcup_{\sigma \ni v} h(\sigma)$. The
  collection of cells $C_v$ forms a covering $\Cset$ of $A$.  In
    the three cases, it is not difficult to see that we can find
  $\eta_0>0$ such that $T$ is $(\rho, \eta_0 \rho)$-nice with respect
  to $(h,\Cset)$ for some $\rho>0$. Furthermore, the value of $\rho$
  can be made as small as desired by increasing $n$. We thus conclude
  that $A$ is nicely triangulable. Below, we just describe how $T$ and
  $h$ are chosen in each case.

\begin{figure}[htb]
  \def\svgwidth{0.90\linewidth}
  \centering\small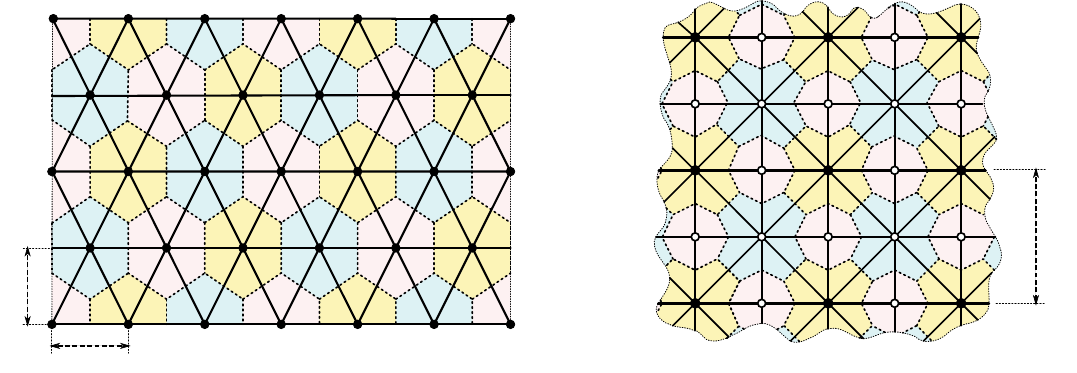
  \caption{ Triangulating $\Tspace^2$ (left) and $\R^2$ (right).
    \label{fig:tillings}
  }
\end{figure}

\medskip \noindent{\bf 1. $\Sspace^2$ is nicely triangulable.}  We
start with an icosahedron centered at the origin and subdivide each
triangular face into $4^n$ equilateral triangles. Notice that all
vertices of the resulting triangulation $T$ have degree $6$ but the
$12$ vertices in the original icosahedron which have degree $5$.  The
triangulation $T$ is then projected onto the sphere, using the
projection map $h : |T| \to \Sspace^2$ defined by $h(x) =
\frac{x}{\|x\|}$.

\medskip \noindent{\bf 2. $\Tspace^2$ is nicely triangulable.}  The
map $H: \R^2 \rightarrow \Tspace^2$ defined by $H(s,t) = (\cos s, \sin
s, \cos t, \sin t)$ is locally isometric and its restriction $h
:[0,2\pi)^2 \to \Tspace^2$ is an homeomorphism. The idea is to build a
  periodic tiling of $\R^2$ made up of identical isosceles triangles
  as in Figure \ref{fig:tillings}, left. Let $a$ and $b$ be the
  respective height and basis of the triangles. Consider two integers
  $n$ and $k$ such that $na = kb = 2 \pi$. Taking $k = \lfloor
  \frac{\sqrt{3}}{2} n \rfloor$ we get that the ratio $\frac{a}{b}$
  tends to $\frac{\sqrt{3}}{2}$ as $n\rightarrow +\infty$. Thus, the
  map $H$ turns the periodic tiling of $\R^2$ into a triangulation of
  $\Tspace^2$ whose triangles become arbitrarily close to equilateral
  triangles with edge length $b$ as $n \to +\infty$.

\medskip \noindent{\bf 3. $\R^m$ is nicely triangulable.}  We start
with a cubical regular grid and define $T$ as the barycentric
subdivision of that grid; see Figure \ref{fig:tillings},
right. Precisely, for each cell in the grid, we insert one vertex at
its centroid. So each edge is subdivided into 2 edges sharing the
inserted vertex. We then recursively subdivide the cells by ascending
dimension. Each cubical $k$-cell has $2k$ cubical $(k-1)$-cells on its
boundary. We subdivide each $k$-cell as a cone whose apex is the
inserted vertex and whose basis is the subdivided boundary of that
cell. We claim that all stars in $T$ are convex. Indeed each vertex in
$T$ is the centroid of an initial cubical cell of dimension between
$0$ and $m$. Consider the vertex $v$ that was inserted at the center
of the $k$-dimensional cubical cell $D_v$ and let us describe the set
of vertices $V_v$ in the link of $v$ in $T$. The vertices of $V_v$ can
be partitioned in two subsets. The first subset contains vertices in
the $k$-flat that supports $D_v$ while the second subset contains
vertices in the $(d-k)$-flat passing through $v$ and orthogonal to the
$k$-flat supporting $D_v$. The vertices in the first subset lie on the
boundary of $D_v$ and the vertices in the second subset lie on the
boundary of a $(m-k)$-cube. Since in both flats of respective
dimension $k$ and $m-k$ the vertices in $V_v$ are in convex position,
it results that vertices in $V_v$ are in convex position in $\R^m$. As
a result, it can be proved (details are skipped) that the star of $v$
is the convex hull of $V_v$. Finally, we let $h$ be the identity map
and $n$ the inverse of the size of the grid.
\end{proof}


We now establish that the property of being nicely triangulable is
preserved by $C^{1,1}$ diffeomorphisms between manifolds. Let us make
precise what we mean in Theorem
\ref{theorem:NicelyTriangulabePreservedByDiffeo} by embedded $C^{1,1}$
$k$-manifolds.  A $C^{1,1}$ function is a differentiable function with
a Lipschitz derivative. A $C^{1,1}$ structure on a manifold is an
equivalence class of atlases whose transition functions are
$C^{1,1}$. Finally, $C^{1,1}$ diffeormorphisms between $C^{1,1}$
manifolds are defined accordingly. In Theorem
\ref{theorem:NicelyTriangulabePreservedByDiffeo}, we restrict our
attention to shapes which are $C^{1,1}$ compact manifolds without
boundary embedded in $\R^d$ and whose embeddings are themselves
regular and $C^{1,1}$ (for the differential structure induced by
$\R^d$) , where ``regular'' means that the derivative of the embedding
has full rank everywhere. We will say that such shapes are compact
$C^{1,1}$ manifolds embedded in $\R^d$ for short. The assumption of
regular embeddings entails the existence of well-defined tangent
affine spaces. A compact manifold embedded in $\R^d$ is $C^{1,1}$ if
and only if it has a positive reach \cite{federer-59}.



\begin{theorem}\label{theorem:NicelyTriangulabePreservedByDiffeo}
Let $M$ and $M'$ be two compact $C^{1,1}$ $k$-manifolds without
  boundary embedded respectively in $\R^d$ and $\R^{d'}$ and $\Phi :
M \rightarrow M'$ a $C^{1,1}$ diffeomorphism. $M$ is nicely
triangulable if and only if $M'$ is nicely triangulable.
\end{theorem}

The proof is given in the Appendix.


\section{Discussion}
\label{section:conclusion}


The paper leaves unanswered a few questions that we discuss now:
\begin{enumerate}[(1)] \denselist
\item Our result assumes the shape to be nicely triangulated. In the
  paper, we list a few simple spaces which enjoy this property. Is it
  possible to extend the list to a larger class of spaces? We
  conjecture that compact smooth $k$-manifolds embedded in $\R^d$ are
  nicely triangulable. Indeed, for $k=2$, it is known that any compact
  connected surface (without boundary) embedded in $\R^3$ is
  homeomorphic to either the 2-sphere or a connected sum of $g$ tori
  for $g \geq 1$. Hence, thanks to Theorem
  \ref{theorem:NicelyTriangulabePreservedByDiffeo}, it would suffice to
  provide a template of nicely triangulable surface of genus $g$ for
  each $g \geq 2$, in a way similar to what we did for $g=0$ and
  $g=1$. Unfortunately, for higher dimensional manifolds, one cannot
  rely anymore on an existing classification. Another approach has to
  be considered.

\item Our proof is not constructive. Indeed, the order in which to
  collapse faces in the \v Cech complex is determined by sweeping
  space with a $t$-offset of the shape for decreasing values of
  $t$. Since the common setting consists in describing the shape
  through a finite sample, the knowledge of the $t$-offsets of the
  shape is lost. Nonetheless, is it possible to turn our proof into an
  algorithm? Can we do the same for Rips complexes?  A positive answer
  is even more desirable for the second class of complexes due to
  their computational tractability. We leave those questions open for
  future work.
\end{enumerate}


\bibliographystyle{abbrv}


\clearpage
\appendix


\section{$C^{1,1}$ diffeomorphisms preserve nicely triangulable manifolds}
\label{section:missing-proofs}


The goal of this section is to prove Theorem
\ref{theorem:NicelyTriangulabePreservedByDiffeo}.  

\begin{proof}[Proof of Theorem \ref{theorem:NicelyTriangulabePreservedByDiffeo}]
Let $x \in M$. Since $M$ is a compact $C^{1,1}$ $k$-manifold embedded
in $\R^d$, there exists a $k$-dimensional affine space $T_M(x) \subset
\R^d$ tangent to $M$ at $x$. Let $\pi_x : M \to T_M(x)$ be the
orthogonal projection onto the tangent space $T_M(x)$ and let $\pi_{\Phi(x)}'
: M' \to T_{M'}(\Phi(x))$ the orthogonal projection onto
$T_{M'}(\Phi(x))$. Since $M$ and $M'$ are compact, we can find two
constants $K$ and $K'$ independent of $x$ such that:
\begin{align}
  \label{eq:K}
  \forall y \in M, \quad &\|y-\pi_x(y)\| < K \|y-x\|^2 \\
  \label{eq:K'}
  \forall y \in M', \quad &\|y-\pi_{\Phi(x)}'(y)\| < K' \|y-\Phi(x)\|^2
\end{align}
Given $t_0 > 0$, we consider the open set $U_x = M
\cap \OpenBall{x}{t_0}$ and adjust $t_0$ in such a way that
\begin{enumerate}
\item The restriction $\pi_x : U_x \to \pi_x(U_x)$ is an homeomorphism
  for all $x \in M$;
\item The restriction $\pi_{\Phi(x)}' : \Phi(U_x) \to \pi_{\Phi(x)}'(\Phi(U_x))$ is also an
  homeomorphism for all $x \in M$.
\end{enumerate}

For sake of conciseness, we only sketch a justification for the
  existence of such a $t_0>0$.  The local property (i.e. the existence
  of $t_0>0$ for a given $x \in M$) follows easily from the definition
  of embedded $C^{1,1}$ $k$-manifolds. Indeed, the assumption of a
  regular embedding entails that $\pi_x$ has full rank derivative at
  $x$ and the inverse function theorem can be applied to get the local
  property.  In order to get a uniform $t_0>0$ (the requested global
  property) one can establish first the following strengthening of the
  local property: For any $x \in M$, there is $t_x>0$ such that for
  any $y \in M \cap B^\circ (x, t_x)$, the restriction of $\pi_y$ to
  $M \cap B^\circ (x, t_x)$ is a $C^1$ homeomorphism.  Compactness of
  $M$ can then be used in the usual manner to get a uniform $t_0$.


The collection of pairs $\{(U_x,\pi_x)\}_{x \in M}$ forms an atlas in
the $C^{1,1}$ structure of $M$. Similarly, the collection of pairs
$\{(\Phi(U_x),\pi_{\Phi(x)}')\}_{x \in M}$ forms an atlas in the $C^{1,1}$
structure of $M'$. Let $\mathcal{T}_M(x)$ be the linear space
associated to $T_M(x)$ and denote by $D\Phi_x$ the derivative of
$\Phi$ at $x$, seen as a linear map between $\mathcal{T}_M(x)$ and
$\mathcal{T}_{M'}(\Phi(x))$. Since $M$ is compact, there is a constant 
$K_\Phi$ independent of $x$ such that:
\begin{equation}
\label{eq:K-Phi}
\forall y \in M, \quad \|\Phi(y)-\Phi(x)-D\Phi_x(\pi_x(y)-x)| < K_\Phi \|y-x\|^2,
\end{equation}
and two positive numbers $\kappa_2 \geq \kappa_1 > 0$, again
independent of $x$ by compactness of $M$, such that
\begin{equation}
\label{eq:derivative-bounds}
\forall u \in
\mathcal{T}_M(x), \quad  \kappa_1 \|u\| \leq \| D\Phi_x (u) \| \leq
\kappa_2 \|u\|.  
\end{equation}
Consider the affine function $\hat{\Phi}_x : T_M(x) \rightarrow
T_{M'}(\Phi(x))$ defined by $\hat{\Phi}_x (y) = \Phi(x) + D\Phi_x
(y-x)$. Combining Equations (\ref{eq:K}) (\ref{eq:K'}) and (\ref{eq:K-Phi}), we can
find a constant $L_\Phi$ independent of $x$ such that for all $t <
t_0$ and all compact sets $A \subset M \cap B(x,t)$ :
\begin{equation}
\label{eq:L-Phi}
  d_H(\hat{\Phi}_x \circ \pi_x(A) , \pi_{\Phi(x)}' \circ \Phi(A) ) < L_\Phi t^2
\end{equation}

Now, assume that $M$ is nicely triangulable and let us prove that $M'$
is also nicely triangulable. By definition, we can find $\rho_0 > 0$
and $\eta_0>0$ such that, for all $0 < \rho < \rho_0$, there is a
$(\rho,\eta_0\rho)$-nice triangulation $T$ of $M$ with respect to some
$(h, \Cset)$. Suppose $\Cset = \{C_v \mid v \in V\}$ and consider the
covering $\Cset' = \{ \Phi(C_v) \mid v \in V\}$, the homeomorphism $h'
= \Phi \circ h : |T| \to M'$, the real numbers $\rho'=2\kappa_2\rho$
and $\eta'_0 = \frac{\kappa_1 \eta_0 - 5 L_\Phi \rho}{2\kappa_2}$. Let
us prove that by choosing $\rho$ small enough, $T$ is a
$(\rho',\eta_0'\rho')$-nice triangulation of $M'$ with respect to
$(\Cset',h')$. In other words, we need to check that conditions (ii)
and (iii) of Definition \ref{definition:nice-triangulation} are
satisfied for $\Cset = \Cset'$, $h=h'$, $\rho=\rho'$ and $\delta =
\eta_0'\rho'$. Take $v \in V$ and set $x=h(v)$, $C = C_v$, $S =
h(\Star{v}{T})$.

\medskip (ii) By definition of $T$, we have $x \in C \subset
\OpenBall{x}{\rho}$. Taking the image of this relation under $\Phi$
and choosing $\rho>0$ small enough, we get that $\Phi(x) \in \Phi(C)
\subset \Phi(\OpenBall{x}{\rho}) \subset
\OpenBall{\Phi(x)}{\rho'}$. The last inclusion is obtained by
combining Equations (\ref{eq:K}), (\ref{eq:K-Phi}) and
(\ref{eq:derivative-bounds}).

\medskip (iii) Let us choose a positive real number $\rho < \min{ \{
  \rho_0, \frac{t_0}{2}\}}$ small enough to ensure that $\eta_0' > 0$
and let us prove that $M' \cap \left[ \Conv{C} \right]^{\oplus
  \eta_0'\rho'} \subset S$. By choice of $T$ as a
$(\rho,\eta_0\rho)$-nice triangulation of $M$ with respect to
$(h,\Cset)$, we have that $M \cap \Conv{C}^{\oplus \eta_0 \rho}
\subset S$. Furthermore, $C \subset B(x,\rho)$ and $S \subset
B(x,2\rho)$. Thus, by choosing $\rho < \frac{t_0}{2}$, we have $S
\subset U_x$ and
\begin{equation*}
  U_x \cap  \Conv{C}^{\oplus \eta_0 \rho} \subset S.
\end{equation*}
Taking the image by the homeomorphism $\pi_x : U_x \to \pi_x(U_x)$ on
both sides and using $\pi_x(A \cap B) = \pi_x(A) \cap \pi_x(B)$ we get
\begin{equation*}
  \pi_x(\Conv{C}^{\oplus \eta_0 \rho}) \subset \pi_x(S).
\end{equation*}
Let $B_k(0,r)$ denote the $k$-dimensional ball of
$\mathcal{T}_M(x)$  centered at the origin with radius $r$. Writing
$A\oplus B = \{ a + b \mid a \in A, b \in B\}$ for the Minkowski sum
of $A$ and $B$, it is not too difficult to prove that $\pi_x(A^{\oplus
  \delta}) = \pi_x(A) \oplus B_k(0,\delta)$. It follows that
\begin{equation*}
  \pi_x(\Conv{C}) \oplus B_k(0,\eta_0\rho) \subset \pi_x(S).
\end{equation*}
Taking the image under $\hat{\Phi}_x$ on both
sides we get
\begin{equation*}
  \hat{\Phi}_x \circ \pi_x(\Conv{C})  \oplus D \Phi_x B_k(0,\eta_0\rho)
  \subset \hat{\Phi}_x \circ \pi_x(S)).
\end{equation*}
Let $B'_k(0,r)$ denote the $k$-dimensional ball of
$\mathcal{T}_{M'}(\Phi(x))$  centered at the origin with radius $r$.
Using Equation (\ref{eq:derivative-bounds}) we get that
$B_k'(0,\kappa_1\eta_0\rho) \subset D \Phi_x B_k(0,\eta_0\rho)$. Since
$\hat{\Phi}_x$ and $\pi_x$ are both affine, so is the composition and therefore
$\hat{\Phi}_x \circ \pi_x(\Conv{C}) = \Conv{\hat{\Phi}_x \circ
  \pi_x(C)}$. It follows that
\begin{equation*}
\Conv{\hat{\Phi}_x \circ \pi_x(C)} \oplus B_k'(0,\kappa_1 \eta_0 \rho)
\subset \hat{\Phi}_x  \circ \pi_x(S).
\end{equation*}
Recalling that $C \subset B(x,\rho)$ and $S \subset B(x,2\rho)$ and
combining the above inclusion with Equation~(\ref{eq:L-Phi}) we obtain
\begin{equation*}
  \Conv{\pi'_x \circ \Phi(C)} \oplus B_k(0, \kappa_1 \eta_0 \rho - 5 L_\Phi \rho^2)
  \subset \pi'_x \circ \Phi(S).
\end{equation*}
Interchanging $\operatorname{Conv}$ and $\pi_x'$, noting that $\eta'_0\rho' =
\kappa_1 \eta_0 \rho - 5 L_\Phi \rho^2$ and using
$\pi'_x(A^{\oplus \delta}) = \pi'_x(A) \oplus B_k(0,\delta)$ we get
\begin{equation*}
  \pi'_x(\Conv{\Phi(C)}^{ \oplus \eta_0' \rho'})
  \subset \pi'_x \circ \Phi(S).
\end{equation*}
Since $\pi'_x : \Phi(U_x) \to \pi_{\Phi(x)}'(\Phi(U_x))$ is
homeomorphic, we thus obtain $M' \cap \Conv{\Phi(C)}^{\oplus \eta_0'
  \rho} \subset \Phi(S)$ as desired.
\end{proof}


\end{document}